\DeclarePairedDelimiter{\ceil}{\lceil}{\rceil}
\newcommand{\Reals}{{\mathbb{R}}}            
\newcommand{\eps}{\varepsilon}               
\newcommand{\argmin}{\operatornamewithlimits{argmin}}
\newcommand{\Frd}{Fr\'echet distance}
\newcommand{\FSD}{$\mathcal{FSD}_{\leq \eps}$}
\newcommand{\RD}{\rightarrow \mathbb{R}^d}
\newtheorem{observation}[theorem]{Observation}
\def\footnoterule{\kern-3\p@
	\hrule \@width 2in \kern 2.6\p@} 
\def\H{{\cal H}}
\def\P{{\cal P}}
\newcommand{\defref}[1]{Definition~\ref{def:#1}}
\newcommand{\lemref}[1]{Lemma~\ref{lem:#1}}
\newcommand{\thmref}[1]{Theorem~\ref{thm:#1}}
\newcommand{\secref}[1]{Section~\ref{sec:#1}}
\newcommand{\subsecref}[1]{Section~\ref{subsec:#1}}
\newcommand{\subsubsecref}[1]{Section~\ref{subsubsec:#1}}
\newcommand{\algref}[1]{Algorithm~\ref{alg:#1}}
\newcommand{\alglineref}[1]{line~\ref{algline:#1}}
\newcommand{\obsref}[1]{Observation~\ref{obs:#1}}
\newcommand{\itemref}[1]{(\ref{#1})}
\def\LEP{\textsc{LongestEpsilonPrefix}}
\def\MinPrefix{\textsc{MinimumPrefix}}
\definecolor{lgray}{gray}{0.5}
\title{Fast Fr\'echet Distance Between Curves with Long Edges}
\titlerunning{Fast Fr\'echet Distance Between Curves with Long Edges} 
\author[1]{Joachim Gudmundsson}
\author[2]{Majid Mirzanezhad}
\author[3]{Ali Mohades}
\author[2]{Carola Wenk}
\affil[1]{University of Sydney, Australia, \texttt{joachim.gudmundsson@sydney.edu.au}}
\affil[2]{Tulane University, New Orleans, USA \texttt{\{mmirzane,cwenk\}@tulane.edu}}
\affil[3]{Amirkabir University of Technology, Tehran, Iran, \texttt{mohades@aut.ac.ir}}
\authorrunning{J.~Gudmundsson, M.~Mirzanezhad, A.~Mohades, and C.~Wenk} 
\subjclass{F.2.2 [Nonnumerical Algorithms and Problems] Geometrical
problems and computations}
\keywords{Computational Geometry - \Frd~- Approximation algorithm - Data structure}
\begin{document}

		
\maketitle
	
\begin{abstract}
	Computing the~\Frd\ between two polygonal curves takes roughly quadratic time. In this paper, we show that for a special class of curves the Fr\'echet distance computations become easier. Let $P$ and $Q$ be two polygonal curves in $\Reals^d$ with $n$ and $m$ vertices, respectively. We prove four results for the case when all edges of both curves are long compared to the Fr\'echet distance between them: (1) a linear-time algorithm for deciding the Fr\'echet distance between two curves, (2) an algorithm that computes the \Frd\ in $O((n+m)\log (n+m))$ time, (3) a linear-time $\sqrt{d}$-approximation algorithm, and (4) a data structure that supports $O(m\log^2 n)$-time decision queries, where $m$ is the number of vertices of the query curve and $n$ the number of vertices of the preprocessed curve.  

	\keywords{The Fr\'echet distance, Approximation algorithm, Data structure.}	
\end{abstract}

\section{Introduction} \label{sec:introduction}	
%
Measuring the similarity between two curves is an important problem that has applications in many areas, e.g., in morphing~\cite{eghmm-nsmbp-02}, movement analysis~\cite{glw-mpstd-07}, handwriting recognition~\cite{skb-fdbas-07} and protein structure alignment~\cite{jxz-pssad-08}. Fr\'echet distance is one of the most popular similarity measures which has received considerable attentions in recent years. It is intuitively the minimum length of the leash that connects a man and dog walking across the curves without going backward. The classical algorithm for computing the Fr\'echet distance between curves with total complexity $n$ runs in $O(n^2 \log n)$ time~\cite{ag-cfdb-95}. The major goal of this paper is to focus on computing the Fr\'echet distance for a reasonable special class of curves in significantly faster than quadratic time. 
\subsection{Related Work}
Buchin et al.~\cite{bbkrw-hdwd-07} gave an $\Omega(n \log n)$ lower bound for computing the Fr\'echet distance. Then Bringmann~\cite{b-wwdtt-14} showed that, assuming the Strong Exponential Time Hypothesis, the Fr\'{e}chet distance cannot be computed in strongly subquadratic time, i.e., in time $O(n^{2-\epsilon})$ for any $\epsilon > 0$. For the \emph{discrete} Fr\'{e}chet distance, which considers only distances between the vertices, Agarwal et al.~\cite{aaks-cdfds-14} gave an algorithm with a (mildly) subquadratic running time of $O(n^2 \frac{\log\log n}{\log n})$. Buchin et al.~\cite{bbmw-fswd-14} showed that the continuous Fr\'echet distance can be computed in $O(n^2 \sqrt{\log n}(\log\log n)^{3/2})$ expected time. Bringmann and Mulzer~\cite{bw-adfd-15} gave an $O(n^2/\phi+n\log n)$-time algorithm to compute a $\phi$-approximation of the {discrete} Fr\'echet distance for any integer $1\leq \phi \leq n$. Therefore, an $n^\epsilon$-approximation, for any $\epsilon>0$, can be computed in (strongly) subquadratic time.
For the continuous Fr\'echet distance, there are also a few subquadratic algorithms known for restricted classes of curves such as $\kappa$-bounded, backbone and $c$-packed curves. Alt et al.~\cite{akw-cdmpc-04} considered $\kappa$-bounded curves and they gave an $O(n \log n)$ time algorithm to $(\kappa+1)$-approximate the Fr\'echet distance. A curve $P$ is $\kappa$-bounded if for any two points $x,y\in P$, the union of the balls with radii $r$ centered at $x$ and $y$ contains the whole $P[x,y]$ where $r$ is equal to $(\kappa/2)$ times the Euclidean distance between $x$ and $y$. For any $\epsilon>0$, Aronov et al.~\cite{ahkww-fdcr-06} provided a near-linear time $(1 + \epsilon)$-approximation algorithm for the discrete Fr\'echet distance for so-called backbone curves that have essentially constant edge length and require a minimum distance between non-consecutive vertices. For $c$-packed curves a $(1+\epsilon)$-approximation can be computed in $O(cn/\epsilon + cn \log n)$ time~\cite{dhw-afdrc-12}. A curve is $c$-packed if for any ball $B$, the length of the portion of $P$ contained in $B$ is at most $c$ times the diameter of $B$.

\subsection{Our Contribution}
In this paper, we study a new class of curves, namely curves with long edges, and we show that for these curves the Fr\'echet distance can be computed significantly faster than quadratic time.
In a particular application, one might be interested in detecting groups of different movement patterns in migratory birds that fly very long distances. As shown in Fig.~\ref{fig:motiv}, different flyways are comparatively straight and the trajectory data of individual birds usually consists of only one GPS sample per day in order to conserve battery power. Infrequent sampling and the straight flyways therefore result in curves with long edges, and it is desirable to compare the routes of different animals in order to identify common flyways. 

\begin{figure}[htbp]
	\begin{center}
		\includegraphics[width=5.5cm]{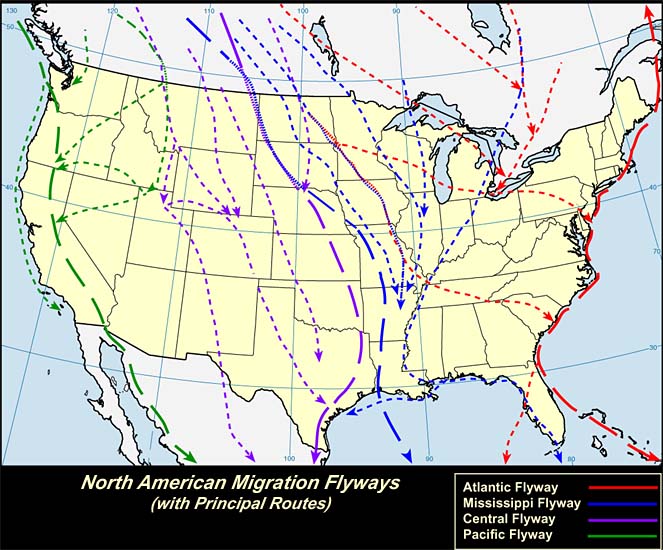}\caption{There are four typical flyways across the US. Clustering the trajectories by similarity between them allows us to detect the most common movement pattern~\protect\cite{namf}.}\label{fig:motiv}
	\end{center}
\end{figure}

We consider the decision, optimization, approximation and data structure problems for the \Frd~between two polygonal curves $P$ and $Q$  in $\Reals^d$ with $n$ and $m$ vertices, respectively, all for the case where all edges of both curves are long compared to the Fr\'echet distance between them. In \secref{greedyDecision} we present a greedy linear-time algorithm for deciding whether the~\Frd~is at most $\eps$, as long as all edges in $P$ are longer than $2\eps$ and edges in $Q$ are longer than $(1+\sqrt{d})\eps$. In \secref{optimization} we give an algorithm for computing the \Frd\ in $O((n+m)\log(n+m))$ time and a linear-time algorithm for approximating the \Frd\ up to a factor of $\sqrt{d}$. In \secref{datastructure} we present a data structure that decides whether the~\Frd~ between a preprocessed curve 	$P$ and a query curve $Q$ is at most $\eps$ or not, in $O(m\log^2 n)$ query time using $O(n\log n)$ space and preprocessing time.

%


\section{Preliminaries} \label{sec:preliminaries}
In this section we provide notations and definitions that will be required in the next sections. Let $P:[1,n]\rightarrow \Reals^d$ and $Q:[1,m]\rightarrow \Reals^d$ be two polygonal curves with vertices $p_1,\ldots, p_n$ and $q_1, \ldots, q_m$, respectively. We treat a \emph{polygonal curve} as a continuous map $P: [1,n] \rightarrow \mathbb{R}^d$ where $P(i)=p_i$ for an integer $i$, and the $i$-th edge is linearly parametrized as $P(i + \lambda) = (1-\lambda) p_i + \lambda p_{i +1}$, for integer $i$ and $0 < \lambda < 1$. A \emph{re-parametrization} $\sigma : [0, 1] \rightarrow [1, n]$ of $P$ is any continuous, non-decreasing function such that $\sigma(0) = 1$ and $\sigma(1) = n$. We denote a re-parametrization of $Q$ by $\theta:[0,1]\rightarrow [1,m]$. 
We denote the length of the shortest edge in $P$ and the length of the shortest edge in $Q$ by $l_P$ and $l_Q$, respectively. For two points $x,y\in \Reals^d$, let $\|x-y\|$ denote the Euclidean distance between the points and $\overline{xy}$ the straight line segment connecting $x$ to $y$.
The Euclidean distance between $x\in\Reals^d$ and an edge $e:[1,2] \rightarrow \Reals^d$ is denoted as $\|x,e\|=\min_{1\leq t\leq 2}\|x-e(t)\|$.
For $1\leq a \leq b \leq n$, $P[a,b]$ denotes the subcurve of $P$ starting in $P(a)$ and ending in $P(b)$.
Let $\eps>0$ be a real number. Consider an edge $e:[1,2] \rightarrow
\Reals^d$ of length $\|e\| > 2\eps$ whose endpoints are $e_1$ and
$e_2$. 
The \emph{direction vector} of $e$ is the vector from $e_1$ to $e_2$.
Now let $B(p,\eps)=\lbrace x\in \Reals^d \;|\; \|p-x\|\leq \eps \rbrace$ be
the ball with radius $\eps$ that is centered at a point $p$. The
\emph{cylinder} $C(e,\eps)$ is the set of points in $\Reals^d$ within
distance $\eps$ from $e$, i.e., $C(e,\eps)=\cup_{x \in e}
B(x,\eps)$. We say $P$ is \emph{$(e,\eps)$-monotone} if (1) $p_1 \in
B(e_1,\eps)$ and $p_n \in B(e_2,\eps)$, (2) $P \subseteq C(e,\eps)$,
and (3) $P$ is monotone with respect to the line supporting $e$. A
curve is monotone with respect to a line $l$ if it intersects any
hyperplane perpendicular to $l$ in at most one component.


\vspace{-2mm}


\begin{figure}[b!]
	\begin{center}
		\includegraphics[width=12.8 cm]{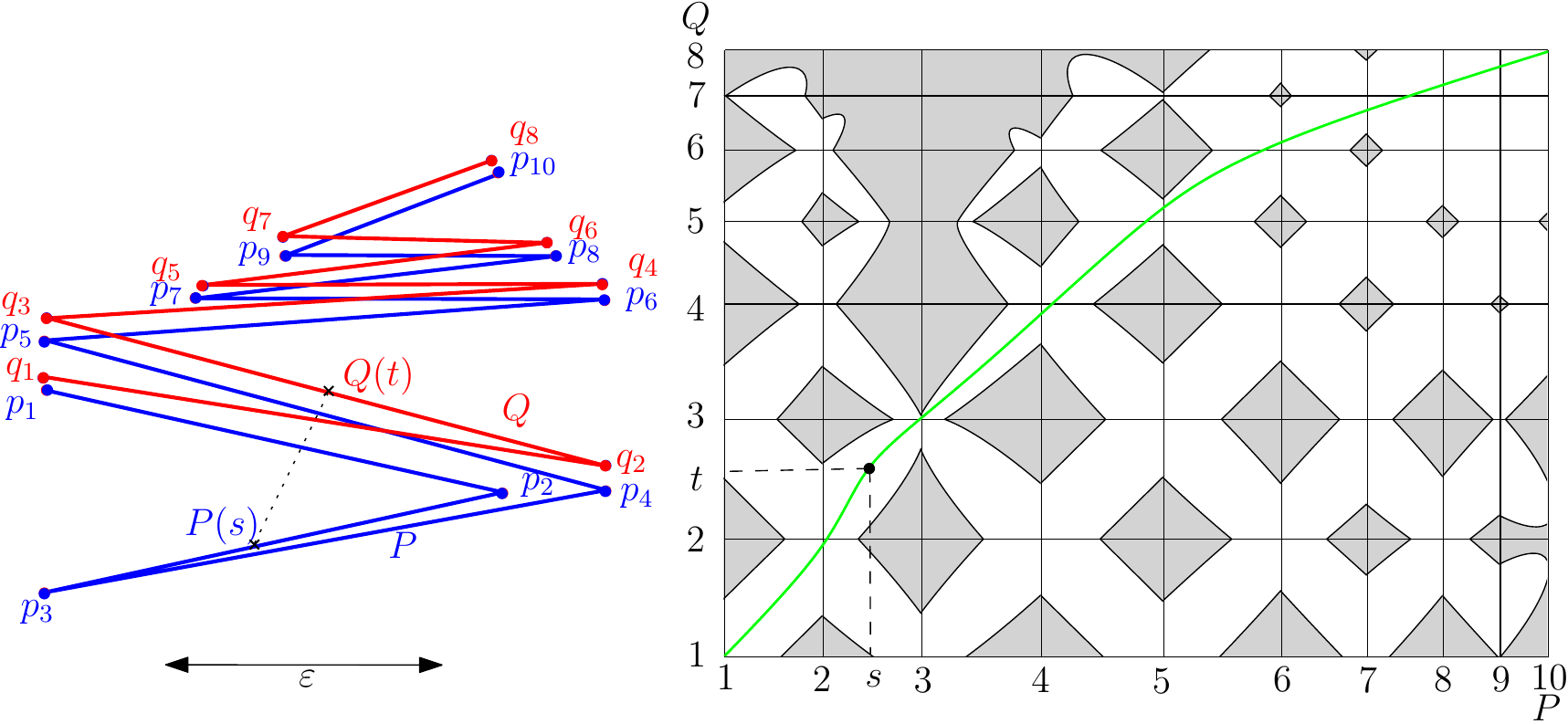}
		\caption{Two curves $P,Q$ and $\eps>0$ on the left, and the free space
			diagram \FSD$(P,Q)$ on the right showing free space in white and
			blocked space in gray. A reachable path is shown in green.
			The point $(s,t)$ lies in free space.
			There is a quadratic number of cells containing free
			space as well as a quadratic number of cells containing blocked
			space in \FSD$(P,Q)$ and all of them may need to be checked to
			decide reachability for $(n,m)$.
			Note that both $P$ and $Q$  contain short edges as well as long edges compared
			to $\eps$.
		}
		\label{fig:gcfs}
	\end{center}
\end{figure}

\subsection{Fr\'echet Distance and Free-Space Diagram}
To compute the Fr\'echet distance between $P$ and $Q$, Alt and
Godau~\cite{ag-cfdb-95} introduced the notion of \emph{free-space
	diagram}. For any $\eps>0$, we denote the free-space diagram between
$P$ and $Q$ by \FSD$(P,Q)$. This diagram has the domain $[1,n]\times[1,m]$ and it consists of $(n-1)\times (m-1)$ cells, where
each point $(s,t)$ in the diagram corresponds to two points $P(s)$ and
$Q(t)$.  A point $(s,t)$ in \FSD$(P,Q)$ is called \emph{free} if
$\|P(s)-Q(t)\|\leq \eps$ and \emph{blocked}, otherwise. The union of
all free points is referred to as the \emph{free space}. A
\emph{monotone matching} between $P$ and $Q$ is a pair of
re-parameterizations $(\sigma,\theta)$ corresponding to an
$xy$-monotone path from $(1,1)$ to $(n,m)$ within the free space in
\FSD$(P,Q)$. The \emph{Fr\'{e}chet distance} between two curves is
defined as $\delta_F(P, Q)= \inf_{(\sigma,\theta)} \max_{0\leq t \leq
	1} \| P(\sigma(t))-Q(\theta(t))\|$, where $(\sigma,\theta)$ is a
monotone matching and $\max_{0\leq t \leq 1}\|
P(\sigma(t))-Q(\theta(t))\|$ is called the \emph{width} of the
matching. A monotone matching realizing $\delta_F(P, Q)$ is called a
\emph{Fr\'echet matching}. A point $(s,t)$ is \emph{reachable} if
there exists a Fr\'echet matching from $(1,1)$ to $(s,t)$ in
\FSD$(P,Q)$. A Fr\'echet matching in \FSD$(P,Q)$ from $(1,1)$ to
$(s,t)$ is also called a \emph{reachable path} for $(s,t)$ (see
Fig.~\ref{fig:gcfs}). Alt and Godau~\cite{ag-cfdb-95} compute a reachable
path by propagating reachable points across free space cell boundaries
in a dynamic programming manner, which requires the exploration of the
entire \FSD$(P,Q)$ and takes $O(mn)$ time.

\subsection{The Main Idea}

We set out to provide faster algorithms for the Fr\'echet distance using implicit structural properties of the free-space diagram of curves with long edges. These properties allow us to develop greedy algorithms that construct valid re-parameterizations by repeatedly computing a maximally reachable subcurve on one of the curves. Like the greedy algorithm proposed by Bringmann and Mulzer~\cite{bw-adfd-15}, we compute prefix subcurves that have a valid Fr\'echet distance. However, while the approximation ratio of their greedy algorithm is exponential, the approximation ratio of the algorithm we present in~\subsecref{approximation} is constant, because we can take advantage of the curves having long edges. Our assumption on edge lengths is more general than backbone curves, since we do not require that non-consecutive vertices be far away from each other and we do not require any upper bound on the length of the edges. 

The free space diagram for curves with long edges is simpler, and intuitively seems to have fewer reachable paths (see Fig.~\ref{fig:lefs}). In the remainder of this paper we show that indeed we can exploit this simpler structure to compute reachable paths in a simple greedy manner which results in runtimes that are significantly faster than quadratic. 



\begin{figure}[htbp]
	\begin{center}
		\includegraphics[width=12.8 cm]{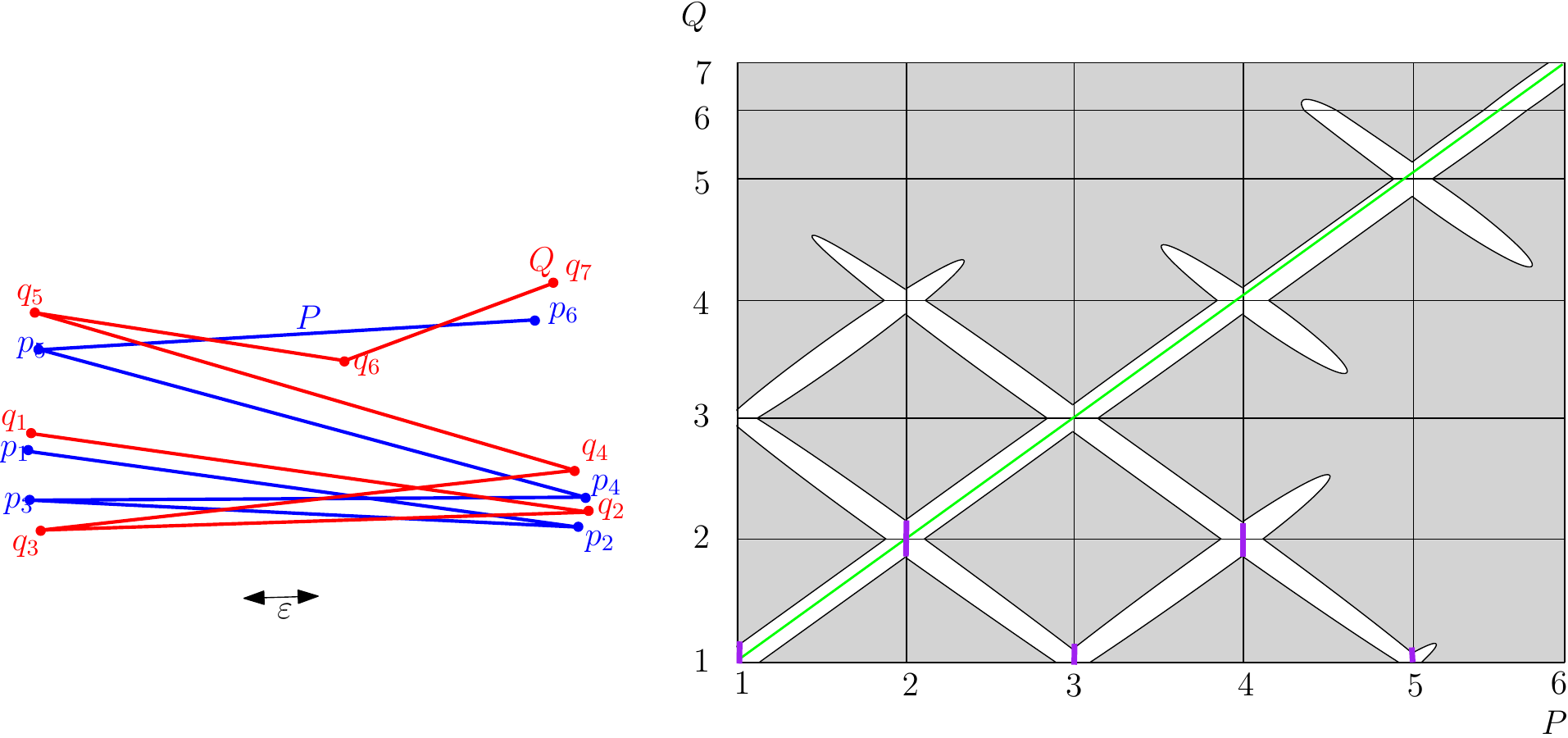}
		\caption{\FSD$(P,Q)$ for curves with long edges results in fewer
			reachable paths for $(n,m)$. 
			Consider the vertical free intervals (shown in purple) in the first row of
			the free space diagram. 
			Since
			$l_P>2\eps$, no consecutive purple intervals intersect which is a
			property we exploit. One can use such a property to find a reachable path without needing to check the entire free space diagram.
		}
		\label{fig:lefs}
	\end{center}
\end{figure}

\section{A Greedy Decision Algorithm}\label{sec:greedyDecision}

In this section we give a linear time algorithm for deciding whether the \Frd\ between two polygonal curves $P$ and $Q$ in $\Reals^d$ with relatively long edges is at most $\eps$. In~\subsecref{strprop}, we first prove a structural property for the case that each edge in $P$ is longer than $2\eps$ and $Q$ is a single segment. Afterwards in \subsecref{strextend}, we consider the extension to the case that $P$ and $Q$ are two polygonal curves and we show some extended structural property of free space induced by two curves with long edges. In \subsecref{thealg}, we present our greedy algorithm, which is based on computing longest reachable prefixes in $P$ with respect to each segment in $Q$. We consider three different variants of edge lengths assumption when $l_P>2\eps$ and $l_Q>(1+\sqrt{d})\eps$ (\subsubsecref{regularCase}), $l_P\geq2\eps$ and $l_Q\geq(1+\sqrt{d})\eps$ (\subsubsecref{geq}), and $l_P>0$ and $l_Q>4\eps$ (\subsubsecref{4eps}).  In~\subsecref{assumption}, we provide a critical example for which our greedy algorithm fails when the assumption on the edge lengths does not hold.

\subsection{A Simple Fr\'echet Matching for a Single Segment} \label{subsec:strprop}

In this section we start by introducing the crucial notion of \emph{orthogonal matching} between a polygonal curve $P$ and a single line segment $e$. An orthogonal matching projects each point from $P$ to its closest point on $e$. In particular, it maps vertices of $P$ either orthogonally to the segment $e$ or directly to the endpoints of $e$.

\begin{definition}[Orthogonal Matching] \label{def:orthomatch}
	Let $\eps>0$, $P:[1,n] \RD$ be a polygonal curve, and $e: [1,2] \RD $
	be a line segment. A Fr\'echet matching $(\sigma,\theta)$ realizing
	$\delta_F(P,e)\leq \eps$ is called an \emph{orthogonal matching} of
	width at most $\eps$ if
	$\sigma(t)=1$ for $t\in[0,a]$, 
	$\|P(\sigma(t))-e(\theta(t))\|=\|P(\sigma(t)),e\|\leq\eps$ for $t \in (a,b)$, and
	$\sigma(t)=n$ for $t\in[b,1]$
	for some $0\leq a\leq b\leq 1$; 
	see Fig.~\ref{fig:orthogonal}(a).
\end{definition}

\begin{figure}[htbp]
	\centering
	\includegraphics[width=11.5 cm]{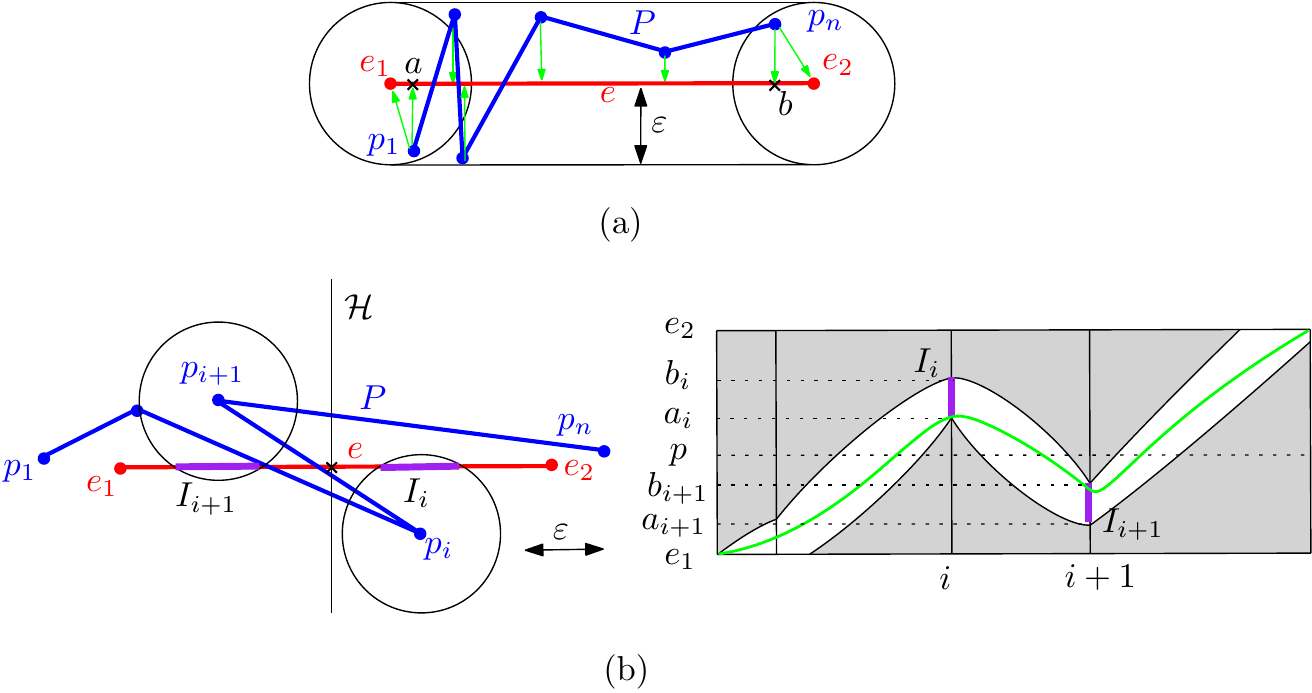}
	\caption{(a) In this example $P$ is $(e,\eps)$-monotone and the green arrows indicate an orthogonal matching between $P$ and $e$.
		(b) An illustration of the case $\H\cap e \neq\emptyset$ in the proof of \itemref{cond:1} $\Rightarrow$ \itemref{cond:2} in \lemref{monotonicity}. Note that the consecutive purple intervals $I_i$ and $I_{i+1}$ do not intersect because $l_P>2\eps$.
	}
	\label{fig:orthogonal}
\end{figure} 

Now we state a key lemma that demonstrates that if $P$ has long edges, then the orthogonal matching of width at most $\eps$ between $P$ and a segment $e$ exists if and only if $\delta_F(P,e)\leq \eps$, and this is equivalent to $P$ being $(e,\eps)$-monotone.

\begin{lemma}[Orthogonal Matching and Monotonicity]\label{lem:monotonicity}   
	Let $\eps >0$, $P:[1, n] \RD$ be a polygonal curve and $e:[1,2] \RD$
	be a line segment.
	Consider the following statements:	
	\begin{enumerate}
		\item $\delta_F(P,e)\leq\eps$, \label{cond:1}
		\item $P$ is $(e,\eps)$-monotone, \label{cond:2}
		\item $P$ and $e$ admit an \emph{orthogonal matching} of width at most
		$\eps$. \label{cond:3}
	\end{enumerate}
	
	\noindent
	In general, \itemref{cond:2}~$\Leftrightarrow$~\itemref{cond:3} and \itemref{cond:3}~$\Rightarrow$~\itemref{cond:1}.  
	\noindent
	In addition, if $l_P>2\eps$ then \itemref{cond:1}~$\Rightarrow$~\itemref{cond:2}, i.e., all three statements are equivalent.
\end{lemma}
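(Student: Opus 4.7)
The plan is to handle the three directions separately: (2) $\Leftrightarrow$ (3) and (3) $\Rightarrow$ (1) in general, and (1) $\Rightarrow$ (2) under the long-edge hypothesis $l_P > 2\eps$, with the bulk of the effort going into the last implication.

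For (2) $\Rightarrow$ (3) I would exhibit the orthogonal matching explicitly. Let $\sigma$ traverse $P$ monotonely from $1$ to $n$, and take $\theta(t)$ to be the parameter of the point on $e$ closest to $P(\sigma(t))$---the orthogonal projection onto the supporting line $\ell$, clipped to $[1,2]$. Monotonicity of $P$ with respect to $\ell$ makes the clipped projection non-decreasing in $t$, the cylinder containment $P \subseteq C(e,\eps)$ gives pointwise distance at most $\eps$, and the endpoint conditions $p_1 \in B(e_1,\eps)$, $p_n \in B(e_2,\eps)$ let us absorb the initial and final clipping into the ``waiting'' intervals $[0,a]$ and $[b,1]$ of the definition. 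For (3) $\Rightarrow$ (2), the endpoint conditions come from $(\sigma(0),\theta(0))=(1,1)$ and $(\sigma(1),\theta(1))=(n,2)$; the cylinder containment is immediate from the width bound; and monotonicity of $P$ with respect to $\ell$ is inherited from the non-decreasing $\theta$ through the orthogonality identity $\|P(\sigma(t)) - e(\theta(t))\| = \|P(\sigma(t)),e\|$. Finally, (3) $\Rightarrow$ (1) is immediate because an orthogonal matching is itself a Fr\'echet matching of width at most $\eps$.

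The main obstacle is (1) $\Rightarrow$ (2) under $l_P > 2\eps$, where the long-edge assumption is essential. Starting from any Fr\'echet matching of width at most $\eps$, define the vertical free interval at each vertex as $I_i := \{\lambda \in [1,2] \,:\, \|p_i - e(\lambda)\| \leq \eps\}$; these are the free portions of the column boundaries of $\FSD(P,e)$. The matching assigns values $\theta_1 \leq \theta_2 \leq \cdots \leq \theta_n$ with $\theta_i \in I_i$. The decisive consequence of $l_P > 2\eps$ is that $I_i \cap I_{i+1} = \emptyset$ for every $i$: a common point $e(\lambda)$ would yield $\|p_i - p_{i+1}\| \leq 2\eps$ by the triangle inequality, contradicting the edge-length bound. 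Coupled with monotonicity of the $\theta_i$, disjointness forces $\sup I_i < \inf I_{i+1}$, so the $I_i$ occur in strict left-to-right order along $e$, the configuration illustrated in Figure~\ref{fig:orthogonal}(b).

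From the strict separation of the $I_i$ I would then deduce that the orthogonal projections $\lambda_i$ of the vertices $p_i$ onto $\ell$ are strictly increasing. Each $I_i$ is the intersection with $[1,2]$ of a symmetric interval on $\ell$ centered at $\lambda_i$; the separation on the ``interior'' side---combined with a short case analysis depending on whether $I_i$ or $I_{i+1}$ is truncated at $\lambda=1$ or $\lambda=2$---forces $\lambda_i < \lambda_{i+1}$. Because the projection onto $\ell$ is linear on each edge of $P$, the projection of all of $P$ is then piecewise linear with strictly increasing vertex values, so every hyperplane perpendicular to $\ell$ meets $P$ in at most one component, i.e., $P$ is monotone with respect to $\ell$. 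Combined with $P \subseteq C(e,\eps)$ and the endpoint conditions, both immediate from $\delta_F(P,e) \leq \eps$, this yields $(e,\eps)$-monotonicity. The step requiring the most care is the casework handling intervals $I_i$ that abut an endpoint of $e$; everywhere else the conclusion follows immediately from the disjointness of consecutive free intervals.
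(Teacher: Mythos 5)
Your proof is correct, and for (2)~$\Rightarrow$~(3), (3)~$\Rightarrow$~(2), and (3)~$\Rightarrow$~(1) you follow essentially the paper's route: the clipped nearest-point projection furnishes the orthogonal matching, and (3)~$\Rightarrow$~(1) is immediate from the definition. Your argument for (1)~$\Rightarrow$~(2) under $l_P > 2\eps$ is, however, genuinely different in structure from the paper's. Both hinge on the same key observation that $I_i \cap I_{i+1} = \emptyset$ when $l_P > 2\eps$, but the paper argues by contradiction: it fixes a hyperplane $\H$ perpendicular to $\ell$ that $P$ crosses twice and splits into two geometric cases according to whether $\H$ intersects $e$ (where the disjoint-interval observation contradicts monotonicity of $\theta$) or lies beyond an endpoint of $e$ (where a separate ``no edge fits inside a ball of diameter $2\eps$'' argument is needed). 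You instead argue directly: monotonicity of the $\theta_i$ together with disjointness forces the $I_i$ into strict left-to-right order, and since each $I_i$ is the clip to $[1,2]$ of a symmetric interval centered at the projection $\lambda_i$ of $p_i$ onto $\ell$, this forces $\lambda_1 < \cdots < \lambda_n$; the piecewise-linear projection of $P$ onto $\ell$ is then strictly increasing, so $P$ is monotone. This route is more unified---the paper's ``hyperplane misses $e$'' case disappears, replaced by a mechanical check that truncation of $I_i$ at an endpoint of $[1,2]$ cannot upset the inequality (you are right that this is the only delicate step; in fact $\sup I_i < \inf I_{i+1}$ already rules out $I_i$ reaching $2$ or $I_{i+1}$ reaching $1$, so the remaining truncation cases only help). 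The paper's proof singles out one offending hyperplane and derives a contradiction; yours produces the monotonicity positively and in one pass, which is arguably cleaner, though it trades one geometric case split for a small endpoint case analysis.
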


\begin{proof}
	%
	We immediately have \itemref{cond:3}~$\Rightarrow$~\itemref{cond:1} by \defref{orthomatch}.
	To prove \itemref{cond:2}~$\Rightarrow$~\itemref{cond:3}, assume $P$
	is $(e,\eps)$-monotone. We can construct an orthogonal matching by
	mapping each $p_i$ to its nearest neighbor $e(1+t_i)$ on $e$, with $0\leq t_{i-1}\leq t_i\leq 1$.
	We set $\sigma(t_i)=i$ and $\theta(t_i)=1+t_i$ for all $i=1,\ldots,n$, and
	we set $a=t_1$, $b=t_n$, $\sigma(t)=1$ for $t\in[0,a]$, 
	$\sigma(t)=n$ for $t\in[b,1]$, and $\theta(0)=1$ and $\theta(1)=2$. The matching
	$(\sigma, \theta)$ is obtained by linearly interpolating between these values.
	The function $\sigma(t)$ is monotone by construction, and  $\theta(t)$ is monotone because $P$ is monotone with respect to the line supporting $e$. And all distances $\|P(\sigma(t))-e(\theta(t))\|\leq\eps$ because $P$ is $(e,\eps)$-monotone. Thus $(\sigma,\theta)$ is an orthogonal matching of width at most $\eps$.
	To prove \itemref{cond:3} $\Rightarrow$ \itemref{cond:2}, let $(\sigma,\theta)$ be an orthogonal matching of width at most $\eps$.
	Then clearly $p_1 \in B(e_1,\eps)$, $p_n \in B(e_2,\eps)$, and $P \subseteq C(e,\eps)$.
	Let $t_1,\ldots,t_n$ be such that $P(\sigma(t_i))=p_i$ for $i=1,\ldots,n$. Since $(\sigma,\theta)$ is a (monotone) Fr\'echet matching, $\theta(t_1), \ldots, \theta(t_n)$ is a
	monotone increasing sequence. And since $(\sigma,\theta)$ is orthogonal, the line segments $\overline{p_i\theta(t_i)}$ are all monotone to the line $\ell$ supporting $e$.  Therefore, $P$ is monotone with respect to $\ell$ and thus $P$ is $(e,\eps)$-monotone.

	Now assume $l_P> 2\eps$. In order to prove
	\itemref{cond:1}~$\Rightarrow$~\itemref{cond:2}, if $\delta_F(P,e)\leq
	\eps$ then clearly $p_1 \in B(e_1,\eps)$, $p_n\in B(e_2,\eps)$, and $P
	\subseteq C(e,\eps)$. It remains to show that $P$ is monotone with
	respect to the line $\ell$ supporting $e$.
	For all $i = 1,\ldots, n$, define $I_i=B(p_i,\eps)\cap e =
	e[a_i,b_i]$. Because $l_P> 2\eps$, we know that $I_i \cap
	I_{i+1}=\emptyset$. 
	Let $(\sigma,\theta)$ be a monotone matching realizing $\delta_F(P,e)\leq
	\eps$. For the sake of contradiction assume there exists a hyperplane
	$\H$ perpendicular to $\ell$ such that $P$ intersects $\H$ in at least
	two points $P(x)$ and $P(y)$, where $x < y$. Let $p_i$ be the last
	vertex along $P[x,y]$, and recall that $e_1$ and $e_2$ are the two
	vertices of $e$. First assume that $\H\cap e \neq \emptyset$. Then
	$p_i$ lies on the $e_2$-side of $\H$ and $p_{i+1}$ lies on the
	$e_1$-side of $\H$. Therefore, because $I_i\cap I_{i+1}=\emptyset$, we
	know that $a_i > b_{i+1}$. Let $t_i,t_{i+1} \in [0,1]$ be two values
	such that $p_i = P(\sigma(t_i))$ and $p_{i+1}=P(\sigma(t_{i+1}))$,
	where $t_i < t_{i+1}$. From $\sigma(t_i) \geq a_i $ and
	$\sigma(t_{i+1}) \leq b_{i+1}$, we know that $\sigma(t_i) >
	\sigma(t_{i+1})$, which violates the monotonicity of
	$(\sigma,\theta)$, see Fig.~\ref{fig:orthogonal}(b). Now consider the case
	that $\H\cap e=\emptyset$. Then $p_i$ lies on one side of $\H$, and
	$e$ lies entirely on the other side. If $\H \cap B(e_1,\eps) \neq
	\emptyset$, then we know that $P[1,y]\subseteq B(e_1,\eps)$. But this
	is not possible since all edges of $P$ are longer than $2\eps$. The
	same argument holds if $\H \cap B(e_2,\eps) \neq \emptyset$.

\end{proof}	
In fact \lemref{monotonicity} shows that for a curve $P$ with long edges, the \Frd~to a line segment $e$ is determined by examining whether $P$ is $(e,\eps)$-monotone or not.

\subsection{A Simple Fr\'echet Matching for More than One Segment} \label{subsec:strextend}
In this section, we extend the matching between a curve $P$ and a single line-segment $e$ to a matching between two curves $P$ and $Q$. 



\begin{definition}[Longest $\eps$-Prefix] \label{def:maximal}
	Let $\eps>0$, $P:[1,n]\RD$ be a polygonal curve, and $e:[1,2] \RD$ be a line segment.
	Define $\gamma=\max\{\,t\;|\; 1\leq t\leq n\;\mbox{and}\; \delta_F(P[1,t],e) \leq \eps\}$.
	We call $P[1,\gamma]$ the {\em longest $\eps$-prefix of $P$ with respect to $e$}.
\end{definition}

We now use the longest $\eps$-prefix to define an extension of the matching introduced in \defref{orthomatch}.
\noindent\defref{maximal} is the basis of our greedy algorithm (\algref{directedDecider}) which is presented in the next section. We show that if there exists a matching between two curves, then one can necessarily cut it into $m-1$ orthogonal matchings between each segment in $Q$ and the corresponding longest $\eps$-prefix. 
Before we reach this property, we need the following technical lemma:

\begin{lemma}[$(\sqrt{d}\eps)$-Ball] \label{lem:fp} Let $\eps>0$ and let $P:[1,n]\RD$ be a polygonal curve such that $l_P>2\eps$. Let $e:[1,2]\RD$ where $\|e\|> 2\eps$. Assume that $P[1,\gamma]$ is the longest $\eps$-prefix of $P$ with respect to $e$, and let $\alpha$ be a parameter such that $P(\alpha)$ is the first point along $P$ that intersects $B(e_2,\eps)$. Then $P[\alpha,\gamma] \subseteq B(e_2,\sqrt{d}\eps)$.
\end{lemma}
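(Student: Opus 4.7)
The plan is to apply Lemma~\ref{lem:monotonicity} to the prefix $P[1,\gamma]$, and then to decompose the displacement of any point on $P[\alpha,\gamma]$ from $e_2$ into components parallel and perpendicular to the line $\ell$ supporting $e$, bounding each component by $\eps$.

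First, $\delta_F(P[1,\gamma],e)\le \eps$ by the definition of the longest $\eps$-prefix. Together with the hypothesis $l_P>2\eps$, Lemma~\ref{lem:monotonicity} yields that $P[1,\gamma]$ is $(e,\eps)$-monotone. In particular, $P[1,\gamma]\subseteq C(e,\eps)$, the curve $P[1,\gamma]$ is monotone with respect to $\ell$, and the endpoint $P(\gamma)$ lies in $B(e_2,\eps)$. Combined with $P(\alpha)\in B(e_2,\eps)$ (the defining property of $\alpha$), this shows $\alpha\le\gamma$ and makes $P[\alpha,\gamma]$ a subcurve whose two endpoints both lie in $B(e_2,\eps)$.

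Next, fix any parameter $s\in[\alpha,\gamma]$ and write $P(s)-e_2 = v_{\parallel}+v_{\perp}$, where $v_{\parallel}$ is parallel to $\ell$ and $v_{\perp}$ is perpendicular to $\ell$. For the perpendicular component: $P(s)\in C(e,\eps)$ gives $\mathrm{dist}(P(s),e)\le\eps$, and since $e\subset\ell$ we have $\mathrm{dist}(P(s),\ell)\le\mathrm{dist}(P(s),e)\le\eps$, i.e.\ $\|v_{\perp}\|\le\eps$. For the parallel component: let $\pi$ denote orthogonal projection onto $\ell$; then $\pi(P(\alpha))$ and $\pi(P(\gamma))$ each lie within $\eps$ of $e_2$ along $\ell$, because projection is a contraction and both endpoints lie in $B(e_2,\eps)$. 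The $\ell$-monotonicity of $P[1,\gamma]$ then forces $\pi(P(s))$ to sit between $\pi(P(\alpha))$ and $\pi(P(\gamma))$, so $\|v_{\parallel}\|\le\eps$ as well.

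Combining these two bounds gives $\|P(s)-e_2\| = \sqrt{\|v_{\parallel}\|^2+\|v_{\perp}\|^2}\le\sqrt{2}\,\eps\le\sqrt{d}\,\eps$ for $d\ge 2$ (and the trivial bound $\eps=\sqrt{1}\,\eps$ for $d=1$), which proves $P[\alpha,\gamma]\subseteq B(e_2,\sqrt{d}\,\eps)$. I expect the main subtlety to be the parallel bound: without the long-edge hypothesis, the curve could overshoot $e_2$ along $\ell$ and loop back, destroying the monotonicity trap on $\pi(P(s))$. Invoking Lemma~\ref{lem:monotonicity} is precisely what rules this out via $(e,\eps)$-monotonicity. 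A minor technicality is that $\gamma$ need not be an integer, so the last partial piece of $P[1,\gamma]$ may have length at most $2\eps$; however, the two ingredients actually used in the argument, namely $\ell$-monotonicity and containment in $C(e,\eps)$, transfer from $P$ to any of its prefixes, so this does not cause a real obstruction.
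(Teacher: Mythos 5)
Your decomposition into components parallel and perpendicular to $\ell$ is a clean way to see where $\sqrt{d}\eps$ (indeed $\sqrt{2}\eps$) comes from, but there is a genuine gap at the very first step: you invoke Lemma~\ref{lem:monotonicity} on $P[1,\gamma]$ under the hypothesis $l_P>2\eps$, but that hypothesis need not hold for the subcurve $P[1,\gamma]$. Since $\gamma$ is generically not an integer, the trailing piece $P[\lfloor\gamma\rfloor,\gamma]$ can be arbitrarily short, and Lemma~\ref{lem:monotonicity}'s implication \itemref{cond:1}~$\Rightarrow$~\itemref{cond:2} simply fails for that truncated curve. It is not a cosmetic issue: $P[1,\gamma]$ can really fail to be $\ell$-monotone. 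For instance, in $\Reals^2$ with $\eps=1$, take $e$ from $(0,0)$ to $(10,0)$ and $P$ with $p_1=(0.5,0)$, $p_2=(10,0.5)$, $p_3=(0,0.5)$; all full edges of $P$ have length $>2\eps$, and $\gamma\approx 2.087$ with $P(\gamma)\approx(9.13,0.5)$, so $P[1,\gamma]$ runs right past $e_2$ and then turns back, crossing some hyperplane perpendicular to $\ell$ twice. Your closing remark that ``$\ell$-monotonicity transfers from $P$ to any of its prefixes'' does not rescue the argument, because $P$ itself is not $\ell$-monotone (only a strict prefix of $P$ has Fr\'echet distance at most $\eps$ to $e$), so there is nothing to transfer from. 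Consequently your key step — trapping $\pi(P(s))$ between $\pi(P(\alpha))$ and $\pi(P(\gamma))$ by monotonicity — is unjustified as stated.

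The paper sidesteps this by applying Lemma~\ref{lem:monotonicity} only to $P[1,x]$, where $P(x)$ is the last \emph{full} vertex before $P(\gamma)$, after first checking $\delta_F(P[1,x],e)\le\eps$ (which it can do when $P(x)\in B(e_2,\eps)$), then folding in the short terminal segment $P[x,\gamma]$ separately using convexity of $B(e_2,\eps)$; the case $P(x)\notin B(e_2,\eps)$ gets its own argument ruling out any backward edge before $x$ via the monotonicity of the Fr\'echet re-parameterization, with a further degenerate subcase when the offending edge is the last one. To repair your write-up along the same lines you would need to make that split explicit; the parallel/perpendicular bookkeeping can then be carried out on $P[\alpha,x]$ using genuine $\ell$-monotonicity, and $P[x,\gamma]\subseteq B(e_2,\eps)$ is handled by convexity, while the non-monotone or overshoot scenario (as in the example above) is addressed directly. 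As written, the proof is incomplete.
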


\begin{proof}
	By assumption $\|e\|>  2\eps$, we know that $B(e_1,\eps) \cap B(e_2,\eps)=\emptyset$, thus $\alpha$ exists. Notice that $P[\alpha,\gamma]\subseteq C(e,\eps)$. Let $\H$ be the hyperplane that is  intersecting and perpendicular to $e$ and is tangent to $B(e_2,\eps)$. Hence $\H$ splits $P[1,\gamma]$ into two parts, the part on the $e_1$-side and the part that on the $e_2$-side. Let $P(x)$ be the last vertex before $P(\gamma)$ along $P$. By~\defref{maximal}, $\delta_F(P[1,\gamma],e)\leq \eps$, and (1) if $P(x)\in B(e_2,\eps)$, then~\lemref{monotonicity} implies that $P[1,x]$ is $(e,\eps)$-monotone. Thus $P[\alpha,\gamma]$ must lie on the $e_2$-side of $\H_2$, and in particular inside the cube enclosing $B(e_2,\eps)$, see Fig.~\ref{fig:sqrtd}. Therefore the maximum possible distance between any point in $P[\alpha,\gamma]$ and $e_2$ is $\sqrt{d}\eps$. 
	(2) If $P(x)\notin B(e_2,\eps)$, we first show that $P[1,x]$ is monotone with respect to the line supporting $e$ and then we use the similar argument as in (1) to imply the maximum possible distance between any point in $P[\alpha,\gamma]$ and $e_2$ is $\sqrt{d}\eps$. Now let $(\sigma,\theta)$ be a Fr\'echet matching between $P[1,\gamma]$ and $e$. For the sake of contradiction assume there exists an edge $P[i,i+1]$ such that the angle between the direction vectors of $P[i,i+1]$ and $e$ is greater than $\pi/2$ with $i< x$. Let $t_i,t_{i+1} \in [0,1]$ be two real values with $t_{i}< t_{i+1}$ such that $\sigma(t_i)=i$ and $\sigma(t_{i+1})=i+1$ and let $I_i=B(p_i,\eps)\cap e=e[a_i,b_i]$ and $I_{i+1}=B(p_{i+1},\eps)\cap e=e[a_{i+1},b_{i+1}]$. Now from $B(p_i,\eps)\cap B(p_{i+1},\eps)=\emptyset$ follows that $I_i\cap I_{i+1}=\emptyset$. Note that the angle between the direction vectors of  $P[i,i+1]$ and $e$ is greater than $\pi/2$ which indicates that $b_{i+1}<a_i$. Therefore $a_{i+1}\leq\theta(t_{i+1})\leq b_{i+1}< a_i \leq \theta(t_i)\leq b_i$. Now three following cases are expected:
	\begin{itemize*}
		\item[(i)]if $i+1<\alpha$, then $\gamma$ does not exist since $(\sigma,\theta)$ is not monotone and this would be a contradiction. Therefore $P[1,x]$ is monotone with respect to the line supporting $e$.
		
		\item[(ii)]  If $\alpha< i \leq x$, then $\gamma<x$ since $i<\gamma<i+1$ which is a contradiction with $\gamma>x$. Hence $P[1,x]$ is monotone with respect to the line supporting $e$.
		
		\item[(iii)] if $i= x\leq \alpha$, then $P[\alpha,\gamma]$ is only a subsegment of $P[i,i+1]$ and trivially lies within $B(e_2,\eps)$.
		
	\end{itemize*}  
	This completes the proof.
	
\end{proof}

\begin{figure} [htbp]
	\begin{center} 
		\includegraphics[width=6.5 cm]{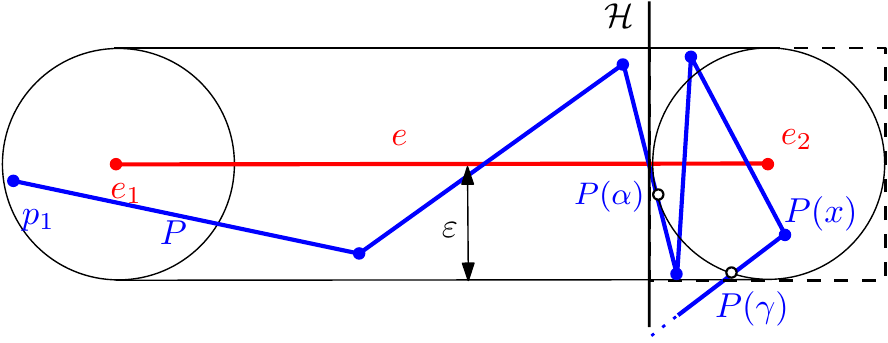}
		\caption{The farthest point in $P[\alpha,\gamma]$ from $e_2$ must lie inside the cube enclosing $B(e_2,\eps)$.}
		\label{fig:sqrtd}
	\end{center}
\end{figure}

\begin{lemma} [($3\eps)$-Ball]\label{lem:3delta-ball}
	Let $\eps>0$ and let $P:[1,n]\RD$ be a polygonal curve. Let $e:[1,2]\RD$ where $\|e\|> 2\eps$. Assume that $P[1,\gamma]$ is the longest $\eps$-prefix of $P$ with respect to $e$ and $P(\alpha)$ is the first point along $P$ that intersects $B(e_2,\eps)$. Then $P[\alpha,\gamma]\subseteq B(e_2,3\eps)$.
\end{lemma}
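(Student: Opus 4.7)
The plan is to track, along a Fr\'echet matching $(\sigma,\theta)$ realizing $\delta_F(P[1,\gamma],e)\leq\eps$, how the matched point on $e$ evolves from parameter $\alpha$ to $\gamma$. The key idea is that once the matching reaches a point close to $e_2$ along $e$, monotonicity forces the matched points for the remainder of $P[\alpha,\gamma]$ to stay near $e_2$, so two applications of the triangle inequality produce the bound. Unlike \lemref{fp}, the argument here does not use $l_P>2\eps$, which is why the bound degrades from $\sqrt{d}\eps$ to $3\eps$; essentially all we need is the triangle inequality and the monotonicity of the Fr\'echet matching.

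First, I would note that by definition of a Fr\'echet matching, $\sigma(1)=\gamma$ and $\theta(1)=2$, so $P(\gamma)$ is matched to $e_2$ and $\|P(\gamma)-e_2\|\leq\eps$. Next, I would pick any $t_\alpha\in[0,1]$ with $\sigma(t_\alpha)=\alpha$. Combining the matching width bound $\|P(\alpha)-e(\theta(t_\alpha))\|\leq\eps$ with $\|P(\alpha)-e_2\|\leq\eps$ (which holds because $P(\alpha)\in B(e_2,\eps)$), the triangle inequality yields
\[
\|e(\theta(t_\alpha))-e_2\|\leq 2\eps,
\]
that is, the matched point already lies in the last $2\eps$-portion of $e$ measured from $e_2$.

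To extend this bound to all of $P[\alpha,\gamma]$, for any $\beta\in[\alpha,\gamma]$ I would pick $t_\beta\in[t_\alpha,1]$ with $\sigma(t_\beta)=\beta$. Monotonicity of $\theta$ gives $\theta(t_\alpha)\leq\theta(t_\beta)\leq\theta(1)=2$, so $e(\theta(t_\beta))$ lies on the subsegment of $e$ from $e(\theta(t_\alpha))$ to $e_2$, which has length at most $2\eps$. Therefore $\|e(\theta(t_\beta))-e_2\|\leq 2\eps$, and one more triangle inequality with $\|P(\beta)-e(\theta(t_\beta))\|\leq\eps$ delivers $\|P(\beta)-e_2\|\leq 3\eps$, as desired. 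The main thing to handle carefully is the bookkeeping of the parameter domains and the translation of the matching's monotonicity into monotonicity of the matched points along $e$; beyond that I do not foresee any real obstacle.
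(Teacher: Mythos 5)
Your proof is correct and follows essentially the same route as the paper: fix a Fr\'echet matching $(\sigma,\theta)$ for $P[1,\gamma]$ and $e$, use one triangle inequality at $P(\alpha)$ to get $\|e(\theta(t_\alpha))-e_2\|\leq 2\eps$, use monotonicity of $\theta$ to push that bound onto $e(\theta(t))$ for any later $t$, and apply a second triangle inequality. The only cosmetic difference is the order in which the two triangle inequalities are applied; the paper first bounds an arbitrary $\|P(x)-e_2\|$ in terms of $\|e(\theta(t_\alpha))-e_2\|$ and then unfolds the latter, while you unfold it first.
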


\begin{proof}
	Although the proof of Lemma 11 in Gudmundsson and Smid~\cite{gs-fqgt-15} is similar, we describe a slight modification of the proof that is necessary for our setting. Suppose $(\sigma,\theta)$ is a Fr\'echet matching realizing $\delta_F(P[1,\gamma],e)\leq \eps$. Let $x \in [\alpha,\gamma]$ such that $P(x)$ is the farthest point to $e_2$. We need to show that $\|P(x)-e_2\|\leq 3\eps$ which implies $P[\alpha,\gamma] \subseteq B(e_2,3\eps)$. Let $t_\alpha, t_\gamma \in[0,1]$ be two values such that $\alpha=\sigma(t_\alpha)$ and $\gamma=\sigma(t_\gamma)$. Note that there exists some $t_x\in [t_\alpha,t_\gamma]$ such that $x=\sigma(t_x)$. 
	By the triangle inequality we have: $$\|P(x)-e_2\|\leq \|P(x)-e(\theta(t_x))\|+\|e(\theta(t_x))-e_2\|\leq \eps+ \|e(\theta(t_x))-e_2\|.$$ Note that $t_x > t_\alpha$ and we can have $\|e(\theta(t_x))-e_2\| \leq \|e(\theta(t_\alpha))-e_2\|$, hence: $$\|P(x)-e_2\| \leq \eps + \|e(\theta(t_\alpha))-e_2\|.$$ 
	By applying the triangle inequality once more we have:
	$$\|P(x)-e_2\| \leq \eps+  \|e(\theta(t_\alpha))-P(\alpha)\| + \|P(\alpha)-e_2\| \leq 3\eps.$$
\end{proof}

Now we show that if $\delta_F(P,Q)\leq \eps$, then the two polygonal curves $P$ and $Q$ admit a piecewise orthogonal matching, which can be obtained by computing longest $\eps$-prefixes of $P$ with respect to each segment of $Q$. This lemma is the foundation of our greedy algorithm (\algref{directedDecider}).

\begin{figure} [htbp]
	\begin{center} 
		\includegraphics[width=12.5 cm]{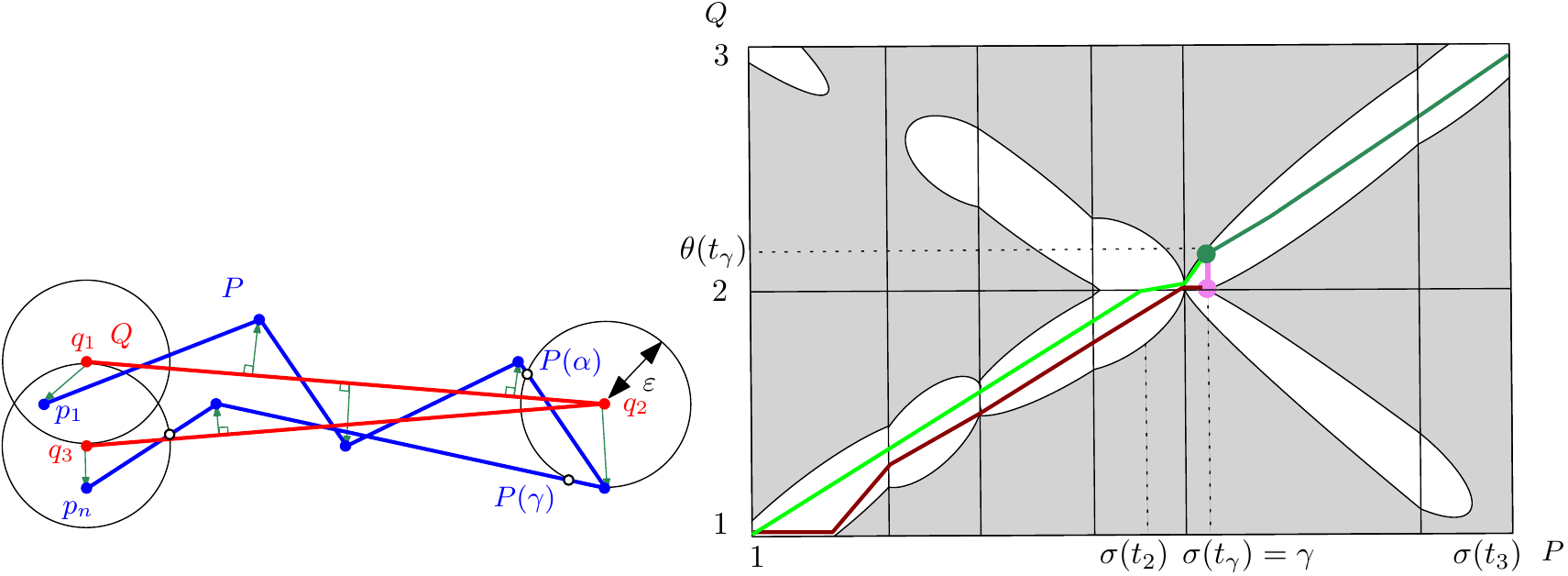}
		\caption{ Given an arbitrary matching (the concatenation of the light and dark green reachable paths), the orthogonal matching (the brown reachable path) between $P[1,\gamma]$ and $Q[1,2]$ exists. We construct a matching realizing $\delta_F(P[\gamma,n],Q[2,m])\leq \eps$ as the concatenation of the pink and the dark green reachable paths.} 
		\label{fig:piecewise}
	\end{center}
\end{figure}


\begin{lemma}[Cutting Lemma] \label{lem:pwmatching}
	Let $\eps >0$, and let $P:[1,n]\RD$ and $Q:[1,m]\RD$ be two polygonal curves such that $l_P>2\eps$ and $l_Q>(1+\sqrt{d})\eps$. If $\delta_F(P,Q)\leq \eps$, then $P[1,\gamma]$ as the longest $\eps$-prefix of $P$ with respect to $Q[1,2]$ exists, $\delta_F(P[1,\gamma], Q[1,2])\leq \eps$ and $\delta_F(P[\gamma, n], Q[2,m])\leq \eps$.
\end{lemma}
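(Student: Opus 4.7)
My plan is to take the matching that witnesses $\delta_F(P,Q)\leq\eps$ and to modify it near the prefix boundary so that the second coordinate begins cleanly at $q_2$ instead of at whatever point of $Q$ the original matching was at when its first coordinate first reached $\gamma$. In the picture of \figref{piecewise} this amounts to reusing the dark green portion of the original matching as the tail of the new matching and building the pink prefix by hand.

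I would first fix any monotone matching $(\sigma,\theta)$ between $P$ and $Q$ of width $\leq\eps$, set $t^{\ast}=\inf\{s\in[0,1]:\theta(s)\geq 2\}$ and $\sigma_1=\sigma(t^{\ast})$. By continuity $\theta(t^{\ast})=2$, so restricting $(\sigma,\theta)$ to $[0,t^{\ast}]$ produces a monotone matching of $P[1,\sigma_1]$ with $Q[1,2]$ of width $\leq\eps$, placing $\sigma_1$ in the set $\{t\in[1,n]:\delta_F(P[1,t],Q[1,2])\leq\eps\}$. Continuity of the Fr\'echet distance in the endpoint parameter makes this set closed, and since it is nonempty and bounded above by $n$ its supremum $\gamma$ is attained. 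This simultaneously delivers the existence of the longest $\eps$-prefix $P[1,\gamma]$ and the bound $\delta_F(P[1,\gamma],Q[1,2])\leq\eps$.

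For the third claim I would set $t^{\ast\ast}=\inf\{s\in[0,1]:\sigma(s)\geq\gamma\}$ and $\theta_0=\theta(t^{\ast\ast})$; by continuity $\sigma(t^{\ast\ast})=\gamma$, and $\sigma_1\leq\gamma$ forces $t^{\ast\ast}\geq t^{\ast}$, so $\theta_0\geq 2$. The restriction of $(\sigma,\theta)$ to $[t^{\ast\ast},1]$ is already a monotone matching of $P[\gamma,n]$ with $Q[\theta_0,m]$ of width $\leq\eps$, so the task reduces to prepending a piece on which the first coordinate sits at $\gamma$ while the second climbs from $2$ up to $\theta_0$, and then concatenating. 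The entire proof thus reduces to showing that this vertical piece is free, i.e., $\|P(\gamma)-Q(t)\|\leq\eps$ for every $t\in[2,\theta_0]$.

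The hard part is that verification, and this is where the full hypothesis $l_Q>(1+\sqrt{d})\eps$ is consumed. My plan is to invoke the $(\sqrt{d}\eps)$-Ball lemma (\lemref{fp}) to obtain $P[\alpha,\gamma]\subseteq B(q_2,\sqrt{d}\eps)$, where $\alpha$ is the first parameter at which $P$ enters $B(q_2,\eps)$. Combining this with $\|P(\sigma(s))-Q(\theta(s))\|\leq\eps$ along the matching yields $Q(\theta(s))\in B(q_2,(1+\sqrt{d})\eps)$ for every $s$ with $\sigma(s)\in[\alpha,\gamma]$, and since $\|q_2-q_3\|\geq l_Q>(1+\sqrt{d})\eps$ the point $q_3$ is strictly outside this ball, so the monotone $\theta$ cannot cross the value $3$. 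Hence $\theta_0<3$ and $Q([2,\theta_0])$ is the straight sub-segment of the edge $\overline{q_2q_3}$ from $q_2$ to $Q(\theta_0)$. The endpoint conditions $\|P(\gamma)-q_2\|\leq\eps$ and $\|P(\gamma)-Q(\theta_0)\|\leq\eps$ put both endpoints of that segment in $B(P(\gamma),\eps)$, and convexity of the ball places the whole segment inside, yielding the required pointwise bound; concatenating the vertical piece with the restriction of $(\sigma,\theta)$ to $[t^{\ast\ast},1]$ then produces the monotone matching witnessing $\delta_F(P[\gamma,n],Q[2,m])\leq\eps$.
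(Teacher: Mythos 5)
Your proposal is correct and follows essentially the same route as the paper's proof: restrict the given matching to obtain $\delta_F(P[1,\gamma],Q[1,2])\leq\eps$, invoke the $(\sqrt{d}\eps)$-Ball lemma together with $l_Q>(1+\sqrt{d})\eps$ to force $\theta_0<3$, and then splice in a vertical piece at $\sigma=\gamma$, using convexity of $B(P(\gamma),\eps)$ to see that the segment $Q[2,\theta_0]$ is free. The only cosmetic differences are that you use infima where the paper uses suprema for the splitting parameters, and you explicitly justify that the supremum defining $\gamma$ is attained via closedness of the prefix set, a point the paper leaves implicit.
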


\begin{proof}
	Let $(\sigma,\theta)$ be any Fr\'echet matching realizing
	$\delta_F(P,Q)\leq \eps$. This corresponds to a reachable path, which
	is shown as the concatenation of the light and dark green paths in the example
	in Fig.~\ref{fig:piecewise}. Let $t_2\in[0,1]$ be the largest value such
	that $Q({\theta(t_2)})=q_2$, hence
	$\delta_F(P[1,\sigma(t_2)],Q[1,2])\leq \eps$. By \defref{maximal},
	$\gamma$ exists with $\gamma \geq \sigma(t_2)$, and
	$\delta_F(P[1,\gamma], Q[1,2])\leq \eps$. See the brown reachable path
	corresponding to the orthogonal matching realizing
	$\delta_F(P[1,\gamma], Q[1,2])\leq \eps$ in Fig.~\ref{fig:piecewise}. In the
	remainder of this proof we construct a matching to prove that
	$\delta_F(P[\gamma, n], Q[2,m])\leq \eps$ (the concatenation of the pink
	and dark green paths).
	
	Let $t_\gamma \in[0,1]$ be the largest value such that
	$P(\sigma(t_\gamma))=P(\gamma)$. By \lemref{fp},
	$P[\sigma(t_2),\gamma] \subseteq B(q_2,\sqrt{d}\eps)$.  Now let
	$t_3\in[0,1]$ be the smallest value such that $Q(\theta(t_3))=q_3$. We
	have $\|q_2-q_3\|>(1+\sqrt{d})\eps$, therefore
	$B(q_2,\sqrt{d}\eps)\cap B(q_3,\eps)=\emptyset$ and thus
	$(\sigma,\theta)$ cannot match $q_3$ to any point in
	$P[\sigma(t_2),\gamma]$. Therefore, $\sigma(t_2)\leq
	\gamma=\sigma(t_\gamma) <\sigma(t_3)$, and correspondingly
	$\theta(t_2) \leq \theta(t_\gamma) < \theta(t_3)$.
	
	Now we construct a new matching $(\bar{\sigma},\bar{\theta})$
	realizing $\delta_F(P[\gamma, n], Q[2,m])\leq \eps$ as follows:
	$\bar{\sigma}(t)=\sigma(t)$ and $\bar{\theta}(t)=\theta(t)$ for all
	$t_\gamma\leq t \leq 1$ (dark green reachable path). On the other
	hand, since $\|P(\gamma) - q_2\|\leq \eps$ (pink point) and
	$\|P(\gamma) - Q({\theta(t_\gamma)})\| \leq \eps$ (dark green
	point), we know that $Q[2,\theta(t_\gamma)]\subseteq
	B(P(\gamma),\eps)$, i.e., the pink vertical segment is free. We set,
	$\bar{\sigma}(t)=\gamma$ and
	$\bar{\theta}(t)=\frac{t_\gamma-t}{t_\gamma}\cdot 2
	+\frac{t}{t_\gamma}\cdot \theta(t_\gamma)$ for all $t_2 \leq t \leq
	t_\gamma$ (pink reachable path). Therefore, we have
	$\delta_F(P[\gamma, n], Q[2,m])\leq \eps$, which completes the proof.
\end{proof}

Now since by~\lemref{pwmatching} we have $\delta_F(P[1,\gamma],Q[1,2])\leq \eps$, ~\lemref{monotonicity} implies that the matching between $P[1,\gamma]$ and $Q[1,2]$ is orthogonal. 
Let $P(x)$ be the last vertex of $P[1,\gamma]$ and let $Q(x')$ be its closest point on $Q[1,2]$, for some $x<\gamma$ and $x'\leq 2$. Note that if $\|P(\gamma)-P(x)\|$ is shorter than $2\eps$, we can adjust the orthogonal matching by simply mapping all points on $P[x,\gamma]$ to $Q[x',2]$.
In addition, if $P$ and $Q$ have long edges then the free-space diagram is simpler than in the general case, 
since the entire vertical space (the pink segment in Fig.~\ref{fig:piecewise}) between the two points $(\gamma,2)$ and $(\gamma,\theta(t_\gamma))$ has to be free and cannot contain any blocked points. 


\subsection{The Decision Algorithm} \label{subsec:thealg}
In this section we present a linear time decision algorithm using the
properties provided in~\subsecref{strprop} and \subsecref{strextend}. In \subsubsecref{regularCase} we consider the case that $l_P>2\eps$ and $l_Q>(1+\sqrt{d})\eps$. In \subsubsecref{geq} we show that this approach can be generalized to the case that $l_P\geq 2\eps$ and $l_Q\geq(1+\sqrt{d})\eps$, and in \subsubsecref{4eps} we generalize the approach to the case that there is only an edge length assumption on $Q$.

\subsubsection{Long Edges with $l_P>2\eps$ and $l_Q>(1+\sqrt{d})\eps$} \label{subsubsec:regularCase}

At the heart of our decision algorithm is the greedy algorithm presented in
\algref{directedDecider}. The input to this \textsc{DecisionAlgorithm}
are two polygonal curves $P$ and $Q$, and $\eps>0$. The algorithm
assumes that $P$ and $Q$ have long edges. In each iteration the
function \LEP~returns $\gamma$, where $P[s,\gamma]$ is the longest
$\eps$-prefix of $P[s,n]$ with respect to $Q[i-1,i]$, if it
exists. Here, $s$ is the parameter where $P(s)$ is the endpoint of
the previous longest $\eps$-prefix with respect to $Q[i-2,i-1]$. At
any time in the algorithm, if $\gamma=null$, this means that the
corresponding longest $\eps$-prefix does not exist and then ``No'' is
returned. Otherwise, the next edge of $Q$ is processed. This continues
iteratively until all edges have been processed, or $\gamma_i$ does not exist for some $i=2,\cdots,m$.

\begin{algorithm}[h] 
	\DontPrintSemicolon
	\SetKwFunction{DirectedDecider}{\textsc{DirectedDecider}}
	\SetKwFunction{DecisionAlgorithm}{\textsc{DecisionAlgorithm}}
	\SetKwProg{myalg}{}{}{}
	\SetKwFunction{LastIntersectionPoint}{\textsc{LastIntersectionPoint}}
	\SetKwFunction{FirstIntersectionPoint}{\textsc{FirstIntersectionPoint}}
	\SetKwFunction{MaximalPoint}{\textsc{MaximalPoint}}
	\SetKwFunction{LongestEpsilonPrefix}{\LEP}
	%
	\caption{Decide whether $\delta_F(P,Q)\leq\eps$}
	\label{alg:directedDecider}	
	
	\BlankLine
	\myalg{\DecisionAlgorithm{$P[1, n], Q[1, m], \eps$}}{
		\hspace{-1mm}\tcp*[l]{Assumes $l_P>2\eps~\mbox{and}~l_Q>(1+\sqrt{d})\eps$
		}
		$\gamma_1 \leftarrow 1$\;  
		\For{$i \leftarrow 2$ \KwTo $m$}
		{
			$\gamma_i \leftarrow$\LongestEpsilonPrefix{$P[\gamma_{i-1},n], Q[i-1,i],\eps$}\; 
			\lIf{$\gamma = null$}
			{
				\KwRet{``No''}
			}
			{
				$s \leftarrow \gamma$\;
			}
		}
		\lIf{$\gamma <n$}
		{\KwRet{``No''}}
		{\KwRet{``Yes''}}		
	}
\end{algorithm}

%
%
The $\LEP(P[\gamma_{i-1},n],Q[i-1,i],\eps)$ procedure is implemented as follows:
We use Alt and Godau's~\cite{ag-cfdb-95}  dynamic programming algorithm to
compute the reachability information in \FSD$(P[\gamma_{i-1},n],Q[i-1,i])$, 
which computes all $(s,t)$ for which $\delta_F(P[\gamma_{i-1},s],Q[i-1,t])\leq\eps$.
This takes linear time in the complexity of $P[\gamma_{i-1},n]$ since $Q[i-1,i]$ is a single
segment. 
Then $\gamma_i$ is the largest $s$ for which $\delta_F(P[\gamma_{i-1},s],Q[i-1,i])\leq\eps$.
Note that $P(s)$ has to lie on the boundary of $B(q_i,\eps)$.
If no such $s$ exists then $\gamma_i=null$.
%
%
We now prove the correctness of our decision algorithm.

\begin{theorem}[Correctness] \label{thm:correctness1}
	Let $\eps >0$, and let $P:[1,n]\RD$ and $Q:[1,m]\RD$ be two polygonal curves such that $l_P>2\eps$ and $l_Q>(1+\sqrt{d})\eps$.	Then \textsc{DecisionAlgorithm}($P,Q,\eps$) returns ``Yes'' if and only if 
	$\delta_F(P,Q)\leq \eps.$
\end{theorem}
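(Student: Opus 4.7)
The plan is to prove both directions of the biconditional separately.

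For the ``if'' direction, suppose the algorithm returns ``Yes''. Then it produced parameters $1 = \gamma_1 \leq \gamma_2 \leq \cdots \leq \gamma_m = n$ where each $\gamma_i$ is the longest $\eps$-prefix of $P[\gamma_{i-1}, n]$ with respect to $Q[i-1, i]$, so in particular $\delta_F(P[\gamma_{i-1}, \gamma_i], Q[i-1, i]) \leq \eps$ with $P(\gamma_i)$ on $\partial B(q_i, \eps)$. By \lemref{monotonicity}, each such piece admits an orthogonal matching of width at most $\eps$; by the definition of orthogonal matching this piece sends $Q$-parameter $i-1$ to $P$-parameter $\gamma_{i-1}$ and $Q$-parameter $i$ to $P$-parameter $\gamma_i$. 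I would concatenate the $m-1$ orthogonal matchings at the common interface points $(\gamma_i, i)$ in parameter space, obtaining a single monotone re-parameterization pair of $P$ and $Q$ whose width is at most $\eps$, so $\delta_F(P, Q) \leq \eps$.

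For the ``only if'' direction, suppose $\delta_F(P, Q) \leq \eps$. I would proceed by induction on $m$. The base case $m = 2$ is immediate from \defref{maximal}: since $\delta_F(P, Q[1,2]) \leq \eps$, the value $n$ lies in the set over which the longest $\eps$-prefix takes its maximum, hence $\gamma_2 = n$ and the algorithm returns ``Yes''. For the inductive step, \lemref{pwmatching} (the Cutting Lemma) supplies the first interface parameter: the longest $\eps$-prefix $\gamma$ of $P$ with respect to $Q[1,2]$ exists, and moreover $\delta_F(P[\gamma, n], Q[2, m]) \leq \eps$. Hence the first iteration of the algorithm finds $\gamma_2 = \gamma$, and the remaining iterations are identical to running the algorithm on $(P[\gamma_2, n], Q[2, m])$, which returns ``Yes'' by the inductive hypothesis.

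The main obstacle lies in the inductive step: strictly applying the Cutting Lemma to $(P[\gamma_2, n], Q[2, m])$ requires $l_{P[\gamma_2, n]} > 2\eps$, but the first ``edge'' of $P[\gamma_2, n]$ runs from $P(\gamma_2) \in \partial B(q_2, \eps)$ to the next vertex of $P$ and may be arbitrarily short. I would resolve this by strengthening the inductive hypothesis so that only the interior edges of $P$ must exceed $2\eps$, and then verifying that the proofs of \lemref{fp} and \lemref{pwmatching} still go through under this relaxed assumption. The crucial ingredient is the separation $\|q_2 - q_3\| > (1+\sqrt{d})\eps$, which together with $P(\gamma_2) \in \partial B(q_2, \eps)$ keeps the truncated first edge well outside $B(q_3, \eps)$, so the non-overlap arguments among consecutive $\eps$-balls at the remaining (full) vertices of $P$ suffice to carry the structural reasoning through.
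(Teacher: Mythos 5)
Your argument follows essentially the same route as the paper's: the ``if'' direction stitches together per-segment matchings at the interface points $(\gamma_i,i)$, and the ``only if'' direction is an induction driven by the Cutting Lemma (\lemref{pwmatching}). A few small comments.

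In the ``if'' direction, invoking \lemref{monotonicity} to produce \emph{orthogonal} matchings is more than you need. By the definition of $\LEP$, each piece already satisfies $\delta_F(P[\gamma_{i-1},\gamma_i],Q[i-1,i])\leq\eps$, and \emph{any} Fr\'echet matching of that piece already carries $(\gamma_{i-1},i-1)$ to $(\gamma_i,i)$ at its endpoints (this is what a re-parametrization does, regardless of orthogonality). So the concatenation works with the raw Fr\'echet matchings; the paper does exactly this. Orthogonality only becomes relevant later in the paper (e.g., \obsref{piecewise} and the optimization algorithm), and it actually fails at the degenerate edge lengths treated in \subsubsecref{geq} --- one more reason not to lean on it here.

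Your induction on $m$ is a reframing of the paper's induction on $i$ within the fixed pair $(P,Q)$, and the two are equivalent. Your treatment of the terminal case is in fact cleaner: the paper argues by contradiction via the contrapositive of \lemref{pwmatching}, whereas you observe directly that $\delta_F(P[\gamma_{m-1},n],Q[m-1,m])\leq\eps$ forces $\gamma_m=n$ by \defref{maximal}.

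The short-first-edge issue you flag is a genuine subtlety, and the paper's proof does not mention it: once $\gamma_{i-1}$ is a non-integer parameter, the subcurve $P[\gamma_{i-1},n]$ begins with a truncated edge that may be shorter than $2\eps$, so the hypothesis $l_P>2\eps$ of \lemref{pwmatching} (and, underneath it, of \lemref{fp} and \lemref{monotonicity}) is not literally satisfied. Your repair plan --- only \emph{interior} edges need be long, and the truncated start stays away from $B(q_i,\eps)$ because $P(\gamma_{i-1})\in\partial B(q_{i-1},\eps)$ and $\|q_{i-1}-q_i\|>(1+\sqrt d)\eps$ --- identifies the right reason the argument survives: the place where \lemref{fp} actually needs the disjoint-balls argument is near $q_i$, and all vertices of $P[\gamma_{i-1},n]$ in that region are genuine vertices of $P$. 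You would still need to check that the hyperplane-crossing case in \lemref{monotonicity} with $\H\cap e=\emptyset$ and $\H\cap B(e_1,\eps)\neq\emptyset$ cannot be triggered by the truncated first edge, but this too follows from the same separation; carrying those details through would make the proof fully rigorous (more so than the paper's, which silently applies the lemma to the truncated subcurve).
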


\begin{proof}
	If the algorithm returns ``Yes'' then the sequence $\lbrace
	(q_i,\gamma_i) \rbrace$ for all $i=1,\ldots, m$ with $\gamma_1=1$
	and $\gamma_m=n$ describes a monotone matching that realizes
	$\delta_F(P,Q)\leq \eps$.
	
	If $\delta_F(P,Q) \leq \eps$, then we prove
	by induction on $i$ that the algorithm returns ``Yes'', i.e., all
	longest $\eps$-prefixes $(P[1,\gamma_2], P[\gamma_2,\gamma_3], \ldots
	,P[\gamma_{m-1},\gamma_m])$ of $P$ with respect to the corresponding
	segments of $Q$ exist. For $i=2$, following \lemref{pwmatching},
	$\gamma_2$ exists and can be found by the algorithm.  For any $i>2$,
	the algorithm has determined $\gamma_2,\ldots, \gamma_{i-1}$ already
	and by \lemref{pwmatching}, $\delta_F(P[\gamma_{i-1}, n],Q[i-1, m])
	\leq \eps$. Another application of \lemref{pwmatching} yields that
	$\delta_F(P[\gamma_{i-1}, \gamma_i],Q[i-1, i]) \leq \eps$ and
	$\delta_F(P[\gamma_i, n],Q[i, m]) \leq \eps$.

	In the case that $i=m-1$ it remains to prove that $\gamma_{i+1}=\gamma_m=n$. For the sake of contradiction, assume $\gamma_m < n$. Since $P[\gamma_{m-1},\gamma_m]$ is the longest $\eps$-prefix, there is no other $\gamma'_m \in (\gamma_m,n]$ such that   $\delta_F(P[\gamma_{m-1},\gamma'_m],Q[m-1,m]) \leq \eps$. Consequently, $\delta_F(P[\gamma_{m-1},\gamma'_m],Q[m-1,m]) > \eps$ and therefore $\delta_F(P[\gamma_{m-1},n],Q[m-1,m]) > \eps$. Applying the contrapositive of \lemref{pwmatching} to $P[\gamma_{m-1},n]$ and $Q[m-1,m]$ yields $\delta_F(P,Q) > \eps$, which is a contradiction. Therefore $\gamma_m=n$ and the algorithm returns ``Yes'' as claimed.
\end{proof}




\begin{observation}[Piecewise Orthogonal Matching] \label{obs:piecewise}
	If $\delta_F(P,Q)\leq \eps$, then the sequence $\lbrace
	\gamma_{1},\gamma_2,\ldots, \gamma_n\rbrace$ computed by
	\algref{directedDecider} induces a Fr\'echet matching that maps
	$P(\gamma_{i})$ to $q_i$, and therefore $\delta_F(P,Q)\leq\eps$ for all
	$i=2,\ldots, m$. ~\lemref{monotonicity} implies that the matching between
	$P[\gamma_{i-1},\gamma_{i}]$ and $Q[i-1,i]$ is orthogonal.
\end{observation}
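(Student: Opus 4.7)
The plan is to derive the observation from the chain Theorem~\ref{thm:correctness1} $\Rightarrow$ iterated Lemma~\ref{lem:pwmatching} $\Rightarrow$ Lemma~\ref{lem:monotonicity}. Since $\delta_F(P,Q)\leq\eps$, Theorem~\ref{thm:correctness1} guarantees that \textsc{DecisionAlgorithm} returns ``Yes'', so every $\gamma_i$ exists and the sequence computed by the algorithm satisfies $\gamma_1=1$, $\gamma_m=n$, and $\gamma_1\leq\gamma_2\leq\cdots\leq\gamma_m$ by construction of \LEP.

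Next, I would prove by induction on $i\in\{2,\ldots,m\}$ the two statements $\delta_F(P[\gamma_{i-1},\gamma_i],Q[i-1,i])\leq\eps$ and $\delta_F(P[\gamma_i,n],Q[i,m])\leq\eps$. The base case $i=2$ is exactly what the Cutting Lemma gives when applied to $(P,Q)$. For the inductive step, apply Lemma~\ref{lem:pwmatching} to the pair $(P[\gamma_{i-1},n],Q[i-1,m])$, whose Fr\'echet distance is at most $\eps$ by the second conclusion of the previous step; the lemma yields both required bounds, closing the induction. Concatenating the resulting $m-1$ piecewise Fr\'echet matchings along their shared landmarks $(P(\gamma_i),q_i)$ produces a single monotone matching for the full curves $P$ and $Q$ that sends $P(\gamma_i)$ to $q_i$ for every $i$, as claimed in the first half of the observation.

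To upgrade each piecewise matching to an orthogonal one, I would apply Lemma~\ref{lem:monotonicity} to the pair $(P[\gamma_{i-1},\gamma_i],Q[i-1,i])$: because $Q[i-1,i]$ is a single segment and the Fr\'echet distance is at most $\eps$, the implication $(1)\Rightarrow(2)\Leftrightarrow(3)$ gives an orthogonal matching between the subcurve and the segment, and stacking these orthogonal pieces recovers the global matching from the previous step.

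The main technical obstacle is that the two boundary edges of $P[\gamma_{i-1},\gamma_i]$ are, in general, truncations of longer edges of $P$ and may therefore be shorter than $2\eps$, which would violate the hypothesis needed for $(1)\Rightarrow(2)$ in Lemma~\ref{lem:monotonicity}. I would handle this exactly as in the remark following the Cutting Lemma: by construction of \LEP, we have $P(\gamma_{i-1})\in\partial B(q_{i-1},\eps)$ and $P(\gamma_i)\in\partial B(q_i,\eps)$, so the short truncated head or tail can be collapsed onto $q_{i-1}$ or $q_i$ respectively without breaking monotonicity or the $\eps$-bound. The remaining interior edges are full edges of $P$, hence of length $>2\eps$, and Lemma~\ref{lem:monotonicity} applies verbatim to this interior portion to certify orthogonality.
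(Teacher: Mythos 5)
Your overall plan — Theorem~\ref{thm:correctness1} for existence of the $\gamma_i$, iterated Lemma~\ref{lem:pwmatching} for the bounds $\delta_F(P[\gamma_{i-1},\gamma_i],Q[i-1,i])\leq\eps$, and then Lemma~\ref{lem:monotonicity} per piece — is exactly the chain the paper relies on, and you correctly notice that the two boundary edges of $P[\gamma_{i-1},\gamma_i]$ may be truncated and shorter than $2\eps$, which the paper only touches on informally.

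The gap is in your proposed fix. You claim the short truncated head can be ``collapsed onto $q_{i-1}$,'' but by construction of \LEP\ we have $P(\gamma_{i-1})\in\partial B(q_{i-1},\eps)$ and $\gamma_{i-1}$ is the \emph{largest} parameter realizing the previous prefix, so the truncated head $P[\gamma_{i-1},\lceil\gamma_{i-1}\rceil]$ continues \emph{outward} along the edge and immediately leaves $B(q_{i-1},\eps)$. Collapsing it to the single point $q_{i-1}$ therefore breaks the $\eps$-bound, not preserves it. A symmetric objection applies to the tail: $P(\gamma_i)\in\partial B(q_i,\eps)$ only puts one endpoint of the truncated tail on the boundary, and the vertex $P(x)=p_{\lfloor\gamma_i\rfloor}$ need not be in $B(q_i,\eps)$. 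The paper's own remark after Lemma~\ref{lem:pwmatching} does something different: it maps the truncated tail $P[x,\gamma]$ to a \emph{sub-segment} $Q[x',2]$ (where $Q(x')$ is the nearest point to $P(x)$ on $Q$), not to the point $q_2$, precisely to avoid the containment requirement your collapse would need. (The paper does not explicitly address the truncated head at all.) The sound way to handle the short boundary pieces is to observe that each is a sub-segment of a single full edge of $P$, hence contributes no internal vertex at which a direction change could occur, while all consecutive interior vertices of $P[\gamma_{i-1},\gamma_i]$ remain more than $2\eps$ apart so the disjoint-interval argument of Lemma~\ref{lem:monotonicity}'s $(1)\Rightarrow(2)$ step still forbids backtracking there; the truncated pieces are then matched by orthogonal projection (or by the vertical pre/post segment built into Definition~\ref{def:orthomatch}) rather than by collapsing to a point.
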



We summarize this section with the following theorem:
\begin{theorem}[Runtime] \label{thm:greedyalg}
	Let $\eps >0$, and let $P:[1,n]\RD$ and $Q:[1,m]\RD$ be two polygonal curves such that $l_P>2\eps$ and $l_Q>(1+\sqrt{d})\eps$. Then there exists a greedy decision algorithm, \algref{directedDecider},
	that can determine whether $\delta_F(P,Q) \leq \eps$ in $O(n+m)$ time.
\end{theorem}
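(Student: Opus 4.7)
The plan is to separate the two components of the statement. Correctness is immediate from \thmref{correctness1}, so the proof focuses entirely on the $O(n+m)$ running time of \algref{directedDecider}. The strategy is to make the cost of iteration $i$ proportional to the ``local'' advance $\gamma_i - \gamma_{i-1}$ along $P$, and then telescope.

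First I would fix the implementation of \LEP. In iteration $i$ I run Alt-Godau reachability propagation through the single-row strip $\FSD(P[\gamma_{i-1},n],Q[i-1,i])$ from left to right, spending $O(1)$ per cell since $Q[i-1,i]$ is a single segment. The procedure halts as soon as the right boundary of the current cell becomes unreachable, which corresponds exactly to $P$ exiting the cylinder $C(Q[i-1,i],\eps)$; at that point no subsequent cell can contribute to the longest $\eps$-prefix. The returned value $\gamma_i$ is the rightmost point on the top boundary that was marked reachable during the propagation, or $null$ if none exists.

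Next I would bound the number of cells processed in iteration $i$ by $O(1+\lceil\gamma_i\rceil-\lfloor\gamma_{i-1}\rfloor)$ whenever $\gamma_i$ exists. The argument combines the structural lemmas with the long-edge hypothesis. By \lemref{monotonicity} applied to each valid prefix $P[\gamma_{i-1},s]$, such a prefix must be $(Q[i-1,i],\eps)$-monotone, so the projection of $P$ onto the line supporting $Q[i-1,i]$ is monotone and the reachable region on the top boundary is well-structured, ending at $\gamma_i$. By \lemref{fp}, $P[\alpha_i,\gamma_i]$ lies inside $B(q_i,\sqrt{d}\eps)$; and because $l_P>2\eps$, only $O(1)$ vertices of $P$ can sit in the narrow region of $C(Q[i-1,i],\eps)$ past $\gamma_i$ before $P$'s monotone projection pushes it outside the cylinder. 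Hence the halt condition fires within $O(1)$ additional cells past $\lceil\gamma_i\rceil$. If $\gamma_i$ does not exist, the iteration costs at most $O(n-\gamma_{i-1})$, but this occurs at most once since the algorithm immediately returns ``No''.

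Finally I would telescope: $\sum_{i=2}^{m}O(1+\lceil\gamma_i\rceil-\lfloor\gamma_{i-1}\rfloor)=O(m+\gamma_m-\gamma_1)=O(m+n)$, plus at most $O(n)$ from a single failure iteration, giving the claimed $O(n+m)$ total time. The main obstacle is the per-iteration estimate, namely proving that after reachability on the top boundary has been exhausted (at $\gamma_i$) the Alt-Godau propagation terminates within $O(1)$ further cells; this is exactly where \lemref{monotonicity}, \lemref{fp}, and the long-edge hypothesis $l_P>2\eps$, $l_Q>(1+\sqrt{d})\eps$ are jointly needed, since they preclude $P$ from lingering in $C(Q[i-1,i],\eps)$ or re-entering $B(q_i,\eps)$ beyond $\gamma_i$.
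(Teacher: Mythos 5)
Your proof has the same skeleton as the paper's: attribute to the call $\LEP(P[\gamma_{i-1},n],Q[i-1,i],\eps)$ a cost of $O(\ceil{\gamma_i-\gamma_{i-1}}+1)$, then telescope $\sum_{i=2}^m(\ceil{\gamma_i-\gamma_{i-1}}+1)$ to $O(n+m)$, with correctness delegated to \thmref{correctness1}. The telescoping step you write is exactly the paper's computation. Where you go further is the per-iteration bound: the paper simply asserts that \LEP\ runs in $O(\ceil{\gamma_i-\gamma_{i-1}}+1)$ time, while you try to justify it by arguing that once the reachable region on the top boundary has been exhausted, the Alt--Godau propagation through $\FSD(P[\gamma_{i-1},n],Q[i-1,i])$ halts within $O(1)$ additional cells. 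That is the right obstruction to address, and the long-edge hypothesis $l_P>2\eps$ is indeed what makes it go through. However, your justification as written is informal: the claim that ``only $O(1)$ vertices of $P$ can sit in the narrow region of $C(Q[i-1,i],\eps)$ past $\gamma_i$'' is stated but not established, and the appeal to \lemref{fp} does not actually bear on it (that lemma concerns the prefix up to $\gamma_i$, not the tail beyond it). The correct reason is sharper: two consecutive vertices that both lie in $C(Q[i-1,i],\eps)\setminus B(q_i,\eps)$ with projections onto the supporting line confined to an interval of length $\leq\eps$ near $q_i$ are each within $\sqrt{2}\eps$ of $q_i$, so the straight edge between them, being longer than $2\eps$, must pass through $B(q_i,\eps)$ --- contradicting the maximality of $\gamma_i$. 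So the propagation cannot pass more than one full cell beyond $\gamma_i$. This is a gap in both your write-up and the paper's; your proposal at least flags it, but does not close it.
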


\begin{proof}
	The number of vertices in $P[\gamma_{i-1},\gamma_{i}]$ is at most $\ceil{\gamma_{i}-\gamma_{i-1}}+1$. The algorithm greedily finds the longest $\eps$-prefix per edge $Q[i-1,i]$ by calling \LEP$(P[s,n], Q[i-1,i],\eps)$ in $O(\ceil{\gamma_{i}-\gamma_{i-1}}+1)$ time. The for-loop iterates over $m-1$ edges, thus the runtime is $\sum_{i=2}^{m} (\ceil{\gamma_{i}-\gamma_{i-1}}+1) < \sum_{i=2}^{m}(\gamma_{i}-\gamma_{i-1}+2)=\gamma_m-\gamma_{1}+2(m-1)= n-1+2m-2 = O(n+m)$.
	
\end{proof}

\subsubsection{Long Edges with $l_P\geq 2\eps$ and $l_Q\geq(1+\sqrt{d})\eps$} \label{subsubsec:geq}

We now consider the slightly more general case that $l_P\geq 2\eps$ and $l_Q\geq(1+\sqrt{d})\eps$. The optimization
algorithm presented in \subsecref{opt} makes use of this case.
Clearly, if $l_P> 2\eps$ and $l_Q>(1+\sqrt{d})\eps$ then \thmref{correctness1} applies as usual.  
If $l_P=2\eps$ or $l_Q=(1+\sqrt{d})\eps$ then \algref{directedDecider} can still
be run, however  the Fr\'echet matching induced by the
$\gamma_i$ is not necessarily a piecewise orthogonal matching anymore,
which means \obsref{piecewise} may not hold, see
Fig.~\ref{fig:twoepsmonotone}. 
However, we can still prove a slightly modified correctness theorem.
%


\begin{figure} 
	\centering
	\includegraphics[width=.6\textheight]{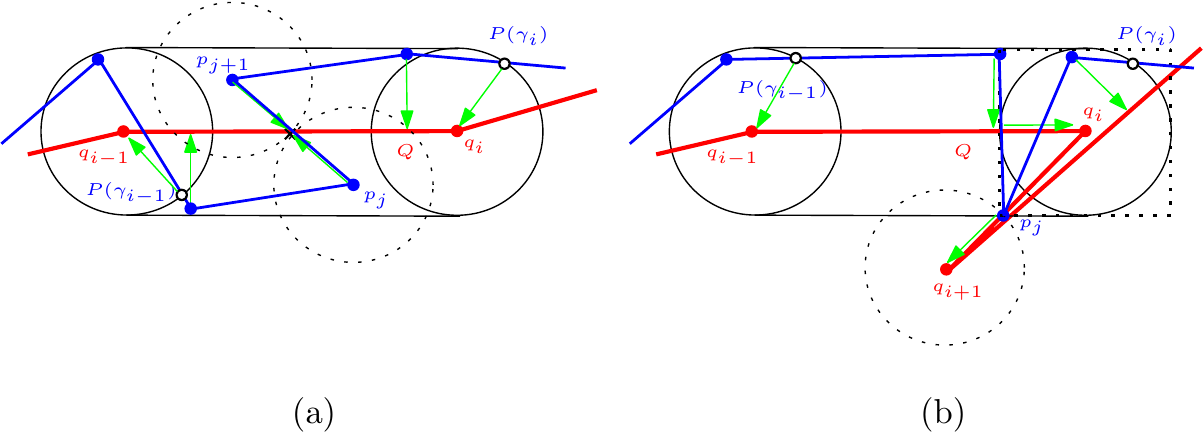}
	\caption{Two examples of matchings  between $P[\gamma_{i-1},\gamma_i]$ and $Q[i-1,i]$ that are not (piecewise) orthogonal. Matchings are indicated with green arrows. All balls have radius $\eps$.
		(a) An example where $l_P=\|p_{j+1}-p_j\|= 2\eps$
		and $\gamma_i$ exists, but the induced Fr\'echet matching is not orthogonal.
		(b) 
		An example where $\|q_{i+1}-q_i\|=(1+\sqrt{d})\eps$ and
		$B(q_{i+1},\eps)\cap B(q_i,\sqrt{d}\eps)\cap P= p_j$. Although $\gamma_{i}$ exists, a matching that is not piecewise orthogonal of width exactly $\eps$ exists.}
	\label{fig:twoepsmonotone}
\end{figure}


\begin{theorem} \label{thm:equal2eps}
	Let $\eps>0$, and
	$l_P\geq 2\eps$ and $l_Q\geq (1+\sqrt{d})\eps$.
	If
	\textsc{DecisionAlgorithm$(P,Q,\eps)$}
	returns ``Yes'' then
	$\delta_F(P,Q)\leq \eps$. If it returns ``No'' then $\delta_F(P,Q)\geq \eps$.
\end{theorem}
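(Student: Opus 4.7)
My plan is to prove the two implications of the theorem separately.

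For the direction ``Yes''$\Rightarrow \delta_F(P,Q)\leq \eps$, I would argue exactly as in \thmref{correctness1}: whenever the algorithm succeeds, its output $\gamma_1=1,\gamma_2,\ldots,\gamma_m=n$ directly witnesses $\delta_F(P[\gamma_{i-1},\gamma_i],Q[i-1,i])\leq \eps$ for every $i$, and since each witnessing Fr\'echet matching sends $P(\gamma_i)$ to $q_i$, these matchings concatenate at the shared endpoints into a single monotone matching of width at most $\eps$. This argument never invokes the strict inequality $l_P>2\eps$, so it transfers verbatim to the present setting.

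For the other direction I would prove the contrapositive: if $\delta_F(P,Q)<\eps$ then the algorithm returns ``Yes''. First I would pick a slack parameter $\eps'$ with $\delta_F(P,Q)\leq \eps'<\eps$. Because $l_P\geq 2\eps>2\eps'$ and $l_Q\geq (1+\sqrt{d})\eps>(1+\sqrt{d})\eps'$, \thmref{correctness1} applies at $\eps'$, so the $\eps'$-run of \algref{directedDecider} succeeds with a sequence $\gamma'_1=1,\gamma'_2,\ldots,\gamma'_m=n$. I would then set up an induction on $i$ comparing the two runs and prove simultaneously that $\gamma_i$ exists, $\gamma'_i\leq \gamma_i$, and (for $i<m$) $\gamma_i<\gamma'_{i+1}$.

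The base $i=1$ is immediate. For the inductive step, the previous upper bound $\gamma_{i-1}\leq \gamma'_i$ puts us in the favourable regime. Applying \lemref{monotonicity} at parameter $\eps'$ (legal since $l_P>2\eps'$) yields that $P[\gamma'_{i-1},\gamma'_i]$ is $(Q[i-1,i],\eps')$-monotone, so its subcurve $P[\gamma_{i-1},\gamma'_i]$ lies in $C(Q[i-1,i],\eps)$ and is monotone with respect to the line supporting $Q[i-1,i]$; combined with $\|P(\gamma_{i-1})-q_{i-1}\|\leq \eps$ (inherited from the previous $\eps$-step) and $\|P(\gamma'_i)-q_i\|\leq \eps'<\eps$, this makes $P[\gamma_{i-1},\gamma'_i]$ be $(Q[i-1,i],\eps)$-monotone. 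The implication $(2)\Rightarrow (1)$ of \lemref{monotonicity}, which needs no edge-length hypothesis, then delivers $\delta_F(P[\gamma_{i-1},\gamma'_i],Q[i-1,i])\leq \eps$, so $\gamma_i$ exists and $\gamma_i\geq \gamma'_i$. For the upper bound, $\delta_F(P[\gamma_{i-1},\gamma_i],Q[i-1,i])\leq \eps$ forces $P[\gamma_{i-1},\gamma_i]\subseteq C(Q[i-1,i],\eps)$, while $\|P(\gamma'_{i+1})-q_{i+1}\|\leq \eps'$ and $\|q_{i+1}-Q[i-1,i]\|\geq (1+\sqrt{d})\eps$ together give $\|P(\gamma'_{i+1}),Q[i-1,i]\|\geq (1+\sqrt{d})\eps-\eps'>\eps$, so $P(\gamma'_{i+1})\notin C(Q[i-1,i],\eps)$ and hence $\gamma'_{i+1}>\gamma_i$. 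At $i=m$ the induction yields $\gamma_m\geq \gamma'_m=n$, so $\gamma_m=n$ and the $\eps$-run returns ``Yes''.

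The main obstacle is that the non-strict hypothesis $l_P\geq 2\eps$ blocks the direct use of \lemref{monotonicity}'s equivalence $(1)\Leftrightarrow (2)$ at parameter $\eps$, which was the workhorse in the proof of \thmref{correctness1}. The remedy, executed above, is to perform all monotonicity reasoning at the slack $\eps'<\eps$, where the strict inequality is recovered for free, and to transport the resulting geometric information to the $\eps$-run via subcurve inclusion (in particular containment in $C(Q[i-1,i],\eps)$) rather than redoing the structural analysis at $\eps$ from scratch.
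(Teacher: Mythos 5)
Your overall architecture matches the paper's: introduce a slack parameter $\eps'$ strictly between $\delta_F(P,Q)$ and $\eps$, run \thmref{correctness1} at $\eps'$ to obtain the sequence $\gamma'_i$, and compare the two runs inductively. (The paper works with $\eps^*=\delta_F(P,Q)$ itself, but that is the same idea.) Your derivation of the lower bound $\gamma_i\geq\gamma'_i$ via \lemref{monotonicity}'s $(2)\Rightarrow(1)$ is a perfectly valid alternative to the paper's explicit three-piece re-parameterization, and the ``Yes'' direction is handled correctly.

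However, the argument for the upper bound $\gamma_i<\gamma'_{i+1}$ contains a genuine gap. You assert $\|q_{i+1},Q[i-1,i]\|\geq(1+\sqrt{d})\eps$, but the hypothesis $l_Q\geq(1+\sqrt{d})\eps$ only bounds the vertex-to-vertex distance $\|q_{i+1}-q_i\|$; it says nothing about the distance from $q_{i+1}$ to the interior of the \emph{previous} segment $Q[i-1,i]$. For instance, with $Q[i-1,i]$ a long horizontal segment, one can place $q_{i+1}$ at small height $\delta$ above a point of $Q[i-1,i]$ that is far from both endpoints; then $\|q_{i+1}-q_i\|$ is large (so the edge-length constraint is respected) while $\|q_{i+1},Q[i-1,i]\|=\delta$ can be arbitrarily small. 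Consequently you cannot conclude $P(\gamma'_{i+1})\notin C(Q[i-1,i],\eps)$, and the inductive invariant $\gamma_i<\gamma'_{i+1}$ is not established. The paper avoids exactly this pitfall: its ordering argument at step $i$ (the chain $\gamma^*_{i-1}<\gamma_{i-1}<\gamma^*_i<x$) is obtained from the disjointness of the balls $B(q_{i-1},\eps)$ and $B(q_i,\eps^*)$ centered at \emph{consecutive} vertices sharing an edge, together with the monotonicity of $P[\gamma^*_{i-1},\gamma^*_i]$ along the supporting line of $Q[i-1,i]$ — both of which genuinely follow from the edge-length hypothesis — rather than from a distance between a vertex and a non-incident segment. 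To repair your proof, replace the $q_{i+1}$-to-$Q[i-1,i]$ distance estimate by the ball-disjointness/monotonicity argument at the shared vertex, as in the paper.
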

\begin{proof}
	Let $\eps^*= \delta_F(P,Q)$.
	If $l_P> 2\eps$ and $l_Q>(1+\sqrt{d})\eps$ then \thmref{correctness1} applies as usual.  
	So, assume $l_P=2\eps$ or $l_Q=(1+\sqrt{d})\eps$.
	If the algorithm returns ``Yes'', then  we know that $\delta_F(P[\gamma_{i-1},\gamma_i],Q[i-1,i])\leq\eps_0$ for all $i=2,\ldots,m$, and therefore $\eps^*\leq\eps$.

	In the remainder of this proof we show the contrapositive of the
	second part: If $\eps^*=\delta_F(P,Q)<\eps$ then
	\textsc{DecisionAlgorithm$(P,Q,\eps)$} returns ``Yes''.
	%
	%
	So, assume $\eps^*<\eps$. Then, by \thmref{correctness1}, 
	\textsc{DecisionAlgorithm$(P,Q,\eps^*)$} returns ``Yes'',
	which means that all
	$\gamma^*_i=\LEP(P[\gamma^*_{i-1},n],Q[i-1,i],\eps^*)$ exist for
	all $i=2,\ldots,m$, and  $\gamma^*_1=1$. We prove by induction that all $\gamma_i=\LEP(P[\gamma_{i-1},n],Q[i-1,i],\eps)$ exist as well. The inductive base is trivial to show since $\gamma_1=\gamma^*_1=1$. Now as an inductive hypothesis let $i>1$ be the largest integer value for which $\gamma_{i-1}$ exists and is computed.  
	In the following we show that $\gamma_i=\LEP(P[\gamma_{i-1},n],Q[i-1,i],\eps)$ exists and can be computed.
	Let $P(x)$ be the first point along $P[\gamma^*_i,n]$ on the boundary of
	$B(q_i,\eps)$. We have $\gamma^*_{i-1}<\gamma_{i-1}<\gamma^*_i<x$, where the first inequality
	follows from $B(q_{i-1},\eps^*)\subset B(q_{i-1},\eps)$, and the second inequality follows from
	$B(q_{i-1},\eps)\cap B(q_i,\eps^*)=\emptyset$ because $l_Q>2\eps$.
	Now let $(\sigma,\theta)$ be the Fr\'echet matching realizing
	$\delta_F(P[\gamma^*_{i-1},\gamma^*_i],Q[i-1,i])\leq \eps^*$, and
	let $t\in [0,1]$ such that $\sigma(t)=\gamma_{i-1}$.
	Then from $\gamma^*_{i-1}< \gamma_{i-1}<\gamma^*_i$ follows that
	$i-1 \leq \theta(t)\leq i$.
	We can therefore construct a piecewise re-parameterization for $P[\gamma_{i-1},x]$ and $Q[i-1,i]$ which yields:
	%
	%
	%
	\begin{align}
	\delta_F(P[\gamma_{i-1},x],Q[i-1,i]) &\leq \max\{\;\delta_F\big(P(\gamma_{i-1}),Q[i-1,\theta(t)]\big),  \notag  \\
	&\hspace{14mm} \notag \delta_F\big(P[\gamma_{i-1},\gamma^*_i],Q[\theta(t),i]\big), \,\delta_F\big(P[\gamma^*_{i},x],q_i\big)\;\} \leq \eps.  \notag 
	\end{align}

	Since $\gamma_i\geq x$, this implies that all $\gamma_{i}$ exist for all $i=2,\ldots,m$. Note that the procedure $\LEP(P[\gamma_{i-1},n],Q[i-1,i])$ can compute $\gamma_{i}$ by finding the reachable path for $(\gamma_i,i)$ across \FSD$(P[\gamma_{i-1},n],Q[i-1,i])$.
	Therefore \textsc{DecisionAlgorithm$(P,Q,\eps)$} returns ``Yes''.
	%
\end{proof}

\subsubsection{Long Edges with $l_Q>4\eps$} \label{subsubsec:4eps}

Our algorithm also can be applied to the case that one curve has arbitrary edge lengths and the other curve has edge lengths greater than $4\eps$. 

\begin{theorem}[Single Curve with Long Edges]  \label{cor:4eps}
	Let $\eps >0$, and let $P:[1,n]\RD$ and $Q:[1,m]\RD$ be two polygonal curves such that $l_P>0$ and $l_Q>4\eps$. Then there exists a greedy decision algorithm, \algref{directedDecider}, that can determine whether $\delta_F(P,Q) \leq \eps$ or $\delta_F(P,Q)>\eps$ in $O(n+m)$ time.
\end{theorem}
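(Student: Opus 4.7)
The plan is to reuse \algref{directedDecider} unchanged and only adapt the structural lemmas underlying its correctness analysis. The original pipeline---\thmref{correctness1} combined with \lemref{pwmatching}---required both $l_P > 2\eps$ and $l_Q > (1+\sqrt{d})\eps$; the former enters exclusively through the $\sqrt{d}\eps$-ball lemma (\lemref{fp}), which controls where $P$ can re-enter free space near the far endpoint of an edge of $Q$. My strategy is to replace that use of \lemref{fp} with \lemref{3delta-ball}, which gives the weaker containment $P[\alpha,\gamma] \subseteq B(e_2, 3\eps)$ but requires no hypothesis on $l_P$; the strengthened assumption $l_Q > 4\eps$ then restores the separation property needed in the cutting argument, since it implies $B(q_i, 3\eps) \cap B(q_{i+1}, \eps) = \emptyset$.

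First I would prove a \emph{generalized Cutting Lemma}: assuming only $l_Q > 4\eps$, if $\delta_F(P,Q) \leq \eps$ then the longest $\eps$-prefix $P[1,\gamma]$ of $P$ with respect to $Q[1,2]$ exists and satisfies both $\delta_F(P[1,\gamma],Q[1,2]) \leq \eps$ and $\delta_F(P[\gamma,n],Q[2,m]) \leq \eps$. The existence of $\gamma$ and the first inequality follow from any Fr\'echet matching $(\sigma,\theta)$ of width at most $\eps$, exactly as in \lemref{pwmatching}. For the splitting step, let $t_3$ be the smallest value with $\theta(t_3) = 3$; then $P(\sigma(t_3)) \in B(q_3,\eps)$, while \lemref{3delta-ball} applied to $Q[1,2]$ gives $P[\alpha,\gamma] \subseteq B(q_2, 3\eps)$, where $\alpha$ is the first point of $P$ inside $B(q_2,\eps)$. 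Since $\|q_2-q_3\| > 4\eps$, these two balls are disjoint, and a short monotonicity argument (using $\sigma(t_2) \geq \alpha$ to rule out $\sigma(t_3) \leq \alpha$) forces $\sigma(t_3) > \gamma$. The tail of the construction is then identical to \lemref{pwmatching}: letting $t_\gamma$ be the largest $t$ with $\sigma(t)=\gamma$, the vertical segment $Q[2,\theta(t_\gamma)]$ lies inside $B(P(\gamma),\eps)$ by convexity of the ball, so concatenating that portion with $(\sigma,\theta)|_{[t_\gamma,1]}$ witnesses $\delta_F(P[\gamma,n],Q[2,m]) \leq \eps$.

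Given this generalized Cutting Lemma, the correctness proof of \thmref{correctness1} transfers essentially verbatim. Inductively, the ``Yes'' direction concatenates the sub-matchings between each $P[\gamma_{i-1},\gamma_i]$ and $Q[i-1,i]$, and the ``No'' direction uses the contrapositive of the generalized Cutting Lemma at the iteration where $\gamma_i$ fails to exist, together with the same endgame argument that rules out $\gamma_m < n$. The only statement that does not carry over is \obsref{piecewise}, whose orthogonality claim depends on \lemref{monotonicity} and hence on $l_P > 2\eps$; this is a structural observation rather than a step of the decision procedure, so its loss is harmless.

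For the runtime, the analysis of \thmref{greedyalg} applies without modification. Each call $\LEP(P[\gamma_{i-1},n], Q[i-1,i], \eps)$ runs a single-row Alt--Godau sweep in $O(\lceil \gamma_i - \gamma_{i-1} \rceil + 1)$ time, and the telescoping bound $\sum_{i=2}^{m}(\gamma_i - \gamma_{i-1} + 2) = (n-1) + 2(m-1) = O(n+m)$ settles the total cost. The main subtlety to verify is that $3\eps + \eps = 4\eps$ is really the correct separation constant: we only need $B(q_{i-1}, 3\eps) \cap B(q_i, \eps) = \emptyset$ rather than the symmetric version (which would force $l_Q > 6\eps$), because the preimage of $q_i$ under any Fr\'echet matching of width at most $\eps$ always lies in $B(q_i,\eps)$, and that one-sided bound is precisely what pins $\sigma(t_i)$ strictly after $\gamma_{i-1}$.
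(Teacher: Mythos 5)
Your proposal is correct and follows the same route as the paper: substitute the $3\eps$-ball bound of \lemref{3delta-ball} for the $\sqrt{d}\eps$-ball bound of \lemref{fp} inside the Cutting Lemma, note that $l_Q>4\eps$ restores the needed one-sided disjointness $B(q_2,3\eps)\cap B(q_3,\eps)=\emptyset$, and then invoke \thmref{correctness1} and \thmref{greedyalg} unchanged. The paper's proof is essentially a one-sentence pointer to this substitution; your write-up is a faithful, more detailed elaboration of the identical argument.
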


\begin{proof}
	In the proof of~\lemref{pwmatching}, we can replace~\lemref{fp} with~\lemref{3delta-ball}, and realize that \lemref{pwmatching} also holds for the case $l_P>0$ and $l_Q>4\eps$. 
	The rest follows from~\thmref{correctness1} and \thmref{greedyalg}.
\end{proof}

\subsection{Necessity of the Assumption} \label{subsec:assumption}

As we have seen so far, \algref{directedDecider} greedily constructs
a feasible Fr\'echet matching by linearly walking on curve $P$ to find all longest $\eps$-prefixes on it with respect to the corresponding edges of $Q$. Unfortunately, this property is not always true for
curves with short edges. In general,
there can be a quadratic number of blocked regions in the
free space diagram of two curves; see Fig.~\ref{fig:hardness} as
an example of two curves in $\Reals^2$ that have 
edges of length exactly equal $2\eps$ except for some edges with
lengths in $[2\eps,(1+\sqrt{2})\eps]$. This example demonstrates that
our simple greedy construction of a Fr\'echet matching 
is unlikely to work if the edges are shorter than the assumptions we made. It also shows that our greedy construction does not work if both curves have edge lengths of at least $2\eps$.

\begin{figure}[htbp]
	\begin{center}
		\includegraphics[width=12.5 cm]{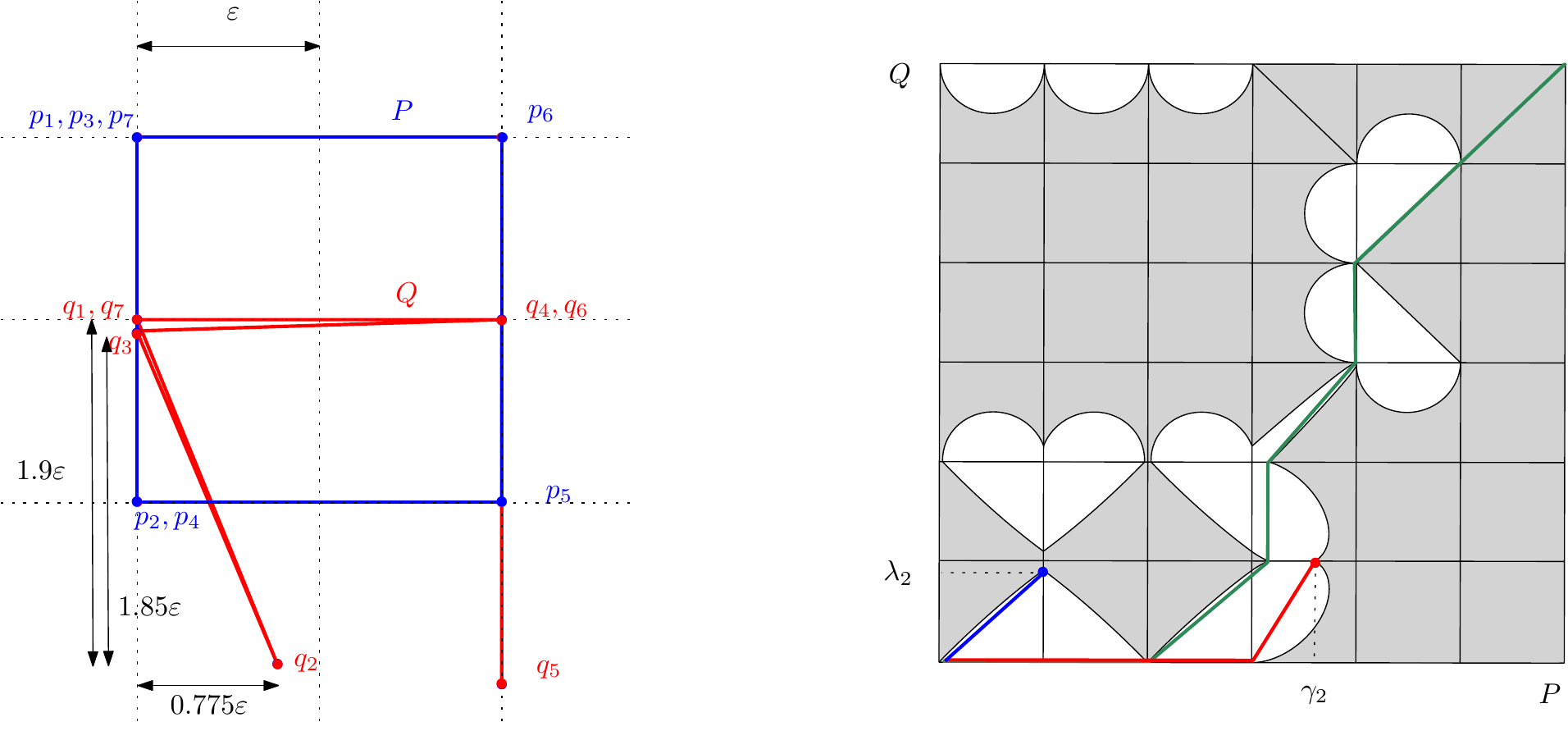}
		\caption{ 
			An example in which the greedy algorithm fails to realize the Fr\'echet matching highlighted in green. Here, $P[1,\gamma_2]$ is the longest $\eps$-prefix in $P$ with respect to $Q[1,2]$, as illustrated by the red reachable path. Also $Q[1,\lambda_2]$ is the longest $\eps$-prefix in $Q$ with respect to $P[1,2]$ as illustrated by the blue reachable path. Every edge is $2\eps$ long, except for the edges $Q[1,2]$ and $Q[2,3]$ that have lengths $2.02\eps$ and $2.005\eps$, respectively. The latter values are still in the range $[2\eps,(1+\sqrt{2})\eps]$.}
		\label{fig:hardness}
	\end{center}
\end{figure}

\section{Optimization and Approximation}\label{sec:optimization}

In this section, we present two algorithms for computing and approximating the~\Frd~between two curves with long edges, respectively. First we give an exact algorithm which runs in  $O((n+m)\log(n+m))$ time. Afterwards, we present a linear time algorithm which is similar to the greedy decision algorithm, but it uses the notion of minimum prefix to approximate the \Frd.

\subsection{Optimization} \label{subsec:opt}

The main idea of our algorithm is that we compute critical values of
the \Frd~between two curves and then perform binary search on these to find the optimal value acquired by the decision
algorithm.
In general, there are a cubic number of \emph{critical values}, which are candidate values for the Fr\'echet distance between two polygonal curves. These critical values are those $\eps$ for which $p_1\in B(q_1,\eps)$ or $p_n\in B(q_m,\eps)$, or when decreasing $\eps$ slightly a free space interval disappears on the boundary of a free space cell or a monotone path in the free space becomes non-monotone. See Alt and Godau~\cite{ag-cfdb-95} for more details on critical values.
In our case we can show that it suffices to consider only a linear number of critical values, because the assumption on the edge lengths of the curves implies that a piecewise orthogonal matching exists, which reduces the number of possible critical values.
%
%
Our optimization algorithm consists of the following four steps:

\begin{enumerate}
	\item Run
	\textsc{DecisionAlgorithm($P,Q,\eps_0$)} with $\eps_0 = \min\{l_P/2,l_Q/(1+\sqrt{d})\}$ and store all $\gamma_i=\LEP(P[\gamma_{i-1},n],Q[i-1,i],\eps_0)$ for all $i=2,\ldots,m$.  Only proceed if \textsc{DecisionAlgorithm($P,Q,\eps_0$)} returns ``Yes''. 
	\item If $P[\gamma_{i-1},\gamma_i]$ is not $(Q[i-1,i],\eps_0)$-monotone for some $i=2,\ldots,m$ then return $\delta_F(P,Q)=\eps_0$. 
	\item Compute $C:=\cup_{i=2}^{m}C_i \cup \{\eps_0 \}$, where $C_i$ is the set of all critical values for $P[\alpha_{i-1},\gamma_{i}]$ and $Q[i-1,i]$. Here, $P(\alpha_i)$ is the first point along $P[\gamma_{i-1},n]$ that intersects $B(q_i,\eps_0)$ and $\alpha_1=1$.
	\item Sort $C$ and perform binary search on $C$ using \textsc{DecisionAlgorithm}($P,Q,\cdot$) to find $\delta_F(P,Q)$.
\end{enumerate}

In step (1) we set $\eps_0=\min\{l_P/2,l_Q/(1+\sqrt{d})\}$. This means
that $l_P\geq 2\eps_0$ and $l_Q\geq (1+\sqrt{d})\eps_0$.
%
Step (2) handles the case that the matching induced by the $\gamma_i$ may not
be a piecewise orthogonal matching.
But once the algorithm proceeds to step (3), there exists a piecewise orthogonal
matching between $P$ and $Q$.  
%
This restricts the set of critical values we have to consider in step (3) as follows:
Let $\eps^*\leq\eps_0$ and assume $\eps^* =\delta_F(P,Q)$. Let $\gamma^*_i=\LEP(P[\gamma^*_{i-1},n],Q[i-1,i],\eps^*)$, for $i=2,\ldots,m$, and let $P(\alpha^*_i)$ be the first intersection point between $P[\gamma^*_{i-1},n]$ and $B(q_i,\eps^*)$, and $\alpha^*_1=\gamma^*_1=1$.
From $B(q_i,\eps^*)\subseteq B(q_i,\eps_0)$ follows that $\alpha_i \leq \alpha^*_i \leq \gamma^*_i \leq \gamma_{i}$. And since $\gamma_{i-1}\leq\gamma_i$, we know that
$P[\gamma^*_{i-1},\gamma^*_{i}] \subseteq P[\alpha_{i-1},\gamma_i]$. We thus have observed the following, see Fig.~\ref{fig:shrunk}: 
%
\begin{observation} \label{obs:superpath}
	Let $\eps^*\leq \eps_0$. For all $i=2,\ldots,m$:
	
	\begin{enumerate*}
		\item $\alpha_i \leq \alpha^*_i \leq \gamma^*_i \leq \gamma_{i}$,
		\item $P[\gamma^*_{i-1},\gamma^*_{i}] \subseteq P[\alpha_{i-1},\gamma_{i}]$.
	\end{enumerate*}
	
\end{observation}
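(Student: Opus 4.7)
The plan is to prove part~(1) by induction on $i$; part~(2) will then fall out by combining the inequalities from~(1) at consecutive indices. The base case $i=1$ is trivial, since the convention $\alpha_1=\alpha^*_1=\gamma^*_1=\gamma_1=1$ collapses all four parameters to the same value.

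For the inductive step I assume $\alpha_{i-1}\leq\alpha^*_{i-1}\leq\gamma^*_{i-1}\leq\gamma_{i-1}$ and prove the three inequalities in (1) in the order $\alpha^*_i\leq\gamma^*_i$, then $\gamma^*_i\leq\gamma_i$, then $\alpha_i\leq\alpha^*_i$. The middle inequality is immediate: by the defining property of the longest $\eps^*$-prefix, $P(\gamma^*_i)\in B(q_i,\eps^*)$, and $\gamma^*_i$ lies in the search range $[\gamma^*_{i-1},n]$ over which $\alpha^*_i$ is the first intersection. For the right inequality, if $\gamma^*_i\leq\gamma_{i-1}$ we are done; otherwise $\gamma_{i-1}\in[\gamma^*_{i-1},\gamma^*_i]$ and I construct an $\eps_0$-matching between $P[\gamma_{i-1},\gamma^*_i]$ and $Q[i-1,i]$ by restricting the orthogonal matching of width $\eps^*$ (which exists on $P[\gamma^*_{i-1},\gamma^*_i]$ by \lemref{monotonicity}) to the subpiece starting at $P(\gamma_{i-1})$, and then prepending an initial piece that sends $P(\gamma_{i-1})$ to the prefix of $Q[i-1,i]$ running up to its orthogonal projection. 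The width of the prepended piece is $\|P(\gamma_{i-1})-q_{i-1}\|\leq\eps_0$, where the bound uses that the LEP procedure at step $i-1$ places $P(\gamma_{i-1})$ on $\partial B(q_{i-1},\eps_0)$; maximality of $\gamma_i$ then gives $\gamma^*_i\leq\gamma_i$.

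For the left inequality $\alpha_i\leq\alpha^*_i$: if $\alpha^*_i\geq\gamma_{i-1}$, then $P(\alpha^*_i)\in B(q_i,\eps^*)\subseteq B(q_i,\eps_0)$ makes $\alpha^*_i$ a candidate for the infimum defining $\alpha_i$, so $\alpha_i\leq\alpha^*_i$. The alternative $\alpha^*_i<\gamma_{i-1}$ is ruled out via \lemref{fp}: applied to $P[\gamma^*_{i-1},\gamma^*_i]$ and $Q[i-1,i]$ at level $\eps^*$, it yields $P[\alpha^*_i,\gamma^*_i]\subseteq B(q_i,\sqrt d\,\eps^*)$, which forces $P(\gamma_{i-1})\in B(q_i,\sqrt d\,\eps^*)$; combined with $P(\gamma_{i-1})\in B(q_{i-1},\eps_0)$ the triangle inequality gives $\|q_{i-1}-q_i\|\leq\eps_0+\sqrt d\,\eps^*\leq(1+\sqrt d)\eps_0$, contradicting the long-edges assumption $l_Q>(1+\sqrt d)\eps_0$. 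Part~(2) is then immediate: applying (1) at index $i-1$ yields $\alpha_{i-1}\leq\gamma^*_{i-1}$, and (1) at index $i$ yields $\gamma^*_i\leq\gamma_i$, whence $P[\gamma^*_{i-1},\gamma^*_i]\subseteq P[\alpha_{i-1},\gamma_i]$.

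The delicate step is the third inequality, where one must also handle the residual configuration $\alpha^*_i<\gamma_{i-1}$ together with $\gamma_{i-1}>\gamma^*_i$; there the $\eps^*$-orthogonal matching would have to route $P$ back towards $B(q_{i-1},\eps_0)$ after visiting $B(q_i,\eps^*)$, and the argument again reduces to a triangle-inequality estimate driven by the long-edges gap $l_Q>(1+\sqrt d)\eps_0$ between consecutive vertices of $Q$.
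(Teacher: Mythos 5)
Your proof goes considerably further than the paper's. The paper presents this as an informal observation and justifies the entire chain of inequalities with a single sentence (``From $B(q_i,\eps^*)\subseteq B(q_i,\eps_0)$ follows that $\alpha_i\leq\alpha^*_i\leq\gamma^*_i\leq\gamma_i$''), leaving the required induction on $i$ and the potential interleaving of the $\alpha,\gamma$ parameters implicit. You make the induction explicit and correctly observe that the inequalities at index $i$ cannot be read off from the ball containment alone, because $\alpha^*_i$ and $\gamma^*_i$ are computed with a different starting point ($\gamma^*_{i-1}$) than $\alpha_i$ and $\gamma_i$ (which start at $\gamma_{i-1}$). Your argument for $\gamma^*_i\leq\gamma_i$, by restricting the $\eps^*$-orthogonal matching to $[\gamma_{i-1},\gamma^*_i]$ and prepending a vertical piece of width $\max\{\|P(\gamma_{i-1})-q_{i-1}\|,\eps^*\}\leq\eps_0$, is correct and is in fact a re-derivation of Lemma~\ref{lem:prefixmon} (whose statement in the paper has a transposed conclusion; the proof there establishes $\gamma'_i<\gamma_i$); you could simply cite that lemma here. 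Your handling of $\alpha_i\leq\alpha^*_i$ via \lemref{fp} and the triangle inequality with the $l_Q>(1+\sqrt d)\eps_0$ gap is sound in the sub-case $\gamma_{i-1}\leq\gamma^*_i$, and it is a genuinely useful addition, since the paper never verifies that $\alpha^*_i$ cannot fall before $\gamma_{i-1}$.

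The one place where your argument remains a sketch is the ``residual configuration'' $\alpha^*_i<\gamma_{i-1}$ with $\gamma_{i-1}>\gamma^*_i$. You say this reduces to ``a triangle-inequality estimate driven by the long-edges gap,'' but the estimate you would need is not a direct consequence of $\|q_{i-1}-q_i\|>(1+\sqrt d)\eps_0$: a sharp bend at $q_{i-1}$ can bring later portions of $Q$ (or earlier edges of $Q$) close to $q_i$ without violating the edge-length assumption, so the blanket claim that the matching cannot ``route $P$ back towards $B(q_{i-1},\eps_0)$'' needs a concrete monotonicity argument rather than a distance bound. A cleaner way to close this sub-case is to note that, by the inductive hypothesis, $\gamma_{i-2}\leq\alpha_{i-1}\leq\gamma^*_{i-1}\leq\gamma^*_i$, so if $\gamma^*_i<\gamma_{i-1}$ then $\gamma^*_i\in[\gamma_{i-2},\gamma_{i-1}]$ and hence $P(\gamma^*_i)\in C(Q[i-2,i-1],\eps_0)$, while also $P(\gamma^*_i)\in B(q_i,\eps^*)$; you then need to argue disjointness of these two regions, which still requires care about the geometry at $q_{i-1}$ (and is another spot the paper silently skips). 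So the overall route is right and more rigorous than the paper's, but this last sub-case needs a genuine argument rather than a gesture.
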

Therefore all critical values for $P[\gamma^*_{i-1},\gamma^*_{i}]$ and
$Q[i-1,i]$ must be contained in the set $C_i$ which are the critical
values for $P[\alpha_{i-1},\gamma_{i}]$ and $Q[i-1,i]$, and the binary search in step (4) will
identify $\eps^*$.

\begin{lemma} [Correctness] \label{lem:optimizationCorrectness}
	Let $\eps_0=\min\{l_P/2,l_Q/(1+\sqrt{d})\}$ and let
	$\eps^*=\delta_F(P,Q)$. If in step (1) of the optimization algorithm
	\textsc{DecisionAlgorithm$(P,Q,\eps_0)$} returns ``Yes'', then the
	optimization algorithm returns $\eps^*$ and
	$\eps^*\leq\eps_0$. Otherwise $\eps^*\geq\eps_0$.
\end{lemma}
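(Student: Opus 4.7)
The plan is to case-split on the outcome of \textsc{DecisionAlgorithm}$(P,Q,\eps_0)$. The choice $\eps_0=\min\{l_P/2,l_Q/(1+\sqrt{d})\}$ yields $l_P\geq 2\eps_0$ and $l_Q\geq(1+\sqrt{d})\eps_0$, placing us exactly in the regime of \thmref{equal2eps}. If the decision call returns ``No'', that theorem immediately gives $\eps^*\geq\eps_0$, which is the second assertion. If it returns ``Yes'', the same theorem yields $\eps^*\leq\eps_0$; it remains to verify that steps (2)--(4) actually output $\eps^*$ on the ``Yes'' branch.

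For step~(2), I plan to show that whenever the non-monotonicity test fires, one has $\eps^*=\eps_0$. Since $\gamma_i$ is produced by \LEP\ at scale $\eps_0$ we know $\delta_F(P[\gamma_{i-1},\gamma_i],Q[i-1,i])\leq\eps_0$, and the implication \itemref{cond:1}~$\Rightarrow$~\itemref{cond:2} in \lemref{monotonicity} forces $(Q[i-1,i],\eps_0)$-monotonicity whenever $l_P>2\eps_0$, so non-monotonicity requires the degenerate equality $l_P=2\eps_0$ (symmetrically, the only other way to reach step~(2) is $l_Q=(1+\sqrt{d})\eps_0$). I then argue by contrapositive: assuming $\eps^*<\eps_0$ turns both edge-length inequalities strict at scale $\eps^*$, so \thmref{correctness1} together with \obsref{piecewise} produces a piecewise orthogonal matching with indices $\gamma^*_i$ and hence $(Q[i-1,i],\eps^*)$-monotonicity of each $P[\gamma^*_{i-1},\gamma^*_i]$. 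Combining the interleaving $\gamma^*_{i-1}\leq\gamma_{i-1}\leq\gamma^*_i\leq\gamma_i$ from \obsref{superpath} with $\delta_F(P[\gamma_{i-1},\gamma_i],Q[i-1,i])\leq\eps_0$ and the fact that both endpoints of the overhang $P[\gamma^*_i,\gamma_i]$ lie in $B(q_i,\eps_0)$ should let me stitch $(Q[i-1,i],\eps_0)$-monotonicity for the full subcurve $P[\gamma_{i-1},\gamma_i]$, yielding the desired contradiction. The hard part will be this stitching: the two greedy sequences $\gamma_i$ and $\gamma^*_i$ do not coincide, and the overhang must be shown to be monotone along the line supporting $Q[i-1,i]$.

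For steps~(3) and~(4), assume step~(2) does not trigger, so a piecewise orthogonal matching exists at scale $\eps_0$. If $\eps^*<\eps_0$, then \obsref{superpath} gives $P[\gamma^*_{i-1},\gamma^*_i]\subseteq P[\alpha_{i-1},\gamma_i]$ for every $i=2,\ldots,m$, so every critical value at which the optimal matching between $P[\gamma^*_{i-1},\gamma^*_i]$ and $Q[i-1,i]$ could be tight is already a critical value of the pair $P[\alpha_{i-1},\gamma_i],\,Q[i-1,i]$ and therefore lies in $C_i\subseteq C$. Consequently $\eps^*\in C\cup\{\eps_0\}$, and since \textsc{DecisionAlgorithm}$(P,Q,\cdot)$ is correct at every tested value by \thmref{equal2eps}, binary search over the sorted set $C$ locates $\eps^*$ exactly, so the optimization algorithm returns it as claimed.
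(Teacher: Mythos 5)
Your overall structure—splitting on the decision output, invoking \thmref{equal2eps} for the ``No'' branch and for $\eps^*\leq\eps_0$, and using \obsref{superpath} for the critical-value coverage in steps (3)--(4)—matches the paper. The step (3)--(4) portion is essentially the paper's argument.

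The genuine gap is in step (2), where you take a different route from the paper and do not close it. You propose a contrapositive: assume $\eps^*<\eps_0$, extract the scale-$\eps^*$ piecewise orthogonal matching $(\gamma^*_i)$, and ``stitch'' monotonicity of the scale-$\eps_0$ piece $P[\gamma_{i-1},\gamma_i]$ from the scale-$\eps^*$ pieces $P[\gamma^*_{i-1},\gamma^*_i]$. You explicitly flag the overhang $P[\gamma^*_i,\gamma_i]$ as the hard part and leave it unresolved; and indeed the stitching is problematic, because that overhang is contained in $P[\gamma^*_i,\gamma^*_{i+1}]$, which your hypothesis only makes monotone with respect to the line supporting the \emph{next} segment $Q[i,i+1]$, not $Q[i-1,i]$. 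So the monotonicity you would need does not propagate from the $\gamma^*$-decomposition in an obvious way. The paper avoids this machinery entirely with a direct, local argument: non-monotonicity of $P[\gamma_{i-1},\gamma_i]$ combined with $l_P\geq 2\eps_0$ forces an edge $P[j,j+1]$ whose direction makes angle $>\pi/2$ with $Q[i-1,i]$; for a monotone width-$\leq\eps_0$ matching of this piece to the segment to exist at all, the balls $B(p_j,\eps_0)$ and $B(p_{j+1},\eps_0)$ must share a point $x$ on $Q[i-1,i]$, which forces $\|p_{j+1}-p_j\|=2\eps_0$ and forces both $p_j$ and $p_{j+1}$ to match to $x$; hence the width is exactly $\eps_0$ and therefore $\eps^*=\eps_0$. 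You would need either to adopt that local argument or to actually carry out the stitching you defer.

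One additional inaccuracy: you assert that, symmetrically, $l_Q=(1+\sqrt{d})\eps_0$ is ``the only other way to reach step (2).'' That is not so. Step (2) tests $(Q[i-1,i],\eps_0)$-monotonicity of $P[\gamma_{i-1},\gamma_i]$; by \lemref{monotonicity}, $\delta_F(P[\gamma_{i-1},\gamma_i],Q[i-1,i])\leq\eps_0$ together with $l_P>2\eps_0$ already forces this monotonicity, regardless of $l_Q$. So only the $l_P=2\eps_0$ degeneracy can trigger step (2); the $l_Q$ degeneracy (which affects piecewise orthogonality, cf.\ Fig.~\ref{fig:twoepsmonotone}(b)) is irrelevant to the monotonicity test and need not be handled there.
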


\begin{proof}
	If \textsc{DecisionAlgorithm$(P,Q,\eps_0)$} returns ``No'' then
	\thmref{equal2eps} implies that $\delta_F(P,Q)=\eps^*\geq\eps_0$.
	Now suppose, for the remainder of this proof, that
	\textsc{DecisionAlgorithm$(P,Q,\eps_0)$} returns ``Yes''. Then we know
	that all $\gamma_i$ exist and $\delta_F(P[\gamma_{i-1},\gamma_i],Q[i-1,i])\leq\eps_0$ for all
	$i=2,\ldots,m$, and therefore $\eps^*\leq\eps_0$, see also \thmref{equal2eps}.
	This implies that $\eps_0$ is an upper bound on all critical values in $C$. 
	It remains to show that the optimization algorithm returns
	$\eps^*$.

	If in step (2) there is an $i=2,\ldots,m$ such that
	$P[\gamma_{i-1},\gamma_{i}]$ is not $(Q[i-1,i],\eps)$-monotone,
	then there must exist an edge $P[j,j+1]$, for $\gamma_{i-1}\leq j<\gamma_i$,
	such that the angle between the direction vectors of $P[j,j+1]$ and $Q[i-1,i]$
	is greater than $\pi/2$.
	The length of all edges in $P$ must be at least $2\eps_0$. 
	But for this edge, the only way a (monotone) Fr\'echet
	matching between $P[\gamma_{i-1},\gamma_i]$ and $Q[i-1,i]$ of width at
	most $\eps_0$ can exist is if $\|p_{j+1}-p_j\|= 2\eps_0$ and both
	$p_j$ and $p_{j+1}$ are matched to $x=B(p_j,\eps_0)\cap
	B(p_{j+1},\eps_0)\cap Q[i-1,i]$. Therefore the width of such a Fr\'echet
	matching is exactly $\eps_0$ and $\eps^*=\eps_0$.
	%
	
	It remains to show that if the algorithm passes step (2)
	it returns $\eps^*$ at the end of step (4). Since $\eps_0\in C$ and $\eps^*\leq\eps_0$, the binary search will return $\eps^*$ if $\eps^*=\eps_0$.  
	%
	So assume now that $\eps^*<\eps_0$.
	Since the algorithm passes step (2), it follows from \lemref{monotonicity} that the matching induced by the
	$\gamma_i$ is indeed a piecewise orthogonal matching of width less
	than $\eps_0$.
	From \obsref{superpath} follows that
	all critical values for $P[\gamma^*_{i-1},\gamma^*_{i}]$ and
	$Q[i-1,i]$ must be contained in the set $C_i$ of all critical
	values for $P[\alpha_{i-1},\gamma_{i}]$ and $Q[i-1,i]$. Thus,
	$\eps^*\in C=\cup_{i=2}^m C_i$, and the binary search in step (4)
	returns $\eps^*$.

\end{proof}
%

%
%

\noindent\textbf{Computing The Critical Values:}
A piecewise orthogonal matching of width $\eps^*$ between $P$ and $Q$
is comprised of orthogonal matchings between
$P[\gamma^*_{i-1},\gamma^*_{i}]$ and $Q[i-1,i]$ for all
$i=2,\ldots,m$. The piecewise orthogonal matching may map vertices from $P$ to
$Q[i-1,i]$ either by an orthogonal projection or by mapping to the
endpoints $q_{i-1}, q_i$.
And vertices $q_i$ may be mapped by on orthogonal projection to $P[\alpha_i,\gamma_{i}]$.
These mappings define point-to-point distances that are candidates for $\eps^*$, and thus
critical values between $P[\gamma^*_{i-1},\gamma^*_{i}]$ and $Q[i-1,i]$ that we need to optimize over.
But since $\eps^*$ is not known beforehand, we compute the superset $C_i$ of 
critical values between
$P[\alpha_{i-1},\gamma_{i}]$ and $Q[i-1,i]$ as follows:
%
%
%
Let $\H_1$ be the hyperplane perpendicular to $Q[i-1,i]$ and tangent to $B(q_{i-1},\eps_0)$ that intersects $Q[i-1,i]$. Similarly, define $\H_2$ with respect to $B(q_i,\eps_0)$. 
For each $p_j\in P[\alpha_{i-1},\gamma_{i}]$:
(1) If $p_j$ lies between $\H_1$ and $\H_2$, then any orthogonal
matching of width $\eps^*$ maps $p_j$ to its orthogonal projection on
$Q[i-1,i]$. We therefore add the distance $\|p_j,Q[i-1,i]\|$ to
$C_i$.
(2) If $p_j$ lies on the $q_{i-1}$-side of $\H_{1}$, then an
orthogonal matching of width $\eps^*$ can map $p_j$ either to
$q_{i-1}$ or to its orthogonal projection on $Q[i-1,i]$. In this case
we store both $\|p_j,Q[i-1,i]\|$ and $\|p_j-q_{i-1}\|$ in
$C_i$.
Similarly, if $p_j$ lies on the $q_i$-side of $\H_2$ then we store 
$\|p_j,Q[i-1,i]\|$ and $\|p_j-q_{i}\|$ in $C_i$. 
Finally, for each edge $e$ in $P[\alpha_i,\gamma_i]$: (3) we store $\|q_i,e\|$. See Fig.~\ref{fig:shrunk} for more illustration. 
%
%
%
%
We have the following theorem:
\begin{figure}[htbp]
	\centering 
	\includegraphics[width=12.5 cm]{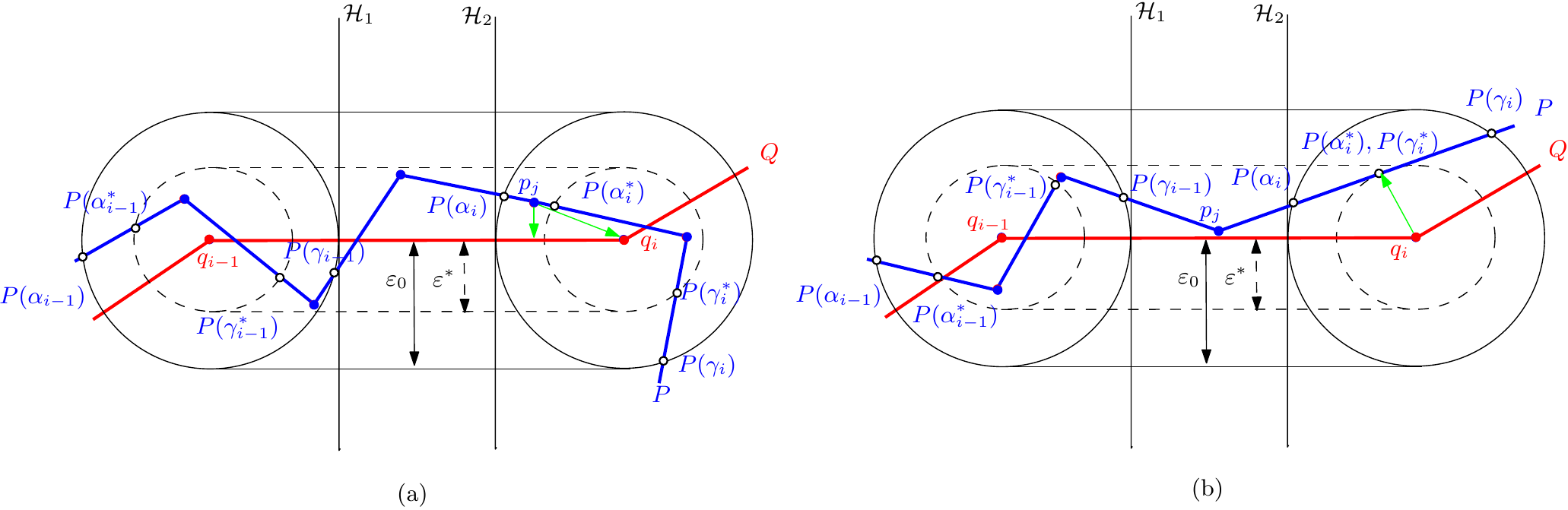}
	\caption{
		Shown are two examples of orthogonal matchings  between
		$P[\gamma^*_{i-1},\gamma^*_{i}]$ and $Q[i-1,i]$, and
		the associated critical values (point-to-point distances
		defined by the matching). 
		The cylinders $C(Q[i-1,i],\eps_0)$
		and $C(Q[i-1,i],\eps^*)$ are shown, where $\eps_0\geq \eps^*$.
		(a) $p_j$ falls into case (2), when the orthogonal matching maps $p_j$
		either to $q_i$ (if $p_j$ lies inside $B(q_i,\eps^*)$) or orthogonally
		to $Q[i-1,i]$ (if $p_j$ lies outside $B(q_i,\eps^*)$).
		(b) If an edge of $P$ is tangent to $B(q_i,\eps^*)$, then case (3) occurs.
		Here, the orthogonal matching has to
		map $q_i$ to an edge $e$ in $P[\alpha_i,\gamma_{i}]$. }
	\label{fig:shrunk}
\end{figure}

\begin{theorem}[Optimization] \label{thm:opt}
	Let $P:[1,n]\RD$ and $Q:[1,m]\RD$ be two polygonal curves. If $\delta_F(P,Q) < \min\{l_P/2, l_Q/(1+\sqrt{d})\}$, then $\delta_F(P,Q)$ can be computed in $O((n+m)\log (n+m))$ time.
\end{theorem}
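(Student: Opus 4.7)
\textbf{The plan} is to execute the four-step optimization scheme described above and carefully account for its running time. Correctness under the hypothesis $\eps^*:=\delta_F(P,Q)<\eps_0:=\min\{l_P/2,l_Q/(1+\sqrt{d})\}$ follows from \lemref{optimizationCorrectness}: \thmref{equal2eps} guarantees that \textsc{DecisionAlgorithm}$(P,Q,\eps_0)$ returns ``Yes'' in step (1), step (2) cannot return $\eps_0$ because $\eps^*<\eps_0$, and the binary search in step (4) finds $\eps^*$ because \obsref{superpath} ensures that every critical value realized by an optimal piecewise orthogonal matching of width $\eps^*$ already appears in $C=\bigcup_{i=2}^m C_i\cup\{\eps_0\}$.

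For the runtime, step (1) is a single call to \textsc{DecisionAlgorithm} on $(P,Q,\eps_0)$; by \thmref{greedyalg} (together with \thmref{equal2eps} covering the boundary case $l_P=2\eps_0$ or $l_Q=(1+\sqrt{d})\eps_0$) this takes $O(n+m)$ time and additionally produces the sequence $\gamma_2,\dots,\gamma_m$. Step (2) checks $(Q[i-1,i],\eps_0)$-monotonicity of each $P[\gamma_{i-1},\gamma_i]$ by inspecting vertex positions and edge direction vectors, for a total cost of $O(n+m)$. Once step (2) is passed, \lemref{monotonicity} tells us that the $\gamma_i$ induce a piecewise orthogonal matching of width at most $\eps_0$, which justifies the restricted catalogue of critical values computed in step (3).

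For step (3), each $C_i$ is built by projecting every vertex of $P[\alpha_{i-1},\gamma_i]$ onto $Q[i-1,i]$ and projecting $q_i$ onto every edge of $P[\alpha_i,\gamma_i]$, so $|C_i|=O(|P[\alpha_{i-1},\gamma_i]|)$. The key claim is that
\[
\sum_{i=2}^{m}\bigl|P[\alpha_{i-1},\gamma_i]\bigr|=O(n+m).
\]
Split $P[\alpha_{i-1},\gamma_i]=P[\alpha_{i-1},\gamma_{i-1}]\cup P[\gamma_{i-1},\gamma_i]$. By \lemref{fp}, $P[\alpha_{i-1},\gamma_{i-1}]\subseteq B(q_{i-1},\sqrt{d}\eps_0)$; more importantly, by the definition of \LEP~we have $\alpha_{i-1}\geq\gamma_{i-2}$, so $P[\alpha_{i-1},\gamma_{i-1}]\subseteq P[\gamma_{i-2},\gamma_{i-1}]$. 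Thus each vertex of $P$ lies in at most two consecutive subcurves $P[\alpha_{i-1},\gamma_i]$ (its ``own'' forward piece and the successor's back-tracking piece), so the total size is bounded by $2\sum_i|P[\gamma_{i-1},\gamma_i]|+O(m)=O(n+m)$. Hence $|C|=O(n+m)$ and step (3) runs in $O(n+m)$ time.

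Finally, sorting $C$ takes $O((n+m)\log(n+m))$, and the binary search in step (4) performs $O(\log(n+m))$ decision queries, each costing $O(n+m)$ by \thmref{greedyalg} (applied with any $\eps\leq\eps_0$, so the long-edge hypothesis is preserved). Summing gives the claimed $O((n+m)\log(n+m))$ bound. \textbf{The main obstacle} is precisely the overlap-counting step: a priori $\sum_i|P[\alpha_{i-1},\gamma_i]|$ could be $\Theta(nm)$, but \lemref{fp} together with the monotone structure of \LEP~confines every overlap to a single preceding forward piece, which delivers the linear bound and in turn the near-linear overall running time.
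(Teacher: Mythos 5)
Your proposal is correct and follows essentially the same approach as the paper: you reduce to \lemref{optimizationCorrectness} for correctness (with \thmref{equal2eps} handling the boundary case), you bound $|C_i|$ via the decomposition $P[\alpha_{i-1},\gamma_i]=P[\alpha_{i-1},\gamma_{i-1}]\cup P[\gamma_{i-1},\gamma_i]$ together with $\alpha_{i-1}\geq\gamma_{i-2}$, and you conclude with sorting plus binary search. Your "each vertex lies in at most two consecutive pieces" phrasing is a clean restatement of the paper's telescoping sum $\sum_i(\gamma_{i-1}-\alpha_{i-1})<n$; the invocation of \lemref{fp} in that step is superfluous (only the containment $\alpha_{i-1}\geq\gamma_{i-2}$ is needed for the counting), but it does no harm.
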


\begin{proof}
	By \lemref{optimizationCorrectness} we know that the optimization
	algorithm returns $\delta_F(P,Q)$ correctly if $\delta_F(P,Q)$ is strictly less than $\min\{l_P/2, l_Q/(1+\sqrt{d})\}$. 
	It only remains to prove the runtime of the algorithm.
	First we show that the number of critical values is linear. For each
	segment $Q[i-1,i]$, there are three cases for critical values contained in $C_i$:
	(1) There are  at most
	$\ceil{\alpha_{i}-\gamma_{i-1}}+1$ values if vertex $p_j$ lies between $\H_1$
	and $\H_2$. This is an upper
	bound for the number of vertices in $P[\gamma_{i-1},\alpha_{i}]$.
	(2) There are at most $2(\ceil{\gamma_{i-1}-\alpha_{i-1}}+1)$ values if
	vertex $p_j$ lies on the $q_{i-1}$-side of $\H_1$, and similarly
	there are  at most $2(\ceil{\gamma_{i}-\alpha_{i}}+1)$ values 
	if $p_j$ lies on the $q_i$-side of $\H_2$.
	(3) There are at most $\ceil{\gamma_{i}-\alpha_{i}}$ critical values for each
	edge $e$ in $P[\alpha_{i},\gamma_{i}]$.
	Overall, the total is: 
	$|C_i| \leq (\gamma_{i}-\alpha_{i}) +
	(\gamma_{i-1}-\alpha_{i-1}) + (\gamma_{i} - \alpha_{i-1}) + 11 <
	2(\gamma_{i}-\alpha_{i-1}) + 11 =
	2(\gamma_{i}-\gamma_{i-1})+2(\gamma_{i-1}-\alpha_{i-1})+11$.
	The latter inequality follows because $\alpha_{i-1}\leq
	\gamma_{i-1}\leq \alpha_i\leq \gamma_i$, see Fig.~\ref{fig:shrunk}.
	Note that $\sum_{i=2}^{m}(\gamma_{i}-\gamma_{i-1})= \gamma_m-\gamma_1=
	n-1$, and $\sum_{i=2}^{m}(\gamma_{i-1}-\alpha_{i-1}) < n$, therefore,
	$|C|=\sum_{i=2}^{m}|C_i| < 2(n-1)+ 2n + 11(m-1) = O(n+m)$.
	
	The optimization algorithm first runs \algref{directedDecider} in $O(n+m)$ time, then computes $C$ in
	$O(n+m)$ time, and finally sorts $C$ in $O((n+m)\log (n+m))$ time and
	performs binary search on $C$ using the decision algorithm in
	$O((n+m)\log (n+m))$ time. Therefore, the total runtime is
	$O((n+m)\log (n+m))$.
\end{proof}

\subsection{Approximation Algorithm} \label{subsec:approximation}
In this section we present a $\sqrt{d}$-approximation algorithm running in linear time. 
As a counterpart to the notion of longest $\eps$-prefix we now introduce the notion of minimum prefix, which is the longest prefix of $P$ with minimum Fr\'echet distance to a line segment $e$.

\begin{definition}[Minimum Prefix] \label{def:prefix}
	Let $P:[1,n]\RD$ be a polygonal curve and $e:[1,2]\RD$ be a segment. Define $\gamma'=\max \argmin_{1 \leq t \leq n}{\delta_F(P[1,t],e)}$. We call $P[1,\gamma']$ the {\em minimum prefix of $P$ with respect to $e$}.
\end{definition}

Note that in the definition above, $P(\gamma')$ necessarily lies on the boundary of $B(e_2,\eps')$, where $\eps'=\min_{1 \leq t \leq n}{\delta_F(P[1,t],e)}$.
The approximation algorithm is presented in \algref{optalgorithm}.
First, for an initial threshold $\eps_0=\min\{l_P/(2\sqrt{d}), l_Q/(2d)\}$, it runs the decision algorithm, i.e., \textsc{DecisionAlgorithm($P,Q,\eps_0$)}. The algorithm only continues if ``Yes'' gets returned. 
This ensures that $P$ and $Q$ have long edges, with $l_P\geq 2\sqrt{d}\eps>2\eps$ and 
$l_Q\geq 2d\eps>(1+\sqrt{d})\eps$.
Then, similar to the decision algorithm, the approximation
algorithm greedily searches for longest $\eps$-prefixes with respect to each
segment of $Q$. However, it updates the current value of $\eps$ in
each step, by computing the minimum prefix and its associated \Frd~to
the portion of $Q$ considered so far. 

\begin{algorithm}
	\DontPrintSemicolon
	\SetKwFunction{ApproximationAlgorithm}{\textsc{ApproximationAlgorithm}}
	\SetKwFunction{Prefix}{\textsc{Prefix}}
	\SetKwFunction{LongestDeltaPrefix}{\LDP}
	\SetKwFunction{MinimumPrefix}{\MinPrefix}
	\SetKwProg{myalg}{}{}{}
	\caption{Approximate $\delta_F(P,Q)$}
	\label{alg:optalgorithm}
	
	\BlankLine
	\myalg{\ApproximationAlgorithm{$P[1, n], Q[1, m]$}}{ 
		$\eps_0 \leftarrow \min \{l_P/2\sqrt{d},l_Q/2{d}\}$\;
		\lIf{\DecisionAlgorithm{$P,Q,\eps_0$}= ``No''}
		{
			\KwRet{``I don't know''} \label{algline:dontknow}
		}
		{
			$(\gamma_2, \eps_2)\leftarrow$\MinimumPrefix{$P[1,n], Q[1,2]$}\; \label{algline:minprefix-2}
			$\eps \leftarrow \eps_2$\;
			$s\leftarrow \gamma_2$\;
			\For{$i \leftarrow 3$ \KwTo $m$}
			{
				$(\gamma_i, \eps_i)\leftarrow$\MinimumPrefix{$P[s,n], Q[i-1,i]$}\; 
				$\eps \leftarrow \max \lbrace \eps, \eps_i \rbrace$ \;
				$s\leftarrow \gamma_i$\;
			}	
			\eIf{$\gamma_m=n$}{\KwRet{$\eps$}}	
			{
				$\eps \leftarrow \max \lbrace \eps, \delta_F(P[\gamma_m,n],q_m) \rbrace $\;
				\KwRet{$\eps$}
			}
		}
	}
\end{algorithm}

Now we are ready to prove the correctness of \algref{optalgorithm}:
\begin{lemma}[The Approximation] \label{lem:optcorrect}
	Let $P=P[1,n]$ and $Q=Q[1,m]$ be two polygonal curves and let $\eps^*=\delta_F(P,Q)$. If $\eps^* \leq \min\{l_P/2\sqrt{d}, l_Q/2d\}$ then \textsc{ApproximationAlgorithm}($P, Q$) returns a value between $\eps^*$ and $\sqrt{d}\eps^*$. Otherwise it returns ``I don't know''.
\end{lemma}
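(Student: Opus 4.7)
My plan is to split the proof by whether $\eps^* \leq \eps_0 := \min\{l_P/(2\sqrt{d}),\,l_Q/(2d)\}$. For the ``otherwise'' direction, $\eps^* > \eps_0$ together with the preconditions $l_P\geq 2\eps_0$ and $l_Q\geq (1+\sqrt{d})\eps_0$ (which hold since $\sqrt{d}\geq 1$ and $2d\geq 1+\sqrt{d}$ for $d\geq 1$) lets me apply \thmref{equal2eps}: a ``Yes'' from \textsc{DecisionAlgorithm}$(P,Q,\eps_0)$ would give $\delta_F(P,Q)\leq \eps_0<\eps^*$, so the initial query returns ``No'' and the algorithm outputs ``I don't know'' at \alglineref{dontknow}.

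For $\eps^*\leq \eps_0$, \thmref{equal2eps} guarantees the initial query returns ``Yes'' and the loop executes. The lower bound $\eps\geq \eps^*$ is immediate: concatenating the per-edge minimum-prefix matchings that witness the $\eps_i$ (together with the tail term when $\gamma_m<n$) yields a valid Fr\'echet matching of width $\eps$ between $P$ and $Q$. For the upper bound $\eps\leq \sqrt{d}\eps^*$, I induct on $i$ to prove $\eps_i\leq \sqrt{d}\eps^*$; by \defref{prefix}, $P(\gamma_i)\in \partial B(q_i,\eps_i)$, which propagates the hypothesis $P(\gamma_i)\in B(q_i,\sqrt{d}\eps^*)$. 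The base $i=2$ follows from the Cutting Lemma (\lemref{pwmatching}) applied with threshold $\eps^*$, yielding some $\bar\gamma_2$ with $\delta_F(P[1,\bar\gamma_2],Q[1,2])\leq \eps^*$ and hence $\eps_2\leq \eps^*$.

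For the inductive step, let $\bar\gamma_\cdot$ denote the sequence returned by \textsc{DecisionAlgorithm}$(P,Q,\eps^*)$, which by \obsref{piecewise} induces orthogonal matchings of width $\leq \eps^*$ between each $P[\bar\gamma_{j-1},\bar\gamma_j]$ and $Q[j-1,j]$. In the principal subcase $\gamma_{i-1}\in [\bar\gamma_{i-1},\bar\gamma_i]$, let $q^*$ be the orthogonal projection of $P(\gamma_{i-1})$ onto $Q[i-1,i]$; restricting the $\eps^*$-orthogonal matching to $P[\gamma_{i-1},\bar\gamma_i]$ maps it to $Q[q^*,i]$ with width $\leq\eps^*$, and prepending a ``hold'' phase that keeps $P$ at $P(\gamma_{i-1})$ while $Q$ traverses $Q[i-1,q^*]$ extends this to a matching onto $Q[i-1,i]$. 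For any $q\in Q[i-1,q^*]$ a Pythagorean calculation gives
\[\|P(\gamma_{i-1})-q\|^2 = \|P(\gamma_{i-1})-q^*\|^2 + \|q^*-q\|^2 \leq \|P(\gamma_{i-1})-q^*\|^2 + \|q^*-q_{i-1}\|^2 = \|P(\gamma_{i-1})-q_{i-1}\|^2 \leq d(\eps^*)^2,\]
so the combined matching has width at most $\sqrt{d}\eps^*$, witnessing $\eps_i\leq \sqrt{d}\eps^*$.

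The main obstacle will be handling the boundary subcases $\gamma_{i-1}\notin [\bar\gamma_{i-1},\bar\gamma_i]$ and the terminal tail. When $\gamma_{i-1}<\bar\gamma_{i-1}$, I plan to invoke \lemref{fp} at the previous edge $Q[i-2,i-1]$ together with $l_P\geq 2\sqrt{d}\eps^*$ to confine $P[\gamma_{i-1},\bar\gamma_{i-1}]$ inside $B(q_{i-1},\sqrt{d}\eps^*)$, so it can be matched to $q_{i-1}$ cheaply. The case $\gamma_{i-1}>\bar\gamma_i$ should be ruled out: in the matching of width $\eps_{i-1}\leq \sqrt{d}\eps^*$ realized in the previous iteration, $P(\bar\gamma_i)$ would have to lie within $\sqrt{d}\eps^*$ of some $q'\in Q[i-2,i-1]$, so the triangle inequality with $\|P(\bar\gamma_i)-q_i\|\leq\eps^*$ gives $\|q'-q_i\|\leq (1+\sqrt{d})\eps^*$, contradicting $\|q'-q_i\|\geq \|q_{i-1}-q_i\|\geq 2d\eps^*$ from the long-edge assumption on $Q$. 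Finally, when $\gamma_m<n$, applying \lemref{fp} to $Q[m-1,m]$ places the first entry of $P$ into $B(q_m,\eps^*)$ before $\gamma_m$, so $P[\gamma_m,n]\subseteq B(q_m,\sqrt{d}\eps^*)$ and the appended tail distance is bounded by $\sqrt{d}\eps^*$, closing the induction.
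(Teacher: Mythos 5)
Your overall structure mirrors the paper's: induction on $i$ to show $\eps_i\leq\sqrt{d}\eps^*$, using \lemref{fp} when $\gamma_{i-1}$ falls behind $\gamma^*_{i-1}$, and constructing a $\sqrt{d}\eps^*$-width matching from $P[\gamma_{i-1},\cdot]$ to $Q[i-1,i]$ in the main case (your Pythagorean projection argument is a fine elementary substitute for the paper's ``$Q[i-1,\theta(t)]\subseteq B(P(\gamma_{i-1}),\sqrt{d}\eps^*)$'' step, and both deduce $\eps_i\leq\sqrt{d}\eps^*$ from the minimality of the minimum prefix). You also explicitly address the ``otherwise'' branch, which the paper elides. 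However, there is a concrete gap in your argument that the case $\gamma_{i-1}>\bar\gamma_i$ cannot occur.

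You write that $P(\bar\gamma_i)$ lies within $\sqrt{d}\eps^*$ of some $q'\in Q[i-2,i-1]$ (from the width-$\eps_{i-1}$ matching of iteration $i-1$), combine with $\|P(\bar\gamma_i)-q_i\|\leq\eps^*$ to get $\|q'-q_i\|\leq(1+\sqrt{d})\eps^*$, and then assert $\|q'-q_i\|\geq\|q_{i-1}-q_i\|\geq 2d\eps^*$. The inequality $\|q'-q_i\|\geq\|q_{i-1}-q_i\|$ is false for a general polygonal $Q$: $q'$ lies on the segment $\overline{q_{i-2}q_{i-1}}$, and if that segment makes an acute angle with $\overline{q_{i-1}q_i}$ then interior points of $\overline{q_{i-2}q_{i-1}}$ can be strictly closer to $q_i$ than $q_{i-1}$ is, so no contradiction is obtained. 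The paper avoids this by working inside the optimal matching $(\sigma,\theta)$: it picks $t$ with $\sigma(t)=\gamma_{i-1}$, uses the disjointness $B(q_{i-1},\eps_{i-1})\cap B(q_i,\eps^*)=\emptyset$ to conclude $\gamma_{i-1}<\gamma^*_i$, and hence $i-1\leq\theta(t)\leq i$. Your argument, by contrast, never invokes the global monotone structure of $(\sigma,\theta)$ in this subcase, and a purely local triangle inequality on $Q$ cannot rule the subcase out. You would need to replace this step with an argument that, like the paper's, uses the monotonicity of the optimal Fr\'echet matching to constrain where $P(\gamma_{i-1})$ can be matched.

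A secondary point: in the subcase $\gamma_{i-1}<\bar\gamma_{i-1}$ you claim \lemref{fp} confines $P[\gamma_{i-1},\bar\gamma_{i-1}]$ to $B(q_{i-1},\sqrt{d}\eps^*)$, but \lemref{fp} only covers $P[\alpha,\bar\gamma_{i-1}]$ where $P(\alpha)$ is the first entry into $B(q_{i-1},\eps^*)$; you implicitly use $\gamma_{i-1}\geq\alpha$ without justification. (The paper's write-up has the same omission, so this is less of a demerit, but worth flagging if you intend to make the argument airtight.)
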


\begin{proof} 
	From \algref{optalgorithm} we have that $\eps_i=\delta_F(P[\gamma_{i-1},\gamma_i], Q[i-1,i])$. We prove by induction on $i$ that $\eps_i\leq\sqrt{d}\eps^*$.
	For $i=2$, $\eps_2$ is being minimized and obviously $\eps_2 \leq \eps^* < \sqrt{d}\eps^*$. 
	For any $i>2$, there are two possible cases: either $\eps_i \leq \eps^*$ or $\eps_i > \eps^*$. In the former case, trivially 
	$\eps_i< \sqrt{d}\eps^*$. In the remainder of the proof we consider the latter case that is
	$\eps_i > \eps^*$. We know from \thmref{correctness1} that all $\gamma^*_i=\LEP(P[\gamma^*_{i-1},n],Q[i-1,i],\eps^*)$ for all $i=1,2,\ldots, m$ exist. And by inductive hypothesis we know that $\max  \lbrace \eps_2,\ldots , \eps_{i-1} \rbrace \leq \sqrt{d}\eps^*$. 
	
	We also know from line 8 of \algref{optalgorithm} that $\eps_i=\delta_F(P[\gamma_{i-1},\gamma_i], Q[i-1,i])$ and $P[\gamma_{i-1},\gamma_{i}]$ is the minimum prefix with respect to $Q[i-1,i]$. For the sake of contradiction we assume $\eps_i>\sqrt{d}\eps^*>\eps^*$. 
	We now distinguish two cases:
	
	(a) If $\gamma_{i-1}<\gamma^*_{i-1}$, then by \lemref{fp} we have $\delta_F(P[\gamma_{i-1}, \gamma^*_{i-1}],q_{i-1}) \leq \sqrt{d}\eps^*$. Also $\delta_F(P[\gamma^*_{i-1},\gamma^*_i],Q[i-1,i])\leq \eps^*$, hence $\delta_F(P[\gamma_{i-1},\gamma^*_i],Q[i-1,i])\leq \sqrt{d}\eps^* < \eps_i$. This contradicts the fact that $P[\gamma_{i-1},\gamma_{i}]$ is the minimum prefix of $P[\gamma_{i-1},n]$ with respect to $Q[i-1,i]$, see Fig.~\ref{fig:optproof}(a).
	%
	
	(b) Now for the case that $\gamma^*_{i-1} < \gamma_{i-1}$, consider the matching $(\sigma,\theta)$ realizing $\delta_F(P,Q)=\eps^*$. There exists some $t\in [0,1]$ such that $\gamma_{i-1}=\sigma(t)$. We can see that $Q(\theta(t))\in Q[i-1,i]$ as follows: We know that $B(q_{i-1},\eps_{i-1})\cap B(q_i,\eps^*)=\emptyset$ since $\|q_{i-1}-q_i\|\geq 2d\eps^*>(1+\sqrt{d})\eps^*$ and $\eps_{i-1}\leq \sqrt{d}\eps^*$. This implies $\gamma_{i-1} < \gamma^*_i$ and therefore $\gamma^*_{i-1} < \gamma_{i-1} < \gamma^*_i$, and correspondingly ${i-1}\leq \theta(t)\leq i$. By inductive hypothesis we know that $\eps_{i-1}=\|q_{i-1}-P(\gamma_{i-1})\|\leq \sqrt{d}\eps^*$, thus $Q[i-1,\theta(t)] \subseteq B(P(\gamma_{i-1}),\sqrt{d}\eps^*)$ which implies $\delta_F(P(\gamma_{i-1}),Q[i-1,\theta(t)])\leq \sqrt{d}\eps^*$. 
	Combining this with $\delta_F(P[\gamma_{i-1},\gamma^*_i],Q[\theta(t),i])\leq \eps^*$ from the optimal matching yields $\delta_F(P[\gamma_{i-1},\gamma^*_i],Q[i-1,i])\leq \sqrt{d}\eps^* < \eps_i$. This contradicts that $P[\gamma_{i-1},\gamma_{i}]$ is the minimum prefix of $P[\gamma_{i-1},n]$ with respect to $Q[i-1,i]$, see Fig.~\ref{fig:optproof}(b). 
	
	In the end, if $\gamma_m<n=\gamma^*_m$, then \lemref{fp} again implies $\delta_F(P[\gamma_m,n],q_m ) \leq \sqrt{d}\eps^*$ as claimed. 
	The algorithm returns $\max\{\eps_2,\ldots,\eps_m\}$. Since there has to be some $\eps_j>\eps^*$, and we proved by induction that all $\eps_i\leq \sqrt{d}\eps^*$, the algorithm returns a value between $\eps^*$ and $\sqrt{d}\eps^*$.
\end{proof}

\begin{figure}[htbp]
	\begin{center}
		\includegraphics[width=12.5 cm]{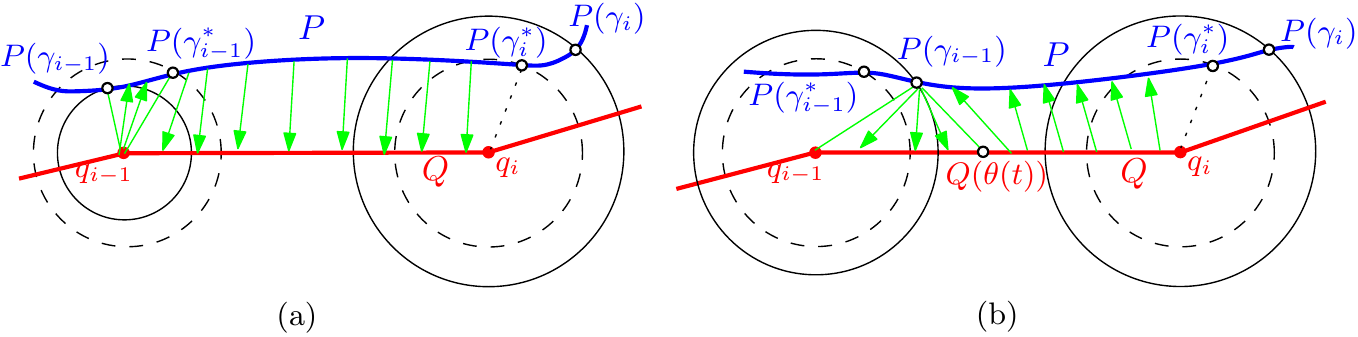}
		\caption{Illustration for the proof of~\lemref{optcorrect} when $\eps_i> \eps^*$. (a) $\gamma_{i-1}<\gamma^*_{i-1}$ (b) $\gamma_{i-1} > \gamma^*_{i-1}$.}
		\label{fig:optproof}
	\end{center}
\end{figure}

\noindent\textbf{The \MinPrefix~Procedure:}
Given a polygonal curve $P:[1,n]\RD$ and a segment $e:[1,2]\RD$, we implement \MinPrefix($P,e$), as described in \algref{minPrefix}, as follows: 
For every $i=1,\ldots,n-1$, let $c_i$ be the distance associated with a minimum prefix ending on the segment $P[i,i+1]$. Formally, $c_i=\min_{t\in[i,i+1]}\delta_F(P[1,t],e)$. 
\algref{minPrefix} computes all the $c_i$ in a dynamic programming fashion. The minimum of the $c_i$ is the desired $\eps$, and the \LEP\ computes the corresponding $\gamma$.
%

\begin{algorithm}
	\DontPrintSemicolon
	\SetKwFunction{MinimumPrefix}{\MinPrefix}
	\SetKwProg{myalg}{}{}{}
	\SetKwComment{Comment}{}{}
	\caption{Compute \MinPrefix($P[1,n],e[1,2]$)} 
	\label{alg:minPrefix}
	
	\BlankLine
	\myalg{\MinPrefix{$(P[1, n], e[1, 2])$}}{ 
		$c \leftarrow ||p_1-e_1||$\; 
		$\eps' \leftarrow \min\{l_P/2,\|e\|/2\sqrt{d}\}$\;
		$\gamma' \leftarrow \LEP(P[1,n],e[1,2],\eps')$\; \label{algline:minprefix-callLDP}
		
		\For{$i \leftarrow 1$ \KwTo $\lfloor\gamma'\rfloor$}
		{
			$c_i \leftarrow \max\{c,||e_2,P[i,i+1]||\}$\;
			$c \leftarrow \max\{c, ||p_{i+1},e[1,2]||\}$
		}
		$\eps=\min_{1\leq i\leq \lfloor\gamma'\rfloor} c_i$\;
		\KwRet{$(\eps$, \LEP$(P[1,n], e[1,2], \eps))$}
	}
\end{algorithm}

Before we can prove the correctness of~\algref{minPrefix}, we need the following technical lemma that states when $\eps$ is increased, the longest $\eps$-prefix can only get longer.
\begin{lemma}[Prefix monotonicity] 
	\label{lem:prefixmon}
	Let $P:[1,n]\RD$ and $Q:[1,m]\RD$ be two polygonal curves and $\eps > \eps'>0$. Let $\gamma_1=\gamma'_1=1$, $\gamma_{i}= \LEP(P[\gamma_{i-1},n],Q[i-1,i],\eps)$ and $\gamma_i'=\LEP(P[\gamma'_{i-1},n],Q[i-1,i],\eps')$  for all $i=2,\ldots, m$.
	Then $\gamma_i < \gamma_i'$ for all $i=2,\ldots, m$.
\end{lemma}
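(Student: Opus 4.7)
The stated inequality reads against both the informal description (``when $\eps$ is increased, the longest $\eps$-prefix can only get longer'') and an immediate sanity check at $i=2$: since $\gamma_1 = \gamma'_1 = 1$, both endpoints are computed from the same starting point with respect to the same segment $Q[1,2]$, and a larger threshold yields a feasible set that contains the one for the smaller threshold, forcing $\gamma_2 \geq \gamma'_2$. I therefore read the claim as the typo-corrected monotonicity $\gamma_i \geq \gamma'_i$ (with strictness in generic position), and will prove this by induction on $i$.

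\textbf{Base case ($i=2$).} Since both LEP calls start at $1$ and use the same edge $Q[1,2]$, any $t$ realizing $\delta_F(P[1,t],Q[1,2]) \leq \eps'$ also realizes $\delta_F(P[1,t],Q[1,2]) \leq \eps$ because $\eps' < \eps$. Taking the maximum over $t$ in \defref{maximal} yields $\gamma_2 \geq \gamma'_2$.

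\textbf{Inductive step.} Assume $\gamma_{i-1} \geq \gamma'_{i-1}$. Since \LEP\ always returns a parameter $\geq$ its starting parameter, $\gamma_i \geq \gamma_{i-1}$. If $\gamma_{i-1} > \gamma'_i$, we immediately get $\gamma_i > \gamma'_i$, so assume $\gamma'_{i-1} \leq \gamma_{i-1} \leq \gamma'_i$. Let $\mu = (\sigma,\theta)$ be the matching realizing $\delta_F(P[\gamma'_{i-1},\gamma'_i], Q[i-1,i]) \leq \eps'$; under the long-edge regime of the surrounding section, \lemref{monotonicity} guarantees that $\mu$ is orthogonal. Because $\gamma_{i-1} \in [\gamma'_{i-1}, \gamma'_i]$, there exists $\theta_0 \in [i-1,i]$ with $P(\gamma_{i-1})$ matched to $Q(\theta_0)$ and $\|P(\gamma_{i-1}) - Q(\theta_0)\| \leq \eps'$. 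From the previous iteration at level $\eps$ we also have $\|P(\gamma_{i-1}) - q_{i-1}\| \leq \eps$.

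The key observation is that $s \mapsto \|P(\gamma_{i-1}) - Q(s)\|$ is convex on $[i-1,i]$ (distance from a fixed point to a point on a segment), so its maximum on $[i-1,\theta_0]$ is attained at an endpoint and is at most $\max(\eps,\eps') = \eps$. I can therefore splice a new matching $\bar\mu$ between $P[\gamma_{i-1},\gamma'_i]$ and $Q[i-1,i]$: hold $\sigma \equiv \gamma_{i-1}$ while $\theta$ sweeps $[i-1,\theta_0]$ (width $\leq \eps$ by the convexity observation), then concatenate with the restriction of $\mu$ to $P[\gamma_{i-1},\gamma'_i] \times Q[\theta_0,i]$ (width $\leq \eps' \leq \eps$). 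Hence $\delta_F(P[\gamma_{i-1},\gamma'_i], Q[i-1,i]) \leq \eps$, and by maximality in \defref{maximal} we conclude $\gamma_i \geq \gamma'_i$.

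\textbf{Main obstacle.} The subtle case is $\gamma'_{i-1} < \gamma_{i-1} \leq \gamma'_i$: one cannot naively ``reset'' by matching $P(\gamma_{i-1})$ to $q_{i-1}$ since the intermediate portion $Q[i-1,\theta_0]$ might a priori leave the $\eps$-ball around $P(\gamma_{i-1})$. The convex-distance bound is what rescues the splicing; once that is in hand the induction goes through cleanly.
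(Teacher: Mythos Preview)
Your reading of the inequality as a typo is correct—the paper's own proof establishes $\gamma'_i < \gamma_i$, not the direction stated—and your inductive splicing argument is essentially the paper's. Two minor differences are worth flagging: the paper gets \emph{strict} inequality by extending past $\gamma'_i$ to the first exit point $x$ of $P$ from $B(q_i,\eps)$ (yielding $\gamma'_i < x \leq \gamma_i$), which your version omits; conversely, you make explicit the convexity step behind $\delta_F(P(\gamma_{i-1}),Q[i-1,\theta_0])\leq\eps$ and separately handle the case $\gamma_{i-1}>\gamma'_i$, both of which the paper glosses over. Your appeal to \lemref{monotonicity} for an orthogonal matching is unnecessary here—any Fr\'echet matching suffices for the splice.
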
 

\begin{proof}
	The proof is by the induction. For $i=2$, we know that $\delta_F(P[1,\gamma'_2],Q[1,2])\leq \eps' < \eps$. Let $x$ be a parameter such that $P(x)$ is the first intersection point between $P[\gamma'_2,n]$ and the boundary of $B(q_2,\eps)$, thus $\gamma'_2 < x$. Now observe that $\delta_F(P[\gamma'_2, x],q_2)\leq \eps$. Combining $\delta_F(P[1,\gamma'_2],Q[1,2]) < \eps$ and $\delta_F(P[\gamma'_2, x],q_2)\leq \eps$ yields $\delta_F(P[1,x],Q[1,2])\leq \eps$. Since $\gamma_2$ is the longest $\eps$-prefix with respect to $Q[1,2]$, we have $\gamma'_2 <  x \leq \gamma_2$, and  therefore $\gamma'_2 < \gamma_2$. Now for $i>2$, by the inductive hypothesis we have $\gamma'_{i-1} < \gamma_{i-1}$. It remains to show $\gamma'_{i}<\gamma_{i}$. Consider a matching $(\sigma,\theta)$ realizing $\delta_F(P[\gamma'_{i-1},\gamma'_{i}],Q[i-1,i])\leq \eps'$. Let $t$ be the value such that $\sigma(t)=\gamma_{i-1}$. Now we construct a new matching for $P[\gamma_{i-1},x]$, where $x$ is defined as in the inductive base, but with respect to $B(q_i,\eps)$. We know that $\delta_F(P(\gamma_{i-1}),Q[i-1,\theta(t)])\leq \eps$. Also we have $\delta_F(P[\gamma_{i-1},\gamma'_{i}],Q[\theta(t),i])\leq \eps' < \eps$ by $(\sigma,\theta)$. Observe that $\delta_F(P[\gamma'_{i},x],q_i)\leq \eps$. Thus, $\delta_F(P[\gamma_{i-1},x],Q[i-1,i])\leq \eps$ and using a similar argument as in the inductive base we have $\gamma'_{i} < x \leq \gamma_{i}$, therefore $\gamma'_{i}<\gamma_{i}$.
\end{proof}

Now we are ready to prove the correctness of~\algref{minPrefix}:

\begin{lemma}[Correctness] \label{lem:minprefixcrrct}
	Let $e:[1,2]\RD$ be a line segment and let $P:[1,n']\RD$ be a polygonal curve monotone with respect to the line supporting $e$. The distance returned by \MinPrefix$(P,e)$ is $\min_{1 \leq t \leq n'}{\delta_F(P[1,t],e)}$.
\end{lemma}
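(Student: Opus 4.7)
The plan is to derive a closed-form expression for $\delta_F(P[1,t], e)$ under the monotonicity hypothesis, then show that the update rule of \algref{minPrefix} tracks the minimum of this expression when $P(t)$ is restricted to a single edge $P[i,i+1]$, and finally confirm that the loop range $\{1,\ldots,\lfloor\gamma'\rfloor\}$ captures the global minimizer.

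First, I would verify that because $P$ is monotone with respect to the line $\ell$ supporting $e$, every prefix $P[1,t]$ satisfies
$$\delta_F(P[1,t], e)\;=\;\max\Bigl\{\|p_1-e_1\|,\;\max_{2\le j\le\lfloor t\rfloor}\|p_j,e\|,\;\|P(t)-e_2\|\Bigr\}.$$
The upper bound follows from the orthogonal matching of \defref{orthomatch}, which sends $p_1$ to $e_1$, each interior vertex $p_j$ to its nearest point on $e$, and $P(t)$ to $e_2$; along an interior edge $p_jp_{j+1}$ its width is controlled by the vertex distances because $x\mapsto\|x,e\|$ is convex (as the distance to the convex set $e$) and hence attains its maximum on the edge at an endpoint. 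The lower bound is immediate since any Fr\'echet matching must pair $p_1$ with $e_1$, $P(t)$ with $e_2$, and every interior vertex $p_j$ with some point of $e$.

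Second, I would fix $i$ and minimize the above expression over $t\in[i,i+1]$. Only the last term depends on $t$, and its minimum equals $\|e_2, P[i,i+1]\|$, which is exactly the quantity that enters $c_i$. A short induction on $i$ shows that just before the $i$-th assignment to $c_i$ the variable $c$ stores $\max\{\|p_1-e_1\|,\|p_2,e\|,\ldots,\|p_i,e\|\}$, so
$$c_i\;=\;\max\bigl\{c,\;\|e_2, P[i,i+1]\|\bigr\}\;=\;\min_{t\in[i,i+1]}\delta_F(P[1,t], e).$$
Taking the minimum over $i$ then yields $\min_t\delta_F(P[1,t],e)$, provided the truncated loop range covers the global minimizer. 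To see this, let $\eps^*$ denote that minimum; the threshold $\eps'=\min\{l_P/2,\|e\|/2\sqrt d\}$ is chosen so that the long-edge hypotheses of \lemref{monotonicity} and \lemref{fp} hold at level $\eps'$, and whenever $\eps^*\le\eps'$ (which is guaranteed in the invocations of \MinPrefix arising inside the surrounding optimization procedure) any minimizing $t^*$ satisfies $t^*\le\gamma'$ by definition of $\gamma'$ as the longest $\eps'$-prefix. The subsequent call $\LEP(P,e,\eps)$ extracts the largest $t$ achieving the value $\eps$, matching \defref{prefix}.

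The main obstacle will be establishing the closed-form expression of the first step: one must control $\|x,e\|$ along an entire edge of $P$, not merely at its vertices, and this is precisely where the convexity of the point-to-segment distance combines with the monotonicity of $P$ along $\ell$ to ensure that the orthogonal matching is both valid and optimal. Once that structural identity is in place, the remainder of the argument is a straightforward bookkeeping exercise on how $c$ and $c_i$ evolve across iterations.
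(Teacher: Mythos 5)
Your proposal is correct and follows essentially the same route as the paper: derive the closed-form $\delta_F(P[1,t],e)=\max\{\|p_1-e_1\|,\max_j\|p_j,e\|,\|P(t)-e_2\|\}$ via the orthogonal-matching equivalence of \lemref{monotonicity}, observe that only the last term varies over $t\in[i,i+1]$ and its minimum over that range is $\|e_2,P[i,i+1]\|$, identify this with the update rule for $c$ and $c_i$, and then justify truncating the loop at $\lfloor\gamma'\rfloor$ by noting the minimizer lies within the longest $\eps'$-prefix. Your treatment is slightly more explicit than the paper's in two spots — you spell out the convexity of $x\mapsto\|x,e\|$ as the reason interior edge points contribute nothing beyond the vertex distances, and you argue $t^*\le\gamma'$ directly from the definition of longest $\eps'$-prefix rather than citing \lemref{prefixmon} — but these are presentational variants of the same argument, not a different proof.
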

\begin{proof}
	According to the algorithm: $$c_i=\max\{\|p_1-e_1\|, \max_{1 \leq j
		\leq i-1} \|p_{j+1},e\|, \|e_2, P[i,i+1]\|\}.$$ Since $e[1,2]$ is a
	segment and $P[1,n']$ is monotone with respect to the line supporting
	$e$, it follows from \lemref{monotonicity} that for any $i\leq t\leq i+1$ there exists an orthogonal matching such that: 
	%
	
	$$\delta_F(P[1,t],e)=\max\{\|p_1-e_1\|,
	\max_{1 \leq j \leq i-1}\|p_{j+1},e\|, \|P(t)-e_2\|\}$$
	By taking the minimum on both sides, we get: 
	$$\min_{i\leq t \leq i+1}\delta_F(P[1,t],e) = \max\{\|p_1-e_1\|,\max_{1 \leq j \leq i-1}\|p_{j+1},e\|, \min_{i\leq t \leq i+1}\|P(t)-e_2\|\} = c_i.$$
	It suffices to run the for-loop until $n'=\lfloor\gamma'\rfloor$, since by the assumption we only compute the 
	minimum $\eps$-prefix $P[1,\gamma]$ if its distance is at most $\eps'$ (line 3 of \algref{minPrefix}), and by \lemref{prefixmon}
	it follows $\gamma<\gamma'$. Therefore, $\eps=\min^{\lfloor\gamma'\rfloor}_{i=1}{c_i}=\min^{ n'}_{i=1}{c_i}=\min_{1 \leq t \leq n'}{\delta_F(P[1,t],e)}.$
\end{proof}


\begin{theorem}[Runtime]
	Let $P:[1,n]\RD$ and $Q:[1,m]\RD$ be two polygonal curves. If $\delta_F(P,Q) \leq \min\{l_P/(2\sqrt{d}), l_Q/(2d)\}$, then \algref{optalgorithm} approximates $\delta_F(P,Q)$ in $O(n+m)$ time within an approximation factor of $\sqrt{d}$. 
\end{theorem}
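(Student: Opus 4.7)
The approximation-factor assertion is immediate from \lemref{optcorrect}: under the hypothesis $\eps^*:=\delta_F(P,Q)\leq \min\{l_P/(2\sqrt{d}),l_Q/(2d)\}$, \algref{optalgorithm} returns a value in $[\eps^*,\sqrt{d}\,\eps^*]$. The remaining task is to bound the running time.

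My plan is to charge each iteration of the main \textbf{for}-loop to the prefix of $P$ it scans. The initial call \textsc{DecisionAlgorithm}$(P,Q,\eps_0)$ costs $O(n+m)$ by \thmref{greedyalg}. Inspecting \algref{minPrefix}, each call to \MinPrefix$(P[\gamma_{i-1},n],Q[i-1,i])$ performs: one \LEP-call with threshold $\eps'_i=\min\{l_P/2,\|Q[i-1,i]\|/(2\sqrt{d})\}$, returning a prefix endpoint $\Gamma_i$; a scan from $1$ to $\lfloor\gamma'\rfloor$; and a final \LEP-call with threshold $\eps_i\leq\eps'_i$. Because the Alt--Godau propagation on a single-segment free space diagram is implemented output-sensitively (terminating at the end of the longest reachable prefix, exactly as used in \thmref{greedyalg}), and because \lemref{prefixmon} forces the second \LEP-call to stop no later than $\Gamma_i$, the total work per iteration is $O(\Gamma_i-\gamma_{i-1}+1)$.

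Summing across iterations yields
\[
\sum_{i=2}^{m}(\Gamma_i-\gamma_{i-1}+1) \;=\; O(m) \;+\; \sum_{i=2}^{m}(\gamma_i-\gamma_{i-1}) \;+\; \sum_{i=2}^{m}(\Gamma_i-\gamma_i) \;\leq\; O(m) + (n-1) + \sum_{i=2}^{m}(\Gamma_i-\gamma_i).
\]
The hard step---and the one I expect to carry the proof---is bounding the ``overshoot'' $\Gamma_i-\gamma_i$ by a constant depending only on $d$. The argument combines three ingredients already established: (i) \lemref{prefixmon} gives $\gamma_i\leq\Gamma_i$; (ii) \lemref{fp} places the entire subcurve $P[\gamma_i,\Gamma_i]$ inside the ball $B(q_i,\sqrt{d}\,\eps'_i)$; (iii) since $\eps'_i\leq l_P/2$, this ball has radius at most $(\sqrt{d}/2)\,l_P$, while consecutive $P$-vertices along $P[\gamma_i,\Gamma_i]$ are at Euclidean distance at least $l_P$. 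A standard packing bound in $\Reals^d$ then limits the number of $P$-vertices inside this ball to $O((\sqrt{d})^d)=O(1)$ for fixed $d$, yielding $\Gamma_i-\gamma_i=O(1)$ and hence $\sum_{i=2}^{m}(\Gamma_i-\gamma_i)=O(m)$.

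Consequently the main loop runs in $O(n+m)$ time; the post-loop evaluation of $\delta_F(P[\gamma_m,n],q_m)$ (which is a trivial max over vertex-to-$q_m$ distances) in the case $\gamma_m<n$ contributes a further $O(n-\gamma_m)$, and combined with the initial decision call this gives the claimed $O(n+m)$ overall bound.
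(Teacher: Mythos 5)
Your first paragraph (approximation factor from \lemref{optcorrect}) is fine, and your plan to charge each iteration of the main loop to a prefix of $P$ of length $\ceil{\gamma'_i-\gamma_{i-1}}+1$ is the right accounting. But the argument collapses at the step you yourself flag as ``the hard step.''

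The packing bound you invoke is not valid: only \emph{consecutive} vertices of $P$ are guaranteed to be at distance $\geq l_P$; non-consecutive ones may be arbitrarily close. A curve with all edges of length $l_P$ can zigzag and place arbitrarily many vertices inside a ball of radius $(\sqrt{d}/2)l_P$, so ``$O((\sqrt{d})^d)$ vertices per ball'' does not follow. You could try to repair this using the $(e,\eps'_i)$-monotonicity that \lemref{monotonicity} gives for $P[\gamma_{i-1},\Gamma_i]$, which forces the projections onto the line supporting $Q[i-1,i]$ to be nondecreasing; but in the extremal case $\eps'_i=l_P/2$ an edge can lie exactly perpendicular to that line, so the projection gain from an edge can be zero, and you still cannot conclude a constant bound on $\Gamma_i-\gamma_i$. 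So the per-iteration overshoot is \emph{not} $O(1)$ in general, and summing $O(m)\cdot O(1)$ is not an available move.

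The paper sidesteps this entirely. Rather than bounding each $\gamma'_i-\gamma_i$ individually, it establishes the strict interleaving $\gamma_{i-1}<\gamma'_{i-1}<\gamma_i<\gamma'_i$: the inequality $\gamma'_{i-1}<\gamma_i$ comes from the disjointness $B(q_{i-1},\eps')\cap B(q_i,\eps_i)=\emptyset$ (using $\|q_{i-1}-q_i\|>2\sqrt{d}\eps'$ and $\eps_i\leq\eps'$), and $\gamma_i\leq\gamma'_i$ comes from \lemref{prefixmon}. The interleaving yields $\sum_{i=2}^m(\gamma'_i-\gamma_i)\leq\sum_{i=2}^m(\gamma'_i-\gamma'_{i-1})$, which telescopes to $\gamma'_m-\gamma'_1\leq n-1$. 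Adding the telescoping sum $\sum(\gamma_i-\gamma_{i-1})\leq n-1$ and the $O(m)$ overhead from the ceilings gives $O(n+m)$ with no per-iteration constant bound needed. This global telescoping via interleaving is the key idea missing from your proof; the local packing bound you propose as its substitute is false.
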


\begin{proof}
	Let $\eps^*=\delta_F(P,Q)$. The algorithm only proceeds past \alglineref{dontknow} if 
	$\eps^*\leq \eps_0=\min\{l_P/2\sqrt{d}, l_Q/2d\}$ and $\textsc{DecisionAlgorithm}(P,Q,\eps_0)$ returns ``Yes''. 
	Now, let $\eps'=\sqrt{d}\eps_0$, $\gamma'_1=1$,  and for all
	$i=2,\ldots, m$ let
	$\gamma'_i=\LEP(P[\gamma'_{i-1},n],Q[i-1,i],\eps')$.
	Note that by definition of $\eps'$, both curves have long edges, i.e., 
	$l_P\geq2\sqrt{d}\eps_0 > 2\eps'$ and $l_Q\geq 2d\eps_0 = 2\sqrt{d}\eps' >(1+\sqrt{d})\eps'$. 
	%
	%
	%
	From the proof of \lemref{optcorrect} we know that $\eps_i\leq \sqrt{d}\eps^*\leq \sqrt{d}\eps_0 = \eps'$ and since $\|q_{i-1}- q_i \|> 2\sqrt{d}\eps'$, we have that $B(q_{i-1},\eps')\cap B(q_i,\eps_i)=\emptyset$. Therefore, $\gamma'_{i-1}<\gamma_{i}$. \lemref{prefixmon} implies that $\gamma_i\leq \gamma'_i$ due to $\eps_i \leq \eps'$, therefore $\gamma_{i-1}< \gamma'_{i-1} <\gamma_{i} <\gamma'_{i}$ for all $i= 2 ,\ldots,m $. 
	
	The for-loop in \algref{optalgorithm} has $m-2$ iterations. In iteration $i$, the algorithm calls $\MinPrefix(P[\gamma_{i-1},\gamma'_{i}],Q[i-1,i])$ in line 8. The for-loop in \algref{minPrefix} has $\ceil{\gamma'_{i}- \gamma_{i-1}}+1$ iterations,  where  $\ceil{\gamma'_{i}-\gamma_{i-1}}+1$ is the upper bound for the number of vertices in $P[\gamma_{i-1},\gamma'_i]$. Therefore, the runtime of \algref{optalgorithm} is: $\sum_{i=2}^{m} (\ceil{\gamma'_{i}-\gamma_{i-1}}+1) \leq \sum_{i=2}^{m} ({\gamma'_{i}-\gamma_{i-1}}+2) = \sum_{i=2}^{m} (\gamma'_{i}-\gamma_{i})+ \sum_{i=2}^{m} ({\gamma_{i}-\gamma_{i-1}})+ 2(m-1)$. Since $\gamma'_{i-1}\leq \gamma_{i}$, we have $\sum_{i=2}^{m} (\gamma'_{i}-\gamma_{i}) \leq \sum_{i=2}^{m} (\gamma'_{i}-\gamma'_{i-1})$. Thus, $\sum_{i=2}^{m} (\gamma'_{i}-\gamma_{i})+ \sum_{i=2}^{m} ({\gamma_{i}-\gamma_{i-1}})+ 2(m-1) \leq \sum_{i=2}^{m} (\gamma'_{i}-\gamma'_{i-1})+ \sum_{i=2}^{m} ({\gamma_{i}-\gamma_{i-1}})+ 2(m-1)= \gamma'_m-\gamma'_1 + \gamma_m-\gamma_1 + 2(m-1) = 2(n-1)+2(n-1)+2(m-1)= O(n+m)$.
\end{proof}

\section{Data Structure For Longest $\eps$-Prefix Queries}\label{sec:datastructure}

In this section, we consider query variants of the setting in \secref{greedyDecision} for curves in the plane. We wish to solve the following problem: Preprocess a polygonal curve $P:[1,n]\rightarrow \Reals^2$ into a data structure such that for any polygonal query curve $Q:[1,m]\rightarrow \Reals^2$ and a positive $\eps<\min\{l_P/2,l_Q/(1+\sqrt{2})\}$ one can efficiently decide whether $\delta_F(P,Q)\leq \eps$. Note that throughout this section we assume, as before, that $P$ and $Q$ have long edges, i.e., $l_P>2\eps$ and $l_Q>(1+\sqrt{2})\eps$.
Our query algorithm is identical to \algref{directedDecider}. However, the key idea for speeding up the query algorithm is to efficiently compute $\LEP(P[1,n],Q[1,2],\eps)$ for a given query segment $Q[1,2]$ in sublinear time. Our algorithm to compute the longest $\eps$-prefix with respect to $Q[1,2]$ is shown in \algref{queryalg}. 
According to \lemref{monotonicity} if $\delta(P[1,\gamma],Q[1,2])\leq \eps$, then $P[1,\gamma]$ is $(Q[1,2],\eps)$-monotone. This is equivalent to computing the largest parameter $1<t\leq n$ such that the following conditions hold: (1) $p_1\in B(q_1,\eps)$ and $P(t)\in B(q_2,\eps)$, (2)  $P[1,t]\subseteq C(Q[1,2],\eps)$, and (3) $P[1,t]$ is monotone with respect to line supporting $Q[1,2]$. Note that the smallest value $t$ that violates either of the conditions above is a potential $\gamma$.

\begin{algorithm}[htbp]
	\DontPrintSemicolon
	\SetKwFunction{LongetEpsilonPrefix}{\LEP}
	\SetKwFunction{LongestMonotoneSubcurve}{\LMS}
	\SetKwFunction{FirstIntersection}{\FIP}
	\SetKwFunction{LastIntersection}{\LIP}
	\SetKwFunction{CylinderIntersection}{\CI}
	
	\SetKwProg{myalg}{}{}{}
	\SetKwComment{Comment}{}{}
	\caption{Compute \LEP($P[1,n],Q[1,2],\eps$)}
	\label{alg:queryalg}
	\BlankLine
	\myalg{\LEP{$(P[1, n], Q[1, 2],\eps)$}}{ 
		
		\lIf{$p_1\notin B(q_1,\eps)$}{\KwRet{`null'}}
		$\lambda\leftarrow$ \textsc{LongestMonotonePrefix}($P[1,n],Q[1,2]$)\;
		
		$\alpha\leftarrow$ \textsc{FirstIntersection}($P[1,\lambda],B(q_2,\eps))$\;
		
		\lIf{$\alpha=null$}{\KwRet{`null'}}
		
		$\beta\leftarrow$ \textsc{LastIntersection}($P[1,\lambda],B(q_2,\eps))$\;
		
		$r\leftarrow$ \textsc{CylinderIntersection($P[1,\lambda],C(Q[1,2],\eps)$)}\;
		
		\lIf{$r=null$}{\KwRet{$\min(\lambda,\beta)$}}
		
		\lIf {$r<\alpha$ {\bf or} $\lambda < \alpha$}{\KwRet{`null'}}
		
		\lIf {$\alpha< r < \beta$ {\bf or} $\alpha< \lambda < \beta$} {\KwRet $\min(r,\lambda)$}
		
		\lIf {$r> \beta$ {\bf and} $\lambda > \beta$}{\KwRet $\beta$}
	}
\end{algorithm}

%






Here, \textsc{LongestMonotonePrefix} returns $\lambda$, where $P(\lambda)$ is the endpoint of the longest subcurve of $P[1,n]$ that starts in $p_1$ and is monotone with respect to the line supporting $Q[1,2]$. \textsc{FirstIntersection} returns $\alpha$, where $P(\alpha)$ is the first intersection point between $P[1,\lambda]$ and $B(q_2,\eps)$. Similarly, \textsc{LastIntersection} returns $\beta$, where $P(\beta)$ is the last intersection point. \textsc{CylinderIntersection} finds $r$ where $P(r)$ is the first point along $P$ that intersects the boundary of $C(Q[1,2],\eps)$.
%


\noindent{\bf Computing} \textsc{LongestMonotonePrefix}{\bf:} We store
all the edges of $P$ in the leaves of a binary tree $T$ ordered
with respect to their indices. We call the subset of edges stored in the leaves of the subtree rooted at a node $v$ the \emph{canonical subset} of $v$. A set of nodes $v_1,\cdots,v_k$ in the subtree of $v$ is called a set of \emph{canonical nodes} of $v$ if their leaves sets are disjoint and the union of their leaves sets is the leaves of the subtree of $v$.
For each edge in $P$ we consider its direction vector. Each internal node $v$ stores the pair of the minimum/maximum angles between the direction vector and $x$-axis among all associated direction vectors stored in its canonical subset. Once given a query angle $\Phi$ and a starting point $p_1$, we retrieve
$O(\log n)$ many leftmost (starting with $p_1$) canonical nodes of $T$
whose leaves spans all edges in $P$ that satisfy
the monotonicity condition, i.e., condition (3) as mentioned earlier, with respect to $\Phi$. This can simply be done by recursively
searching children of a node $v$ violating the monotonicity condition
with respect to $\Phi$. Once satisfying the condition, we already have $O(\log
n)$ internal nodes to report their leaves as $P[1,\lambda]$. Searching children and reporting nodes take $O(\log n)$ time
altogether using $O(n)$ space and $O(n\log n)$ preprocessing time.\\

\noindent\textbf{Computing} \textsc{FirstIntersection} {\bf and} \textsc{LastIntersection}{\bf :} Let $\H$ be the hyperplane intersecting $Q[1,2]$ that is perpendicular to $Q[1,2]$ and is tangent to $B(q_2,\eps)$. Let $\H'$ be the other hyperplane perpendicular to $Q[1,2]$ and tangent to $B(q_2,\eps)$. Since $P[1,\lambda]$ is monotone with respect to the line supporting $Q[1,2]$, we know that $\lambda$ must lie on the $q_2$-side of $\H$. And $P(\alpha)\in P[1,\lambda]$ must be located on the first edge intersecting $\H_2$. We start from $p_1$ and perform an exponential search on the edges of $P[1,\lambda]$ to find the first edge that intersects $\H$. Once the edge is found, we can find $P(\alpha)$ in constant time since each edge of $P$ is longer than $2\eps$ which is the diameter of $B(q_2,\eps)$. 
Using the same method we can find $P(\beta )\in P[1,\lambda]$, if we consider $\H'$ instead of $\H$. If $\lambda$ is on the $q_2$-side of $\H'$, we perform the exponential search on $P[1,\lambda]$ to find $P(\beta)$. If $\lambda$ is on the $q_1$-side of $\H'$ then there is no $P(\beta )\in P[1,\lambda]$ and the algorithm does not require it. The whole process takes $O(\log n)$ time.\\

\noindent\textbf{Computing} \textsc{CylinderIntersection}{\bf:} 
Similar to Gudmundsson and Smid~\cite{gs-fqgt-13}, we construct a balanced binary search tree storing the points $p_1, p_2, \ldots , p_n$ in its leaves (sorted by their indices). At each node  of this tree, we store the convex hull of all points stored in its subtree. Given a query range $P[1,\lambda]$, we can retrieve $O(\log n)$ many canonical nodes of the tree containing convex hulls whose leaves span the whole range. For each convex hull we only need to compute extreme points with respect to the direction vector of the edge $Q[1,2]$. If all extreme points lie inside $C(Q[1,2],\eps)$, then $r=null$, otherwise we consider the first extreme point $P(x)$ of some convex hull which lies outside $C(Q[1,2],\eps)$. Note that $P[1,x]$ crosses one of the two boundaries of $C(Q[1,2],\eps)$. Performing exponential search on $P[1,x]$ will find the first point that lies outside the respective boundary of $C(Q[1,2],\eps)$ for which $P(x)$ is obtained. This structure needs $O(n\log n)$ space and preproccessing time and answers queries in $O(\log^2 n)$ time.
Plugging \algref{queryalg} into the decision algorithm (\algref{directedDecider}), 
we obtain the following theorem:
\begin{theorem}[General Curves] \label{thm:DS1}
	Let $P:[1,n]\rightarrow \Reals^2$ be a polygonal curve. A data structure of $O(n \log n)$ size can be built in $O(n \log n)$ time such that for any query curve $Q:[1,m]\rightarrow \Reals^2$ and a positive constant $\eps< \min(l_P/2,l_Q/(1+\sqrt{2}))$, it can be decided in $O(m\log^2 n)$ time whether $\delta_F(P,Q)\leq \eps$.
\end{theorem}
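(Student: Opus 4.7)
The plan is to run the greedy decision algorithm (\algref{directedDecider}) unchanged on the query curve $Q$, but to accelerate each call to $\LEP(P[\gamma_{i-1},n],Q[i-1,i],\eps)$ from linear time to $O(\log^2 n)$ by means of a preprocessed data structure on $P$. Correctness then follows immediately from \thmref{correctness1}, because the hypothesis $\eps<\min\{l_P/2,l_Q/(1+\sqrt{2})\}$ guarantees $l_P>2\eps$ and $l_Q>(1+\sqrt{2})\eps$. So the entire argument reduces to (i) showing that each query can indeed be answered in polylog time, and (ii) bounding preprocessing cost and total query cost.

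For (i), I would implement \algref{queryalg} using the three auxiliary structures described above. The balanced binary tree over the edges of $P$, augmented at each internal node with the min/max angles between the associated direction vectors and the $x$-axis, supports \textsc{LongestMonotonePrefix}: given the angle of $Q[i-1,i]$ and the starting index $\gamma_{i-1}$, a canonical-decomposition descent finds the longest prefix of $P[\gamma_{i-1},n]$ monotone with respect to the query direction in $O(\log n)$ time. The exponential searches used for \textsc{FirstIntersection} and \textsc{LastIntersection} are carried out on $P[\gamma_{i-1},\lambda]$ against the hyperplanes tangent to $B(q_i,\eps)$ and perpendicular to $Q[i-1,i]$; because $l_P>2\eps$ is the diameter of $B(q_i,\eps)$, the edge that first crosses each hyperplane is unique and intersection is extracted in $O(1)$, for a total of $O(\log n)$. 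Finally, \textsc{CylinderIntersection} uses a second balanced tree over the vertices of $P$, with convex hulls stored at each internal node for its canonical subset; querying extreme points with respect to the direction of $Q[i-1,i]$ on $O(\log n)$ canonical hulls identifies the first hull protruding outside $C(Q[i-1,i],\eps)$, and an exponential search within the corresponding edge range localizes the exit point, giving $O(\log^2 n)$.

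For (ii), the angle-augmented edge tree is constructed bottom-up in $O(n\log n)$ time and $O(n)$ space, and the convex-hull tree is built by standard merge of adjacent hulls in $O(n\log n)$ time and space, matching the claimed bounds. Running \algref{directedDecider} issues $m-1$ \LEP\ queries, each of cost $O(\log^2 n)$ dominated by \textsc{CylinderIntersection}, so the total query time is $O(m\log^2 n)$.

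The main obstacle I expect is verifying that the final case analysis in \algref{queryalg} (comparing $r$, $\alpha$, $\beta$, $\lambda$) truly yields the longest $\eps$-prefix, rather than merely a valid one. In particular, one must argue that the three separately computed quantities---monotonicity limit $\lambda$, ball-membership interval $[\alpha,\beta]$ for $B(q_i,\eps)$, and cylinder-exit parameter $r$---interact as in the three conditions of $(e,\eps)$-monotonicity, so that their combined binding constraint gives $\gamma_i$ exactly. This follows from \lemref{monotonicity} together with the fact that, under $l_P>2\eps$, once $P$ re-enters $C(Q[i-1,i],\eps)$ after leaving $B(q_i,\eps)$ it would violate either the cylinder constraint (captured by $r$) or monotonicity (captured by $\lambda$), so the truncation rules in the last four lines of \algref{queryalg} are exhaustive and sharp.
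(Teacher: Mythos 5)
Your proposal matches the paper's argument almost verbatim: you reuse Algorithm~\ref{alg:directedDecider} with each \LEP\ call implemented via Algorithm~\ref{alg:queryalg}, using the same three auxiliary structures (angle-augmented edge tree, exponential search against tangent hyperplanes, and a convex-hull tree for cylinder exit) with the same $O(n\log n)$ preprocessing and $O(\log^2 n)$ per-segment bounds, invoking Theorem~\ref{thm:correctness1} for correctness. Your closing discussion of why the $r,\alpha,\beta,\lambda$ case analysis recovers exactly $\gamma_i$ via Lemma~\ref{lem:monotonicity} is a reasonable elaboration of a point the paper leaves implicit.
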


\begin{proof}
	The correctness of the query algorithm follows from
	\thmref{correctness1}. As we mentioned, the
	space and preprocessing time of the whole data structure is 
	$O(n\log n)$. 
	Using  \algref{queryalg}, the longest $\eps$-prefix can be computed in $O(\log^2 n)$ time
	per segment, and hence the query algorithm runs in $O(m\log^2 n)$ time.
	%
\end{proof}

When $P$ is an $x$-monotone curve, we can handle queries
in a slightly faster query time and also smaller
space. 
In this case, we assume that $\eps$ is given at the preprocessing stage.
The $x$-monotonicity of $P$ allows us to use a different data structure for supporting the \textsc{CylinderIntersection} procedure, since the query time and space of this structure dominates the cost of our entire data structure.

\begin{figure}[htbp]
	\centering
	\includegraphics[width=6cm]{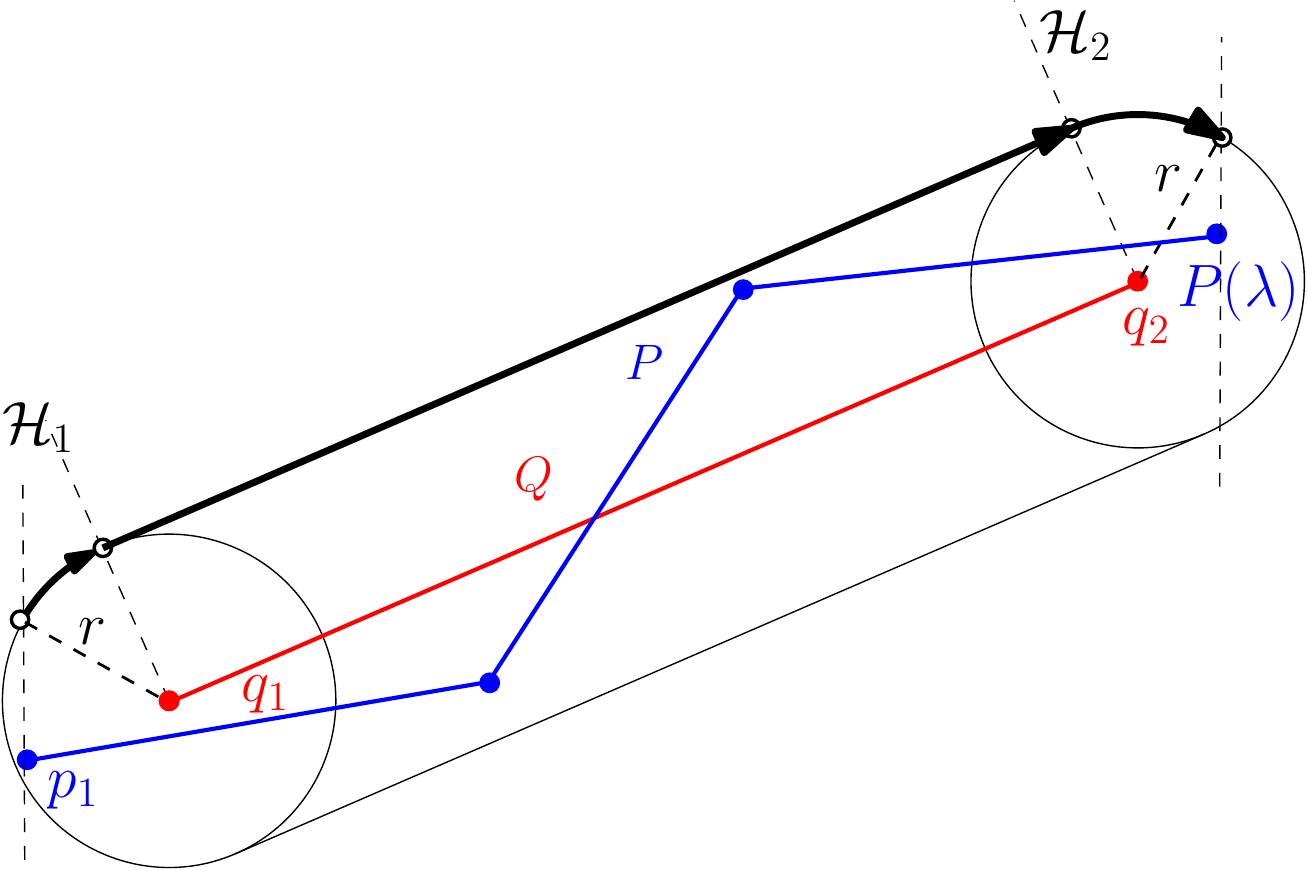}
	\caption{Illustrating how combining the straight and circular line ray shooting queries can find the first point along $P[1,\lambda]$ that leaves $C(Q[1,2],\eps)$.}
	\label{fig:RayShooting}
\end{figure}

We implement the \textsc{CylinderIntersection} procedure by performing two types of ray shooting queries,
straight and circular, along the boundary of $C(Q[1,2],\eps)$. It is easy
to see that it suffices to perform at most two straight ray shooting
queries and four circular ray shooting queries since $P$ is
$x$-monotone.
See Fig.~\ref{fig:RayShooting} for an illustration of the queries for the top part of the boundary of $C(Q[1,2],\eps)$.
%
%

For straight ray shooting queries we use the data structure by
Hershberger and Suri~\cite{hs-parss-95}. Given a simple polygon, their
structure returns the first point on the boundary of the polygon that is
hit by a query ray $\rho$. It can be built in $O(n \log n)$
time using $O(n)$ space and answer queries in $O(\log n)$
time. However, to be able to use this structure we need to reduce our
problem to ray shooting in a simple polygon. Let $P_H$ be the (unbounded) polygon
bounded from below by $P$, 
from the left by a vertical ray from $p_1$ to $\infty$,
and from the right by a vertical ray from $p_n$ to $\infty$. 
Similarly let $P_L$ be the (unbounded) polygon bounded from above by $P$,
from the left by a vertical ray from $p_1$ to $-\infty$,
and from the right by a vertical ray from $p_n$ to $-\infty$. 
We build one data structure
for $P_L$ and one for $P_H$.
For circular ray shooting queries we use the data structure by Cheong
et al.~\cite{cceo-hdcrs-02}. Consider a simple polygon $\P$ with size
$n$ in the plane and let  $r>0$.
For any circular query ray $\rho$ with center $o$, radius $r$, and start point $s'$,
one can report in
$O(\log n)$ query time the first point on the boundary of $\P$ which
is hit by $\rho$.  Combining these structures gives us the first point
along $P[1,\lambda]$ that leaves the cylinder, which completes the implementation of
\textsc{CylinderIntersection}.
We have the following theorem:
\begin{theorem}[$x$-Monotone Preprocessed Curve] \label{cor:DS2}
	Let $\eps>0$ and let $P:[1,n]\rightarrow \Reals^2$ be an $x$-monotone polygonal curve in
	$\Reals^2$ such that $l_P>2\eps$. A linear size data structure can be built
	in $O(n \log n)$ time such that for any polygonal query curve $Q:[1,m]\rightarrow \Reals^2$ with $l_Q>(1+\sqrt{2})\eps$, one can decide in $O(m\log n)$ time whether $\delta_F(P,Q)\leq
	\eps$.
\end{theorem}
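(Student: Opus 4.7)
The plan is to re-use the query algorithm already developed for \thmref{DS1}, namely the combination of \algref{directedDecider} with \algref{queryalg}, and to replace only the component that was responsible for the superlinear space, namely \textsc{CylinderIntersection}. Correctness of deciding $\delta_F(P,Q)\leq\eps$ follows from \thmref{correctness1} applied to each call of \LEP, together with \lemref{monotonicity}, provided the hypotheses $l_P>2\eps$ and $l_Q>(1+\sqrt{2})\eps$ hold. Since $\eps$ is known at preprocessing time, I can build structures that depend on $\eps$.

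First I would keep the existing implementations of \textsc{LongestMonotonePrefix}, \textsc{FirstIntersection}, and \textsc{LastIntersection} unchanged: the first uses a balanced tree augmented with min/max direction angles, yielding $O(n)$ space, $O(n\log n)$ preprocessing, and $O(\log n)$ query time, while the latter two reduce to locating an edge that crosses a given hyperplane perpendicular to $Q[1,2]$ and then doing constant work inside that edge, which can be supported in $O(\log n)$ time by an exponential search along indices on $P[1,\lambda]$. None of these components is a bottleneck in space or time.

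The critical step is to implement \textsc{CylinderIntersection} in $O(\log n)$ time using only $O(n)$ space. Under $x$-monotonicity, the boundary $\partial C(Q[1,2],\eps)$ is a simple closed curve composed of two straight segments parallel to $Q[1,2]$ (the top and bottom of the cylinder) and two circular arcs of radius $\eps$ at $q_1$ and $q_2$. I would split $P$ into the two unbounded simple polygons $P_L$ and $P_H$ (closed off by vertical rays at $p_1$ and $p_n$) and preprocess each with the straight ray shooting structure of Hershberger--Suri~\cite{hs-parss-95} and the circular ray shooting structure of Cheong et al.~\cite{cceo-hdcrs-02}, each of which takes $O(n)$ space and $O(n\log n)$ preprocessing and answers queries in $O(\log n)$ time. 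To find the first exit point of $P[1,\lambda]$ from $C(Q[1,2],\eps)$ one fires a ray along each of the four boundary arcs; by $x$-monotonicity of $P$ each of the top and bottom straight portions can be crossed at most once on each side, so at most two straight and four circular ray-shooting queries suffice. Taking the minimum of the resulting parameters yields $r$ (or $null$ if $P[1,\lambda]\subseteq C(Q[1,2],\eps)$).

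The main obstacle is the case analysis that confirms six ray-shooting queries really suffice and that the queries are correctly routed to $P_L$ or $P_H$ depending on which side of $Q[1,2]$ the relevant arc lies and whether $P$ exits or enters the cylinder at each candidate crossing; one must also verify that the exit parameter delivered by the ray-shooting structures is comparable with the parameters $\alpha$, $\beta$, and $\lambda$ so that \algref{queryalg} can combine them into a correct answer. Once this is settled, the accounting is immediate: every call of \LEP\ costs $O(\log n)$, there are $m-1$ such calls across the edges of $Q$, so the decision procedure runs in $O(m\log n)$ time, the space is $O(n)$, and preprocessing is $O(n\log n)$, establishing \thmref{DS2}.
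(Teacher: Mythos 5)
Your proposal follows the same route the paper takes: keep the \textsc{LongestMonotonePrefix}/\textsc{FirstIntersection}/\textsc{LastIntersection} components from \thmref{DS1} unchanged, and replace only \textsc{CylinderIntersection} by $O(1)$ straight and circular ray-shooting queries (Hershberger--Suri and Cheong et al.) on the two polygons $P_L$ and $P_H$, exploiting $x$-monotonicity and the fact that $\eps$ is fixed at preprocessing time. You even flag the same case analysis (routing rays to $P_L$ vs.\ $P_H$, bounding the number of boundary crossings, and reconciling ray-order with curve-parameter order) that the paper itself leaves at the ``easy to see'' level, so the treatment is at essentially the same level of detail.
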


\section{Discussion and  Future Work} \label{sec:conclusions}	
In this paper we provided a linear time decision algorithm, an
$O((n+m)\log (n+m))$ time optimization algorithm, a
linear time $\sqrt{d}$-approximation algorithm and a data structure with $O(m\log^2 n)$ query time for the \Frd\ 
between curves that have long edges. Our algorithms are
simple greedy algorithms that run in any constant dimension.
In~\subsecref{assumption}
we gave a critical example that justifies our assumptions on
the edge lengths.  

We proposed several greedy algorithms. Our assumption on the edge lengths allowed us to obtain a linear time constant-factor approximation algorithm for the (continuous) Fr\'echet distance. On the other hand, Bringmann and Mulzer \cite{bw-adfd-15} presented a greedy linear time exponential approximation algorithm for general curves under the discrete Fr\'echet distance. An 
interesting future research direction would be to develop  a trade-off
between the lengths of edges and the runtime, and in general prove hardness
in terms of the edge lengths.

\subparagraph*{Acknowledgements.}
%
We thank the anonymous reviewers for helping to improve the presentation of this paper. We particularly thank one anonymous reviewer for insightful comments that helped us to improve the algorithm in \secref{greedyDecision}.

\clearpage
	
\end{document}